\providecommand{\algorithmname}{Algorithm}
\numberwithin{figure}{section}
\numberwithin{equation}{section}
\newcommand{\lyxaddress}[1]{
	\par {\raggedright #1
	\vspace{1.4em}
	\noindent\par}
}
\theoremstyle{plain}
\newtheorem{thm}{\protect\theoremname}
\theoremstyle{plain}
\newtheorem{lem}[thm]{\protect\lemmaname}
\theoremstyle{plain}
\newtheorem{prop}[thm]{\protect\propositionname}
\theoremstyle{remark}
\newtheorem*{rem*}{\protect\remarkname}
\theoremstyle{definition}
\newtheorem{defn}[thm]{\protect\definitionname}
\theoremstyle{plain}
\newtheorem*{assumption*}{\protect\assumptionname}
\theoremstyle{plain}
\newtheorem{assumption}[thm]{\protect\assumptionname}
\tikzstyle{block} = [rectangle, draw, fill=blue!20, text width=5em, text centered, rounded corners, minimum height=2em]
\tikzstyle{line} = [draw, -latex']
\tikzstyle{plainline} = [draw]
\tikzset{
	block/.style={rectangle, draw, fill=blue!20, text width=10em, font=\footnotesize,  text centered, rounded corners, minimum height=3em},
	lblock/.style={rectangle, draw, fill=blue!20, text width=15em, ,font=\footnotesize, text centered, rounded corners, minimum height=.8em},
	doublearrow/.style={draw, latex'-latex'}  
}
\tikzset{multiline node/.style={ 
		align=center,
		text width=5cm 
	},
	line/.style={
		draw, -Latex  
	},
}
\providecommand{\assumptionname}{Assumption}
\providecommand{\definitionname}{Definition}
\providecommand{\lemmaname}{Lemma}
\providecommand{\propositionname}{Proposition}
\providecommand{\remarkname}{Remark}
\providecommand{\theoremname}{Theorem}
\begin{document}
\title{Coupled Stochastic-Statistical Equations for Filtering Multiscale
Turbulent Systems }
\author{Di Qi\textsuperscript{a} and Jian-Guo Liu\textsuperscript{b} }
\maketitle

\lyxaddress{\textsuperscript{a}Department of Mathematics, Purdue University,
150 North University Street, West Lafayette, IN 47907, USA}

\lyxaddress{\textsuperscript{b}Department of Mathematics and Department of Physics,
Duke University, Durham, NC 27708, USA}
\begin{abstract}
We present a new strategy for filtering high-dimensional
multiscale systems characterized by high-order non-Gaussian statistics
using observations from leading-order moments. A closed stochastic-statistical
modeling framework suitable for systematic theoretical analysis and
efficient numerical simulations is designed. Optimal filtering
solutions are derived based on the explicit coupling structures of
stochastic and statistical equations subject to linear operators,
which satisfy an infinite-dimensional Kalman-Bucy filter with conditional Gaussian dynamics.
To facilitate practical implementation, we develop a finite-dimensional stochastic filter model that approximates the optimal filter solution. 
We prove that this approximating filter effectively captures key
non-Gaussian features, demonstrating consistent statistics with the optimal filter first in
its analysis step update, then at the long-time limit guaranteeing stable convergence
to the optimal filter. Finally, we build a practical ensemble filter
algorithm based on the approximating filtering model, which enables accurate recovery of the true model statistics. 
The proposed modeling and filtering strategies are applicable to a wide range challenging problems in science and engineering, particularly for statistical
prediction and uncertainty quantification of multiscale turbulent
states.
\end{abstract}

\section{Introduction\protect\label{sec:Intro}}

Complex turbulent phenomena are widely observed in various science
and engineering systems \cite{salmon1998lectures,diamond2010modern,nazarenko2016wave}.
Such systems are characterized by a wide spectrum of nonlinearly coupled
spatiotemporal scales and high degrees of inherent internal instability
\cite{majda2016introduction,frisch1995turbulence}. A probabilistic
formulation containing highly non-Gaussian statistics is required
to quantify uncertainties in the high-dimensional turbulent state
\cite{pardoux2007stochastic,majda2018strategies}. Traditional ensemble
approaches using a particle system representation to approximate the
probability evolution quickly become computationally prohibitive since
a sufficiently large sample size is necessary to capture the extreme
non-Gaussian outliers even for relatively low dimensional systems
\cite{leutbecher2008ensemble,tong2021extreme}. As a result, rigorous
analysis often becomes intractable and direct numerical simulations
are likely to be expensive and inaccurate.

Filtering strategies \cite{reich2015probabilistic,law2015data,sanz2023inverse}
have long been used for finding the optimal probability estimates
of a stochastic state with large uncertainty based on partial and
noisy observation data. Applications to improve the probability forecast
can be found in various practical dynamical systems \cite{ghil1981applications,kalnay2003atmospheric,balakrishna2022determining,biswas2024unified}.
In predicting nonlinear turbulent signals, ensemble Kalman filters
\cite{evensen1994sequential,houtekamer1998data,evensen2009ensemble}
as well as the related particle methods \cite{del1997nonlinear,doucet2001sequential,calvello2022ensemble}
provide an effective tool for state and parameter estimations. Filtering
theories \cite{yau1998finite,crisan2010approximate,pathiraja2021mckean}
and corresponding numerical solutions \cite{kurtz2001numerical,bergemann2012ensemble,crisan2014numerical}
for general nonlinear systems have been investigated through different
approaches. Despite wide applications \cite{ernst2015analysis,majda2018performance,de2020analysis},
difficulties persist for accurate statistical forecast of turbulent
states especially when non-Gaussian features are present in the target
probability distribution. Conventional ensemble-based approaches often
suffer inherent difficulties in estimating the crucial higher-order
moment statistics and maintaining stable prediction with finite number
of particles \cite{snyder2008obstacles,gottwald2013mechanism,surace2019avoid}.
On the other hand, in many situations observations of the statistical
states, such as the mean and covariance, are easier to extract from
various sources of data \cite{majda2019linear}. Therefore, a promising
research direction is to propose new ensemble filtering models to
recover the highly non-Gaussian probability distributions using statistical
observations from the leading-order mean and covariance \cite{bach2023filtering}.

\subsection{General problem setup}

We start with a general mathematical formulation \cite{majda2018strategies}
modeling a high-dimensional stochastic state variable $u_{t}\in\mathbb{R}^{d}$
involving nonlinear multiscale interactions satisfying the following
stochastic differential equation (SDE)
\begin{equation}
\frac{\mathrm{d}u_{t}}{\mathrm{d}t}=\Lambda u_{t}+B\left(u_{t},u_{t}\right)+F_{t}+\sigma_{t}\dot{W}_{t}.\label{eq:abs_formu}
\end{equation}
On the right hand side of the above equation, the linear operator,
$\Lambda:\mathbb{R}^{d}\rightarrow\mathbb{R}^{d}$, represents linear
dispersion and dissipation effects. The nonlinear effect in the dynamical
system is introduced via a bilinear quadratic operator, $B:\mathbb{R}^{d}\times\mathbb{R}^{d}\rightarrow\mathbb{R}^{d}$.
The system is subject to external forcing effects that are decomposed
into a deterministic component, $F_{t}$, and a stochastic component
represented by a Gaussian white noise $W_{t}\in\mathbb{R}^{s}$ with
$\sigma_{t}\in\mathbb{R}^{d\times s}$. The model emphasizes the important
role of quadratic interactions through $B\left(u,u\right)$. This
typical structure is inherited from a finite-dimensional truncation
of the continuous system, for example, a spectral projection of the
nonlinear advection in fluid model \cite{lesieur1987turbulence}.
Many realistic systems with wide applications \cite{hu2021initial,qi2023random,gao2024mean}
can be categorized in the general dynamical equation \eqref{eq:abs_formu}. 

The evolution of the model state $u_{t}$ depends on the sensitivity
to the randomness in initial conditions and external stochastic effects,
which will be further amplified in time by the inherent internal instability
due to the nonlinear coupling term \cite{majda2016introduction,qi2018rigorous}.
Assuming that the stochastic solution satisfies a continuous probability
density function (PDF), the time evolution of the PDF $p_{t}$ is
given by the associated Fokker-Planck equation (FPE) 
\begin{equation}
\frac{\partial p_{t}}{\partial t}=\mathcal{L}_{\mathrm{FP}}p_{t}\coloneqq-\nabla_{u}\cdot\left[\left(\Lambda u+B\left(u,u\right)+F_{t}\right)p_{t}\right]+\frac{1}{2}\nabla_{u}\cdot\left[\nabla_{u}\cdot\left(\sigma_{t}\sigma_{t}^{\intercal}p_{t}\right)\right],\;p_{t=0}=p_{0},\label{eq:FPE}
\end{equation}
where $\mathcal{L}_{\mathrm{FP}}$ represents the Fokker-Planck operator
with $\nabla\cdot\left(\nabla\cdot A\right)=\sum_{k,l}\frac{\partial^{2}A_{kl}}{\partial u_{k}\partial u_{l}}$.
However, it remains a challenging task for directly solving the FPE
\eqref{eq:FPE} as a high-dimensional PDE. As an alternative approach,
ensemble forecast by tracking the Monte-Carlo solutions estimates
the essential statistics through empirical averages among a group
of samples drawn i.i.d. from the initial distribution $u^{\left(i\right)}\left(0\right)\sim p_{0}$
at the starting time $t=0$. The PDF solution $p_{t}\left(u\right)$
and the associated statistical expectation of any test function $\varphi\left(u\right)$
at each time instant $t>0$ are then approximated by the empirical
ensemble representation
\begin{equation}
p_{t}\left(u\right)\simeq p_{t}^{N}\left(u\right)\coloneqq\frac{1}{N}\sum_{i=1}^{N}\delta\left(u-u_{t}^{\left(i\right)}\right),\quad\mathbb{E}_{p_{t}}\varphi\left(u\right)\simeq\mathbb{E}_{p_{t}^{N}}\varphi\left(u\right)=\frac{1}{N}\sum_{i=1}^{N}\varphi\left(u_{t}^{\left(i\right)}\right),\label{eq:pdf_MC}
\end{equation}
In practice, large errors will be introduced in the above empirical
estimations since only a finite sample approximation is available
in modeling the probability distribution and statistics in a high
dimensional system.

It is expected that the model errors in the predicted PDF of the stochastic
states from the finite ensemble simulation can be effectively corrected
through the available observation data. In designing new filtering
strategies, we propose to use statistical observations from mean and
covariance to improve the accuracy and stability in the finite ensemble
forecast using the finite ensemble approximation. However, the general
formulation \eqref{eq:abs_formu} as well as the associated FPE \eqref{eq:FPE}
becomes inconvenient to use since all the multiscale stochastic processes
in the equation are mixed together containing all high-order statistics.
The main goal of this paper is thus to develop a systematic modeling
framework with strategies to accurately capture the (potentially highly
non-Gaussian) PDF $p_{t}$ with the help of statistical measurements
in the leading moments.

\subsection{Overview of the paper}

In this paper, we study nonlinear filtering  of the general multiscale
turbulent system \eqref{eq:abs_formu} according to the two-step filtering
procedure. In the first forecast step, a new modeling framework is
designed combining statistical equations \eqref{eq:dyn_stat} for
the leading moments and stochastic equations \eqref{eq:dyn_stoc}
for capturing higher-order statistics. In the second analysis step,
we propose the observation process from the statistical equations
of the leading-order mean and covariance to improve the probability
prediction of the stochastic model. The statistical equations for
observation processes \eqref{eq:fpf_obs} then become linearly dependent
on the PDF of stochastic state, while the evolution of the PDF is
also subject to conditional linear dynamics \eqref{eq:fpf_model}.
Therefore, optimal filtering equations can be explicitly derived based
on the linearity in both signal and observation processes. Finally,
the expensive optimal filter equations are approximated by an efficient
filtering model demonstrating equivalent statistics according to the
most important observation functions \eqref{eq:operators_obs}. 

The new multiscale nonlinear filtering model is constructed under
the following step-by-step procedure, which will finally be combined
to build a practical ensemble filtering strategy able to recover non-Gaussian
PDFs: 
\begin{itemize}
\item First, we propose a precise coupled stochastic-statistical formulation
\eqref{eq:closure_model} as the forecast model for multiscale turbulence:
the stochastic dynamics will serve as the signal process in filtering
including high-order non-Gaussian features, while the reinforced statistical
equations provide the observation process; 
\item Second, the optimal filtering problem is formulated based on the particular
coupling structure in the multiscale stochastic-statistical model:
the optimal filter solution \eqref{eq:KB-fpf} is given based on the
PDF of the stochastic state, combined with the mean and covariance
as a natural choice of the observed state;
\item Third, an effective statistical filtering model \eqref{eq:mf-fpf}
is developed approximating the optimal filter solution in key leading-order
statistics: an equivalent McKean-Vlasov SDE containing higher moments
feedbacks is adapted from the optimal filter solution;
\item Last, discrete ensemble filtering schemes \eqref{eq:model_full} are
constructed as a particle approximation of the statistical filtering
model: the filter SDE is approximated by interacting particles and
the statistical moments are computed by empirical ensemble averages.
\end{itemize}
The coupled stochastic-statistical model \eqref{eq:closure_model}
by itself can serve as an effective tool for statistical forecasts
and uncertainty quantification. Efficient computational algorithms
have been developed \cite{qi2023high,qi2023random} that demonstrate
high skill in capturing crucial non-Gaussian phenomena such as extreme
events and fat-tailed PDFs. Then combined with the observation data,
the resulting filtering McKean-Vlasov SDE \eqref{eq:mf-fpf} is only
implicitly dependent on the probability distribution, which can be
computed directly from the statistical equations. This enables efficient
computational strategies to effectively improve the accuracy and stability
in capturing high-order non-Gaussian features based on only observation
from the lower moments. We illustrate the main steps in building effective
models for capturing probability distributions in Fig.~\ref{fig:Diagram}.

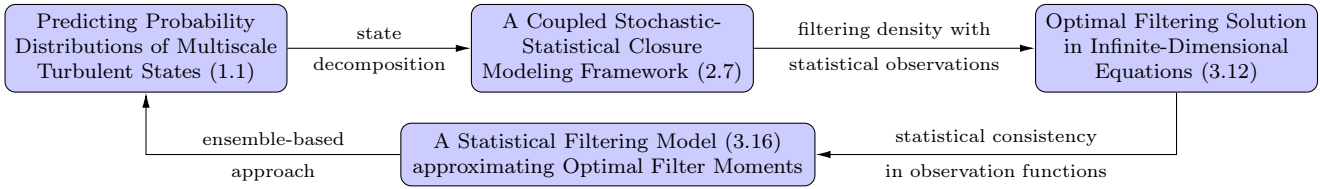
\begin{figure}
\begin{tikzpicture}[node distance = 5cm, auto, font=\small] 
		\node [block] (turbulent) {Predicting Probability Distributions of Multiscale Turbulent States \eqref{eq:abs_formu}};
		\node [block,right of=turbulent,node distance=6.2cm] (model) {A Coupled Stochastic-Statistical Closure Modeling Framework \eqref{eq:closure_model}};
		\node [block, right of=model, node distance=7.5cm] (filter) {Optimal Filtering Solution in Infinite-Dimensional Equations \eqref{eq:KB-fpf}};
 
		\path[draw, -{Latex[length=2mm,width=1mm]}] (turbulent) -- (model) node[midway, above,font=\scriptsize] {state} node[midway, below, multiline node,font=\scriptsize] {decomposition}; 
		\path[draw, -{Latex[length=2mm,width=1mm]}] (model) -- (filter) node[midway, above,font=\scriptsize] {filtering density with} node[midway, below, multiline node,font=\scriptsize] {statistical observations}; 

		\node [lblock,  below right=.4cm and 1.5cm of turbulent] (meanfield) {A Statistical Filtering Model \eqref{eq:mf-fpf} approximating Optimal Filter Moments};

		\path[draw, -{Latex[length=2mm,width=1mm]}]  (filter)|- (meanfield) node[near end,  above,font=\scriptsize] {statistical consistency} node[near end, below, multiline node,font=\scriptsize] {in observation functions}; 
		\path[draw, -{Latex[length=2mm,width=1mm]}]  (meanfield)-| (turbulent) node[near start, above,font=\scriptsize] {ensemble-based} node[near start, below, multiline node,font=\scriptsize] {approach}; 
		
\end{tikzpicture}

\caption{Diagram illustrating main ideas in constructing the modeling framework\protect\label{fig:Diagram}}
\end{figure}

Our main focus of this paper is to provide a detailed description
of the forecast and analysis step of the filtering method so that
various applications can directly follow based on this highly adaptive
general framework. We provide detailed discussion in each component
of the proposed model and methods: i) we show in Proposition~\ref{prop:model_equiv}
that the highly tractable closure model \eqref{eq:closure_model}
demonstrates consistent statistics as the original system \eqref{eq:abs_formu}
which plays a fundamental role in the construction of the filtering
methods; ii) the filter solution \eqref{eq:KB-fpf} is shown optimal
in the mean square sense by exploiting the conditional Gaussian structure
of the forward equation, and the approximation model is shown to recover
the same key statistics during the analysis step update in Theorem~\ref{thm:consist_analy};
iii) the long-time convergence in statistics to the optimal filter
is demonstrated in Theorem~\ref{thm:mean_conv} concerning the complete
filtering procedure using the statistical filtering  model \eqref{eq:mf-fpf};
and iv) the ensemble approximation with discrete time step is shown
to recover to the true statistics at the large ensemble limit.

The structure of the paper is organized as follows: In Section~\ref{sec:A-statistically-consistent},
we first set up a statistically consistent formulation for the general
multiscale system \eqref{eq:abs_formu} that is suitable for the construction
of the statistical filtering models. Section~\ref{sec:Data-assimilation}
shows the major ideas of finding the optimal filter solution and constructing
approximate filtering models. Long-time convergence and stability
of the approximation filter model is then discussed in Section~\ref{sec:Stability-and-convergence}.
Combining each component of the ideas, the final ensemble statistical
filtering model is developed in Section~\ref{sec:Ensemble-approximation}
Finally, a summary of this paper is given in Section~\ref{sec:Summary}.

\section{A statistically consistent modeling framework for multiscale dynamical
systems\protect\label{sec:A-statistically-consistent}}

In this section, we first introduce a new formulation for the original
system \eqref{eq:abs_formu} using an explicit macroscopic and microscopic
decomposition of the multiscale state. In particular, we show that
the new formulation provides consistent statistics with the original
system in the explicit leading moment equations, as well as higher-order
statistics in the stochastic equation. In addition, the new formulation
also enjoys a more tractable dynamical structure to be adapted to
the filtering framework.

\subsection{The statistical and stochastic equations as macroscopic and microscopic
states}

In order to identify the detailed multiscale interactions in the general
system \eqref{eq:abs_formu}, we decompose the random model state
$u_{t}$ into the composition of a statistical mean $\bar{u}_{t}$
and stochastic fluctuations $u_{t}^{\prime}$ in a finite-dimensional
representation under an orthonormal basis $\left\{ \hat{v}_{k}\right\} _{k=1}^{d}$
with $\hat{v}_{k}\cdot\hat{v}_{l}=\delta_{kl}$ 
\begin{equation}
u_{t}=\bar{u}_{t}+u_{t}^{\prime}=\sum_{k=1}^{d}\bar{u}_{k,t}\hat{v}_{k}+\sum_{k=1}^{d}Z_{k,t}\hat{v}_{k},\quad\mathrm{with}\;\bar{u}_{k,t}=\hat{v}_{k}\cdot\bar{u}_{t},\;Z_{k,t}=\hat{v}_{k}\cdot\left(u_{t}-\bar{u}_{t}\right).\label{eq:decomp}
\end{equation}
Above, $\bar{u}_{t}\in\mathbb{R}^{d}$ represents the dominant largest-scale
structure (for example, the zonal jets in geophysical turbulence or
the coherent radial flow in fusion plasmas); and $Z_{t}=\left[Z_{1,t},\cdots,Z_{d,t}\right]^{\intercal}\in\mathbb{R}^{d}$
are stochastic coefficients projected on each eigenmode $\hat{v}_{k}$,
whose randomness illustrates the uncertainty in each single scale
of $u_{t}^{\prime}$. In particular, we will show that the dynamics
of the stochastic modes $Z_{t}$ contain nonlinear interactions among
a large number of coupled multiscale fluctuations, which demonstrate
the charactering feature in turbulent systems demonstrating forward
and backward energy cascades \cite{frisch1995turbulence,majda2016introduction}. 

\subsubsection{Statistical equations for the macroscopic states}

First, we define the leading-order mean and covariance according to
the state decomposition \eqref{eq:decomp} 
\begin{equation}
\bar{u}_{t}=\mathbb{E}_{p_{t}}\left[u\right]\coloneqq\int_{\mathbb{R}^{d}}up_{t}\left(\mathrm{d}u\right),\quad R_{kl,t}=\mathbb{E}_{p_{t}}\left[Z_{k,t}Z_{l,t}\right]\coloneqq\int_{\mathbb{R}^{d}}\hat{v}_{k}\cdot\left(u-\bar{u}_{t}\right)\left(u-\bar{u}_{t}\right)\cdot\hat{v}_{l}p_{t}\left(\mathrm{d}u\right),\;1\leq k,l\leq d,\label{eq:macro_states}
\end{equation}
where the expectations are with respect to the law $p_{t}$ from the
PDF solution of \eqref{eq:FPE}. Above, statistical mean $\bar{u}_{t}$
and covariance $R_{t}$ can represent the \emph{macroscopic} physical
quantities that are easiest to achieve from direct measurements. The
dynamical equations for the mean and covariance can be derived by
the following statistical equations for $1\leq k,l\leq d$\addtocounter{equation}{0}\begin{subequations}\label{eq:dyn_stat} 
\begin{align}
\frac{\mathrm{d}\bar{u}_{k,t}}{\mathrm{d}t}= & \:\hat{v}_{k}\cdot\left[\Lambda\bar{u}_{t}+B\left(\bar{u}_{t},\bar{u}_{t}\right)\right]+\sum_{m,n=1}^{d}\gamma_{kmn}\mathbb{E}_{p_{t}}\left[Z_{m,t}Z_{n,t}\right]+\hat{v}_{k}\cdot F_{t},\label{eq:dyn_mean}\\
\frac{\mathrm{d}R_{kl,t}}{\mathrm{d}t}= & \sum_{m=1}^{d}\left[L_{km}\left(\bar{u}_{t}\right)R_{t,ml}+R_{t,km}L_{lm}\left(\bar{u}_{t}\right)\right]+Q_{t,kl},\label{eq:dyn_cov}\\
+ & \sum_{m,n=1}^{d}\gamma_{kmn}\mathbb{E}_{p_{t}}\left[Z_{m,t}Z_{n,t}Z_{l,t}\right]+\gamma_{lmn}\mathbb{E}_{p_{t}}\left[Z_{m,t}Z_{n,t}Z_{k,t}\right].\nonumber 
\end{align}
\end{subequations}Above, we define the nonlinear coupling coefficients
$\gamma_{kmn}=\hat{v}_{k}\cdot B\left(\hat{v}_{m},\hat{v}_{n}\right)$,
and the white noise coefficient is defined as $\Sigma_{t}=\left[\left(\hat{v}_{1}^{\intercal}\sigma_{t}\right)^{\intercal},\cdots,\left(\hat{v}_{d}^{\intercal}\sigma_{t}\right)^{\intercal}\right]^{\intercal}$
with $Q_{t}=\Sigma_{t}\Sigma_{t}^{\intercal}\in\mathbb{R}^{d\times d}$.
The mean-fluctuation coupling operator $L\left(\bar{u}_{t}\right)\in\mathbb{R}^{d\times d}$
dependent on the statistical mean state $\bar{u}_{t}$ is defined
as
\begin{equation}
L_{kl}\left(u\right)=\hat{v}_{k}\cdot\left[\Lambda\hat{v}_{l}+B\left(u,\hat{v}_{l}\right)+B\left(\hat{v}_{l},u\right)\right].\label{eq:quasi_linear}
\end{equation}
Notice that the right hand side of \eqref{eq:dyn_cov} involves the
fluctuation modes $Z_{t}$ defined from $u_{t}$ in \eqref{eq:decomp},
then the expectations on third moments are taken w.r.t. the PDF $p_{t}$.
Therefore, the resulting statistical equations \eqref{eq:dyn_stat}
are not closed and need to be combined with the FPE \eqref{eq:FPE}
to achieve a closed formulation for the leading-order mean and covariance
in the nonlinear system.

\subsubsection{Stochastic equations for the microscopic processes }

Second, we introduce the SDE describing the time evolution of the
multiscale stochastic processes $Z_{t}$ as the \emph{microscopic}
state consisting of the many subscale fluctuations
\begin{equation}
\mathrm{d}Z_{t}=L\left(\bar{u}_{t}\right)Z_{t}\mathrm{d}t+\varGamma\left(Z_{t}Z_{t}^{\intercal}-R_{t}\right)\mathrm{d}t+\Sigma_{t}\mathrm{d}W_{t}.\label{eq:dyn_stoc}
\end{equation}
Above, $L\left(\bar{u}_{t}\right)$ is the same mean-fluctuation coupling
operator defined in \eqref{eq:quasi_linear} involving the statistical
mean $\bar{u}_{t}$, and we define the quadratic coupling operator
$\varGamma:\mathbb{R}^{d\times d}\rightarrow\mathbb{R}^{d}$ as a
linear combination of the entries of the input matrix $R\in\mathbb{R}^{d\times d}$
describing the nonlinear multiscale coupling involving the covariance
$R_{t}$
\begin{equation}
\varGamma_{k}\left(R\right)=\sum_{m,n=1}^{d}\hat{v}_{k}\cdot B\left(\hat{v}_{m},\hat{v}_{n}\right)R_{mn}.\label{eq:coupling_coeff}
\end{equation}
Similar to the statistical equations, the dynamics on the right hand
side of \eqref{eq:dyn_stoc} is linked to the macroscopic quantities
$\bar{u}_{t}$ and $R_{t}$, which in tern requires additional information
of $p_{t}$ from the original model state $u_{t}$. This makes the
stochastic equation also unclosed requiring additional information
from the PDF solution \eqref{eq:FPE} from the original system. The
derivations of the above statistical and stochastic equations \eqref{eq:dyn_stat}
and \eqref{eq:dyn_stoc} from the original general formulation \eqref{eq:abs_formu}
are shown in the proof of Proposition~\ref{prop:model_equiv}. 

\subsection{A coupled stochastic-statistical closure model with explicit higher-order
feedbacks}

Combining the ideas in the closely related stochastic equation \eqref{eq:dyn_stoc}
and the statistical equations \eqref{eq:dyn_stat}, we propose a \emph{statistically
consistent stochastic-statistical closure model} based on the following
self-consistent coupling of the microscopic stochastic processes $Z_{t}$
and the macroscopic statistics $\bar{u},R_{t}$ 
\begin{equation}
\begin{aligned}\mathrm{d}Z_{t}= & \:L\left(\bar{u}_{t}\right)Z_{t}\mathrm{d}t+\varGamma\left(Z_{t}Z_{t}^{\intercal}-R_{t}\right)\mathrm{d}t+\Sigma_{t}\mathrm{d}W_{t},\\
\frac{\mathrm{d}\bar{u}_{t}}{\mathrm{d}t}= & \:M\left(\bar{u}_{t}\right)+Q_{m}\left(\mathbb{E}\left[Z_{t}\otimes Z_{t}\right]\right)+F_{t},\\
\frac{\mathrm{d}R_{t}}{\mathrm{d}t}= & \:L\left(\bar{u}_{t}\right)R_{t}+R_{t}L\left(\bar{u}_{t}\right)^{\intercal}+Q_{v}\left(\mathbb{E}\left[Z_{t}\otimes Z_{t}\otimes Z_{t}\right]\right)\\
 & +\Sigma_{t}\Sigma_{t}^{\intercal}+\epsilon^{-1}\left(\mathbb{E}\left[Z_{t}Z_{t}^{\intercal}\right]-R_{t}\right).
\end{aligned}
\label{eq:closure_model}
\end{equation}
Above, the expectations are all w.r.t. the PDF $\rho_{t}$ of the
stochastic states $Z_{t}$. In the first moment equation for $\bar{u}_{t}$,
with a bit abuse of notation, we denote $\bar{u}_{t}=\left[\bar{u}_{1,t},\cdots,\bar{u}_{d,t}\right]^{\intercal}\in\mathbb{R}^{d}$
with each component $\bar{u}_{k,t}=\bar{u}_{t}\cdot\hat{v}_{k}$,
$M=\left[M_{1},\cdots,M_{d}\right]^{\intercal}\in\mathbb{R}^{d}$
where $M_{k}\left(\bar{u}_{t}\right)=\sum_{p,q}\hat{v}_{k}\cdot\left[\Lambda\hat{v}_{p}\bar{u}_{p,t}+B\left(\hat{v}_{p},\hat{v}_{q}\right)\bar{u}_{p,t}\bar{u}_{q,t}\right]$
for $1\leq k\leq d$ and $F_{t}=\left[\hat{v}_{1}\cdot F_{t},\cdots,\hat{v}_{d}\cdot F_{t}\right]$;
in the second moment equation for $R_{t}\in\mathbb{R}^{d\times d}$,
the operator $L\left(\bar{u}_{t}\right)\in\mathbb{R}^{d\times d}$
indicates mean-fluctuation interactions defined in \eqref{eq:quasi_linear};
and in the stochastic equation for $Z_{t}$, $\varGamma:\mathbb{R}^{d\times d}\rightarrow\mathbb{R}^{d}$
is the quadratic coupling operator defined in \eqref{eq:coupling_coeff}.
The two higher-moment feedbacks for the mean and covariance, $Q_{m},Q_{v}$,
related to the second and third moments of $Z_{t}$ respectively,
are defined as
\begin{equation}
\begin{aligned}Q_{m,k}\left(\mathbb{E}\left[Z_{t}\otimes Z_{t}\right]\right)= & \sum_{p,q=1}^{d}\gamma_{kpq}\mathbb{E}\left[Z_{p,t}Z_{q,t}\right],\\
Q_{v,kl}\left(\mathbb{E}\left[Z_{t}\otimes Z_{t}\otimes Z_{t}\right]\right)= & \sum_{p,q=1}^{d}\gamma_{kpq}\mathbb{E}\left[Z_{p,t}Z_{q,t}Z_{l,t}\right]+\gamma_{lpq}\mathbb{E}\left[Z_{p,t}Z_{q,t}Z_{k,t}\right],
\end{aligned}
\label{eq:stat_feedbacks}
\end{equation}
for $1\leq k,l\leq d$ with coupling coefficients $\gamma_{kmn}=\hat{v}_{k}\cdot B\left(\hat{v}_{m},\hat{v}_{n}\right)$.
Above, $Q_{m}$ models the feedback in the mean equation due to the
second moments $\mathbb{E}\left[Z_{t}\otimes Z_{t}\right]$, and $Q_{v,kl}$
is the symmetric feedback in the covariance equation due to all the
third moments $\mathbb{E}\left[Z_{t}\otimes Z_{t}\otimes Z_{t}\right]$.
Notice that $Q_{m},Q_{v}$ can be both viewed as linear operators
w.r.t. $\rho_{t}$. Besides, in the second moment equation an additional
relaxation term with a parameter $\epsilon>0$ is introduced. This
term will not modify the model dynamics with statistical consistency,
but is playing a crucial role as a `reinforcement' term in maintaining
stable numerical performance (for example, see \cite{qi2023high})
especially with highly instability induced by the strong mean-fluctuation
coupling from $L\left(\bar{u}_{t}\right)$.

Different from the inherently coupled stochastic and statistical equations
\eqref{eq:dyn_stoc} and \eqref{eq:dyn_stat} through the PDF $p_{t}$
of the original model state $u_{t}$, the new closure model \eqref{eq:closure_model}
provides a clean self-consistent formulation for tractable theoretical
analysis and direct numerical implementations. A new PDF $\rho_{t}$
of the stochastic process $Z_{t}$ is introduced to close the system.
The statistical states $\bar{u}_{t},R_{t}$ are first treated as new
individual processes subject to higher-order moments w.r.t. $\rho_{t}$.
Then, the microscopic stochastic equation for $Z_{t}$ models the
high-dimensional multiscale process with explicit dependence on the
macroscopic states $\bar{u}_{t}$ and $R_{t}$. Thus, the whole system
is closed by the law of the stochastic state $Z_{t}$ itself. No additional
information about the intractable PDF $p_{t}$ will be needed for
solving the FPE of the original system. 

In the rest part of this section, we built precise link between the
new closure model \eqref{eq:closure_model} and the coupled equations
\eqref{eq:dyn_stoc} and \eqref{eq:dyn_stat} from the original system.
First, the following lemma provides the self-consistency in the leading
moments of the stochastic modes $Z_{k}$ and statistical states $\bar{u}_{t},R_{t}$
in \eqref{eq:closure_model}.
\begin{lem}
\label{lem:consist_model}With consistent initial conditions $\mathbb{E}\left[Z_{0}\right]=0$
and $\mathbb{E}\left[Z_{0}Z_{0}^{\intercal}\right]=R_{0}$, the leading
moments of the stochastic modes $Z_{t}$ of the closure model \eqref{eq:closure_model}
satisfy that for all $t>0$
\begin{equation}
\mathbb{E}\left[Z_{t}\right]=0,\quad\mathbb{E}\left[Z_{t}Z_{t}^{\intercal}\right]=R_{t},\label{eq:consistent}
\end{equation}
where the expectation is taken w.r.t. the PDF $\rho_{t}$ of $Z_{t}$,
and $R_{t}$ is the solution of the second moment equation in \eqref{eq:closure_model}. 
\end{lem}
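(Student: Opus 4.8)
The plan is to derive closed evolution equations for the first two moments $m_{t}\coloneqq\mathbb{E}\left[Z_{t}\right]$ and $P_{t}\coloneqq\mathbb{E}\left[Z_{t}Z_{t}^{\intercal}\right]$ directly from the stochastic equation in \eqref{eq:closure_model}, and then compare them against the statistical equations for $\bar{u}_{t}$ and $R_{t}$. First I would take the expectation of the $Z_{t}$-SDE. Since the martingale term $\Sigma_{t}\mathrm{d}W_{t}$ has zero mean and $\varGamma$ is linear in its matrix argument, this yields
\[
\frac{\mathrm{d}m_{t}}{\mathrm{d}t}=L\left(\bar{u}_{t}\right)m_{t}+\varGamma\!\left(P_{t}-R_{t}\right).
\]
Next I would apply It\^{o}'s formula to the matrix $Z_{t}Z_{t}^{\intercal}$ and take expectations. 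The diffusion contributes the quadratic-variation term $\Sigma_{t}\Sigma_{t}^{\intercal}$, the linear drift $L(\bar{u}_{t})Z_{t}$ contributes $L(\bar{u}_{t})P_{t}+P_{t}L(\bar{u}_{t})^{\intercal}$, and the quadratic drift $\varGamma(Z_{t}Z_{t}^{\intercal}-R_{t})$ produces third-moment terms together with a correction proportional to $m_{t}$.

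The crucial algebraic step is to expand the quadratic-drift contribution in the index notation of \eqref{eq:coupling_coeff}--\eqref{eq:stat_feedbacks} and recognize that, after symmetrization of the two matrix products $\varGamma(\cdot)Z_{t}^{\intercal}$ and $Z_{t}\,\varGamma(\cdot)^{\intercal}$, the third-moment part reproduces exactly the covariance feedback $Q_{v}(\mathbb{E}[Z_{t}\otimes Z_{t}\otimes Z_{t}])$ defined in \eqref{eq:stat_feedbacks}, while the subtracted mean contribution $R_{mn,t}\mathbb{E}[Z_{l,t}]$ collects into a factor $\varGamma(R_{t})\,m_{t}^{\intercal}+m_{t}\,\varGamma(R_{t})^{\intercal}$. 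This gives
\[
\frac{\mathrm{d}P_{t}}{\mathrm{d}t}=L(\bar{u}_{t})P_{t}+P_{t}L(\bar{u}_{t})^{\intercal}+Q_{v}\!\left(\mathbb{E}[Z_{t}\otimes Z_{t}\otimes Z_{t}]\right)+\Sigma_{t}\Sigma_{t}^{\intercal}-\varGamma(R_{t})\,m_{t}^{\intercal}-m_{t}\,\varGamma(R_{t})^{\intercal}.
\]

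With both moment equations in hand, I would introduce the defect $D_{t}\coloneqq P_{t}-R_{t}$ and subtract the covariance equation of \eqref{eq:closure_model} from the equation for $P_{t}$. Because both equations carry the identical third-moment feedback $Q_{v}$ and the same noise term $\Sigma_{t}\Sigma_{t}^{\intercal}$, these cancel, and the relaxation term reduces simply to $-\epsilon^{-1}D_{t}$; this is precisely why the reinforcement term does not spoil consistency. The result is the coupled, homogeneous, linear system
\[
\frac{\mathrm{d}m_{t}}{\mathrm{d}t}=L(\bar{u}_{t})m_{t}+\varGamma(D_{t}),\qquad\frac{\mathrm{d}D_{t}}{\mathrm{d}t}=L(\bar{u}_{t})D_{t}+D_{t}L(\bar{u}_{t})^{\intercal}-\epsilon^{-1}D_{t}-\varGamma(R_{t})m_{t}^{\intercal}-m_{t}\varGamma(R_{t})^{\intercal},
\]
whose coefficients are fixed along the (assumed existing) closure-model trajectory $\bar{u}_{t},R_{t}$. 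Since the consistent initial data give $m_{0}=0$ and $D_{0}=0$, uniqueness for linear ODEs with time-dependent coefficients forces $\left(m_{t},D_{t}\right)\equiv\left(0,0\right)$ for all $t>0$, which is exactly \eqref{eq:consistent}.

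I expect the main obstacle to be the bookkeeping in the crucial algebraic step, namely verifying that the It\^{o} third-moment term arising from $\varGamma(Z_{t}Z_{t}^{\intercal}-R_{t})$ symmetrizes precisely into $Q_{v}$ rather than into some distinct tensor, together with the apparent circularity that the mean identity seems to require $P_{t}=R_{t}$ while the covariance identity seems to require $m_{t}=0$. The resolution is to treat $m_{t}$ and $D_{t}$ \emph{jointly} as a single homogeneous linear system instead of proving either identity in isolation; then no circularity arises and the vanishing initial condition closes the argument in one step.
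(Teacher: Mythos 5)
Your proposal is correct and follows essentially the same route as the paper: both derive the first and second moment equations of $Z_{t}$ via It\^{o}'s formula, observe that the third-moment feedback and noise terms match $Q_{v}$ and $\Sigma_{t}\Sigma_{t}^{\intercal}$ in the $R_{t}$ equation, and conclude by uniqueness of the resulting ODE system with zero initial discrepancy (with the relaxation term vanishing for the same reason). Your explicit formulation of the joint homogeneous linear system for $\left(m_{t},D_{t}\right)$ is a slightly cleaner way to invoke uniqueness and to dispel the apparent circularity, but it is the same argument the paper makes when it asserts that $\left(0,R_{t}\right)$ solves the moment equations and appeals to uniqueness.
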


The proof can be found through the direct application of It\^{o}'s
formula and we put the detailed proof in Appendix~\ref{sec:Detailed-proofs}.
Lemma~\ref{lem:consist_model} demonstrates that the law of $Z_{t}$
indeed recovers the same covariance of the fluctuation modes, while
it also contains more information in the higher-order statistics.
Furthermore, the closed coupled stochastic-statistical model generates
the same statistical solution as the original system \eqref{eq:abs_formu}.
The following proposition describes the statistical consistency between
the coupled model \eqref{eq:closure_model} and the original system
\eqref{eq:abs_formu}.
\begin{prop}
\label{prop:model_equiv}Assume that the solution $u_{t}$ of the
system \eqref{eq:abs_formu} has a continuous PDF $p_{t}$, and the
solution $\left\{ \bar{u}_{t},R_{t};Z_{t}\right\} $ of the stochastic-statistical
closure model \eqref{eq:closure_model} has the continuous PDF $\rho_{t}$
for $Z_{t}$ together with the deterministic solutions for $\bar{u}_{t}$
and $R_{t}$. Then from the same initial conditions, the two models
give the same statistical solution, that is, for all $t>0$
\begin{equation}
\mathbb{E}_{p_{t}}\left[u_{t}\right]=\bar{u}_{t},\quad\mathbb{E}_{p_{t}}\left[u_{t}^{\prime}u_{t}^{\prime\intercal}\right]=R_{t},\label{eq:consist_lower}
\end{equation}
where $u_{t}^{\prime}=u_{t}-\mathbb{E}_{p_{t}}\left[u_{t}\right]$.
Furthermore, for both models with any function $\varphi\in C^{2}\left(\mathbb{R}^{d}\right)$
we have
\begin{equation}
\mathbb{E}_{p_{t}}\left[\varphi\left(u_{t}^{\prime}\right)\right]=\mathbb{E}\left[\varphi\left(\sum_{k=1}^{d}Z_{k,t}\hat{v}_{k}\right)\right].\label{eq:consist_higher}
\end{equation}
\end{prop}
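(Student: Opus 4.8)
The plan is to show that the statistical states $\bar u_t, R_t$ and the law $\rho_t$ of the closure model \eqref{eq:closure_model} coincide with the true mean, covariance, and fluctuation law of the original system \eqref{eq:abs_formu}, by exhibiting a single McKean--Vlasov system that both must solve and then invoking uniqueness to identify them. The two assertions \eqref{eq:consist_lower} and \eqref{eq:consist_higher} then follow, with \eqref{eq:consist_lower} a special case of \eqref{eq:consist_higher}.

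First I would derive the exact dynamics of the true fluctuation from \eqref{eq:abs_formu}. Writing $u_t=m_t+u_t'$ with the deterministic mean $m_t:=\mathbb{E}_{p_t}[u_t]$ and centered fluctuation $u_t'$, I take expectations in \eqref{eq:abs_formu}; since $\mathbb{E}[u_t']=0$ kills the cross terms produced by the bilinearity of $B$ and the noise has zero mean, I obtain $\dot m_t=\Lambda m_t+B(m_t,m_t)+\mathbb{E}[B(u_t',u_t')]+F_t$, which projected onto $\{\hat v_k\}$ is exactly \eqref{eq:dyn_mean} and matches the mean equation of \eqref{eq:closure_model} with feedback $Q_m$. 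Subtracting $\dot m_t\,\mathrm{d}t$ from \eqref{eq:abs_formu} gives $\mathrm{d}u_t'=[\Lambda u_t'+B(m_t,u_t')+B(u_t',m_t)+B(u_t',u_t')-\mathbb{E}[B(u_t',u_t')]]\,\mathrm{d}t+\sigma_t\,\mathrm{d}W_t$. Projecting onto the orthonormal basis and using the definitions \eqref{eq:quasi_linear} of $L$ and \eqref{eq:coupling_coeff} of $\varGamma$, the coefficient vector $Z_t^{\star}:=(\hat v_k\cdot u_t')_k$ satisfies precisely \eqref{eq:dyn_stoc} with $\bar u_t=m_t$ and $R_t=C_t:=\mathbb{E}[Z_t^{\star}(Z_t^{\star})^\intercal]$. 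An application of It\^o's formula to $Z_{k,t}^{\star}Z_{l,t}^{\star}$ then reproduces \eqref{eq:dyn_cov}, completing the derivation of \eqref{eq:dyn_stat} and \eqref{eq:dyn_stoc} promised earlier.

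Next I would collapse the closure model to the same self-contained system. By Lemma~\ref{lem:consist_model}, consistent initial data force $\mathbb{E}[Z_t]=0$ and $\mathbb{E}[Z_t Z_t^\intercal]=R_t$ for all $t$, so the relaxation term $\epsilon^{-1}(\mathbb{E}[Z_t Z_t^\intercal]-R_t)$ vanishes identically and $R_t$ is slaved to the second moment of $\rho_t$. Consequently \eqref{eq:closure_model} reduces to the McKean--Vlasov SDE $\mathrm{d}Z_t=L(\bar u_t)Z_t\,\mathrm{d}t+\varGamma(Z_t Z_t^\intercal-\mathbb{E}[Z_t Z_t^\intercal])\,\mathrm{d}t+\Sigma_t\,\mathrm{d}W_t$ coupled to the deterministic $\dot{\bar u}_t=M(\bar u_t)+Q_m(\mathbb{E}[Z_t\otimes Z_t])+F_t$. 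Comparing coefficients with the previous paragraph, the pair $(m_t,Z_t^{\star})$ from the true system and the pair $(\bar u_t,Z_t)$ from the closure model obey exactly the same drift and diffusion, written as identical functionals of the current law through its mean and second moment, and start from identical initial data.

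Finally I would identify the two solutions and read off the conclusions. Equality of laws $\mathrm{Law}(Z_t^{\star})=\rho_t$ for all $t$ follows from uniqueness for this McKean--Vlasov system; the cleanest rigorous route is a synchronous coupling, driving both processes with the same $W_t$ and initial condition and running a Gronwall estimate on $\mathbb{E}|Z_t^{\star}-Z_t|^2$ together with $|m_t-\bar u_t|^2$, both of which start at zero. Once the laws coincide, \eqref{eq:consist_higher} is immediate because $u_t'=\sum_k Z_{k,t}^{\star}\hat v_k$ is the isometric image of $Z_t^{\star}$, and \eqref{eq:consist_lower} is the special case of the first two moments. The main obstacle is precisely this uniqueness step: the quadratic operator $B$, and hence $\varGamma$ and the moment feedbacks, is not globally Lipschitz, so the Gronwall argument cannot be closed without a priori moment bounds on both solutions. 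Under the proposition's standing assumption that continuous PDFs $p_t$ and $\rho_t$ exist, I would control the quadratic terms locally in the fluctuation and close the estimate using boundedness of the covariance on finite time intervals, so that the difference remains zero and the identification holds.
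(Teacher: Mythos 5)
Your proposal is correct and follows essentially the same route as the paper's own proof: derive the mean equation, the fluctuation SDE, and the second-moment equation from \eqref{eq:abs_formu}, observe that after invoking Lemma~\ref{lem:consist_model} the closure model \eqref{eq:closure_model} produces identical right-hand sides as functionals of the first two moments, and conclude by uniqueness from common initial data. The only difference is cosmetic: you carry out the identification by a pathwise synchronous coupling with a Gr\"{o}nwall estimate and explicitly flag the non-Lipschitz quadratic terms as the point requiring a priori moment bounds, whereas the paper compares the generators $\mathcal{L}_{t}^{u}\left(p_{t}\right)$ and $\mathcal{L}_{t}\left(\bar{u}_{t},R_{t}\right)$ at the level of the statistical equations and invokes uniqueness without elaboration, so your treatment is if anything slightly more careful on that step.
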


The proof of Proposition~\ref{prop:model_equiv} can be found in
Appendix~\ref{sec:Detailed-proofs} through detailed computation
of each moments of \eqref{eq:abs_formu} compared with that of \eqref{eq:closure_model}.
Notice that the left hand sides of \eqref{eq:consist_lower} and \eqref{eq:consist_higher}
consist of the statistics requiring solving the PDF $p_{t}$ of the
original system, while the right hand sides are purely from the PDF
$\rho_{t}$ of the closure model. This confirms the direct link between
the new stochastic-statistical closure model \eqref{eq:closure_model}
to the statistics in the original fully coupled multiscale system.
From this detailed formulation, quantification of non-Gaussian statistics
in $Z_{t}$ relies on the accurate estimation of the leading-order
mean and covariance, which can be assisted from the observation data.
This leads to effective filtering methods for improved probability
forecast for $Z_{t}$ that will be discussed next in Section~\ref{sec:Data-assimilation}.
\begin{rem*}
The second-order closure model \eqref{eq:closure_model} provides
explicit statistical dynamics for the mean and covariance. We can
also propose a first-order closure model involving only the statistical
mean equation coupled with the McKean-Vlasov SDE
\begin{equation}
\begin{aligned}\mathrm{d}Z_{t} & =L\left(\bar{u}_{t}\right)Z_{t}\mathrm{d}t+\varGamma\left(Z_{t}Z_{t}^{\intercal}-\mathbb{E}\left[Z_{t}Z_{t}^{\intercal}\right]\right)\mathrm{d}t+\Sigma_{t}\mathrm{d}W_{t},\\
\frac{\mathrm{d}\bar{u}_{t}}{\mathrm{d}t} & =M\left(\bar{u}_{t}\right)+\sum_{p,q}\mathbb{E}\left[Z_{p,t}Z_{q,t}\right]B\left(\hat{v}_{p},\hat{v}_{q}\right)+F_{t},
\end{aligned}
\label{eq:closure_mean}
\end{equation}
However, the above equations may introduce an unbalanced approximation
to the mean and fluctuation interactions since the dynamics of the
SDE for $Z_{t}$ will directly involve expectation w.r.t. its law
$\rho_{t}$. Thus, this model will be prone to numerical errors in
practical applications. Still, \eqref{eq:closure_mean} can serve
as an intermediate model for the analysis of filtering  schemes discussed
in the next section.
\end{rem*}

\subsection{The stochastic McKean-Vlasov equation as a multiscale interacting
system}

From the closed stochastic-statistical formulation \eqref{eq:closure_model},
the SDE for $Z_{t}$ is given by a stochastic McKean-Vlasov equation
depending on its own probability distribution $\rho_{t}$. In particular,
the resulting McKean-Vlasov SDE can be viewed naturally as the mean-field
limit of the ensemble approximation of $N$ individual trajectories
\begin{equation}
\mathrm{d}Z_{t}^{\left(i\right)}=L\left(\bar{u}_{t}^{N}\right)Z_{t}\mathrm{d}t+\varGamma\left(Z_{t}^{\left(i\right)}Z_{t}^{\left(i\right)\intercal}-R_{t}^{N}\right)\mathrm{d}t+\Sigma_{t}\mathrm{d}W_{t}^{\left(i\right)},\quad i=1,\cdots,N,\label{eq:ensemble_model}
\end{equation}
where $\left\{ W_{t}^{\left(i\right)}\right\} _{i=1}^{N}$ are independent
white noise processes, and the initial samples $\left\{ Z_{0}^{\left(i\right)}\right\} _{i=1}^{N}$
are drawn from the same initial measure $\rho_{0}$ of $Z_{t}$. Notice
that the ensemble members $Z_{t}^{\left(i\right)}$ as interacting
particles are not evolving independently with each other, but are
coupled through feedbacks of the leading-order statistics $\bar{u}_{t}^{N}$
and $R_{t}^{N}$ according to the statistical equations
\begin{equation}
\begin{aligned}\frac{\mathrm{d}\bar{u}_{t}^{N}}{\mathrm{d}t}= & \:M\left(\bar{u}_{t}^{N}\right)+Q_{m}\left(\mathbb{E}^{N}\left[\mathbf{Z}_{t}\otimes\mathbf{Z}_{t}\right]\right)+F_{t},\\
\frac{\mathrm{d}R_{t}^{N}}{\mathrm{d}t}= & \:L\left(\bar{u}_{t}^{N}\right)R_{t}^{N}+R_{t}^{N}L^{\intercal}\left(\bar{u}_{t}^{N}\right)+Q_{v}\left(\mathbb{E}^{N}\left[\mathbf{Z}_{t}\otimes\mathbf{Z}_{t}\otimes\mathbf{Z}_{t}\right]\right)\\
 & +\Sigma_{t}\Sigma_{t}^{\intercal}+\epsilon^{-1}\left(\mathbb{E}^{N}\left[\mathbf{Z}_{t}\mathbf{Z}_{t}^{\intercal}\right]-R_{t}^{N}\right).
\end{aligned}
\label{eq:ensemble_stat}
\end{equation}
Above, the expectations are computed through the empirical average
of the interacting particles $\mathbf{Z}_{t}=\left\{ Z_{t}^{\left(i\right)}\right\} _{i=1}^{N}$
from the ensemble simulation 
\[
\mathbb{E}^{N}\left[\varphi\left(\mathbf{Z}_{t}\right)\right]=\frac{1}{N}\sum_{i=1}^{N}\varphi\left(Z_{t}^{\left(i\right)}\right).
\]
The coupled ensemble approximation equations \eqref{eq:ensemble_model}
and \eqref{eq:ensemble_stat} have advantages in practical applications.
Unlike the general McKean-Vlasov SDEs \cite{mckean1967propagation},
\eqref{eq:ensemble_model} avoids the direct inclusion of the PDF
of $Z_{t}$, which is very difficult to approximate accurately from
finite particles. Instead, the PDF only implicitly enters the mean
and covariance equations \eqref{eq:ensemble_stat} through the computation
of higher moments. Effective computational algorithms with consistent
statistics then can be proposed (such as using the efficient random
batch methods \cite{qi2023high,qi2023random}) for the straightforward
ensemble model approximation. Besides in practical computation, the
relaxation term in $R_{t}^{N}$ provides additional restoring forcing
as a correction term to numerical errors with finite sample approximation
to reinforce stable dynamics and consistent statistics especially
in the case where internal instability is involved. 

In particular, it is well-know \cite{graham1996asymptotic} that the
empirical measure converges weakly to the true distribution, $\rho_{t}^{N}\rightarrow\rho_{t}$,
as well as the leading-order statistics in \eqref{eq:ensemble_model},
$\bar{u}^{N}\rightarrow\bar{u},R_{t}^{N}\rightarrow R_{t}$, as $N\rightarrow\infty$
under relatively weak assumptions (see Proposition~ \ref{thm:conv_stat}
in Section~\ref{sec:Ensemble-approximation}). The dynamical equation
for the continuous density function $\rho_{t}$ of $Z_{t}$ is given
by the corresponding equation
\begin{equation}
\frac{\partial\rho_{t}}{\partial t}=\mathcal{L}_{t}^{*}\left(\bar{u}_{t},R_{t}\right)\rho_{t}\coloneqq-\nabla_{z}\cdot\left[L\left(\bar{u}_{t}\right)z\rho_{t}\left(z\right)+\varGamma\left(zz^{\intercal}-R_{t}\right)\rho_{t}\left(z\right)\right]+\frac{1}{2}\nabla_{z}\cdot\left[\nabla_{z}\cdot\left(\Sigma_{t}\Sigma_{t}^{\intercal}\rho_{t}\left(z\right)\right)\right],\label{eq:dyn_pdf}
\end{equation}
where $\mathcal{L}_{t}^{*}$ is the adjoint of the generator $\mathcal{L}_{t}$
that is also dependent on the law of $Z_{t}$ shown in the statistics
of the mean $\bar{u}_{t}$ and covariance $R_{t}$. We postpone the
detailed analysis for the convergence of the ensemble approximation
model in Section~\ref{sec:Ensemble-approximation} together with
the filter approximation. 

In general, the probability density $\rho_{t}$ will demonstrate non-Gaussian
features due to the nonlinear stochastic coupling effects. On the
other hand, though the equations for the stochastic state $Z_{t}$
contain nonlinear structures, its PDF $\rho_{t}$ is subject to only
linear dynamics \eqref{eq:dyn_pdf} with $\mathcal{L}_{t}^{*}\left(\bar{u}_{t},R_{t}\right)$,
which is dependent on the mean and covariance. These desirable features
inspires the construction of effective filtering methods in the next
section to include leading-order statistical observations to improve
forecast of highly non-Gaussian statistics.

\section{Filtering models using observations in mean and covariance\protect\label{sec:Data-assimilation}}

In this section, we exploit the ideas in filtering to propose improved
approximation of probability distributions of the model state containing
highly non-Gaussian statistics. The optimal filter is developed by
combining the stochastic forecast model describing unobserved microscopic
states and leading-order statistics introduced as macroscopic observations.
We start with a precise description of the optimal filter equations
satisfying a conditional Gaussian process, then a new statistical
filtering model are proposed approximating the optimal filtering solution
with equivalent statistics.

\subsection{Filtering probability distributions using statistical observations}

We first formulate the filtering  problem for predicting probability
distributions based on observations from the leading-order statistics.
From the stochastic-statistical equations \eqref{eq:closure_model},
we can reformulate the general multiscale system \eqref{eq:abs_formu}
for $u_{t}$ as a composition of the macroscopic state from the first
two moments $\bar{u}_{t},R_{t}$ and the microscopic stochastic processes
$Z_{t}$. In practice, measurements are often available in the macroscopic
statistical states (such as the locally averaged mean state and variances
as the deviation from the mean). Therefore, it is natural to incorporate
the statistical observation data to improve the estimation of the
unobserved microscopic processes, especially to recover the unobserved
higher-order statistics (such as the deviation from the normal distribution
indicating the occurrence of high impact extreme events).

Using similar idea of the Fokker-Planck filter introduced in \cite{bach2023filtering},
we assume that the signal process of the filtering problem is from
the model state PDF $\rho_{t}\in\mathcal{P}\left(\mathbb{R}^{d}\right)$
belonging to the space of continuous probability density functions
on $\mathbb{R}^{d}$ with all finite moments. The observation process
is generated by the finite-dimensional statistical moments denoted
as $y_{t}\in\mathbb{R}^{p}$. This leads to the general \emph{infinite-dimensional
filtering system with statistical observations}\addtocounter{equation}{0}\begin{subequations}\label{eq:fpf}
\begin{align}
\mathrm{d}\rho_{t}= & \:\mathcal{L}_{t}^{*}\left(y_{t}\right)\rho_{t}\mathrm{d}t,\qquad\qquad\qquad\quad\rho_{t=0}\sim\mu_{0},\label{eq:fpf_model}\\
\mathrm{d}y_{t}= & \left[\mathcal{H}\rho_{t}+h_{t}\left(y_{t}\right)\right]\mathrm{d}t+\Gamma_{t}\mathrm{d}B_{t},\quad y_{t=0}=y_{0},\label{eq:fpf_obs}
\end{align}
\end{subequations}where $\mu_{0}$ is a probability measure of the
$\mathcal{P}\left(\mathbb{R}^{d}\right)$-valued random field $\rho$.
Given $y_{t}$, we will have $\rho_{t}\in\mathcal{P}\left(\mathbb{R}^{d}\right)$
for all $t>0$. Above, $\mathcal{L}_{t}\left(y_{t}\right)$ is the
infinitesimal generator of the corresponding SDE for $Z_{t}$ with
the explicit form given by \eqref{eq:dyn_pdf}; the general observation
process $y_{t}$ satisfies the dynamical equation subject to a linear
observation operator $\mathcal{H}:\mathcal{P}\left(\mathbb{R}^{d}\right)\rightarrow\mathbb{R}^{p}$
acting on the PDF $\rho_{t}$, as well as $h_{t}:\mathbb{R}^{p}\rightarrow\mathbb{R}^{p}$.
Detailed equations for $y_{t}=\left(\bar{u}_{t},R_{t}\right)$ will
be given next in \eqref{eq:dyn_obs} based on the first two moments.
Notice here $\rho_{t}$ becomes a random field (with a precise definition
given in \eqref{eq:def_rf}) due to the randomness in $y_{t}$ as
well as its initial uncertainty. 

\subsubsection*{Statistical observations from leading-order moments containing errors}

Let $p_{t}$ be the (unknown) PDF of the state $u_{t}$ in \eqref{eq:abs_formu},
that is, the deterministic solution of the FPE \eqref{eq:FPE}. Then,
we can assume that observations are drawn from the mean or covariance
of the state $u_{t}$ as
\begin{equation}
\bar{u}_{k,t}=\mathbb{E}_{p_{t}}\left[u_{t}\cdot\hat{v}_{k}\right],\quad R_{kl,t}=\mathbb{E}_{p_{t}}\left[\left(\hat{v}_{k}\cdot u_{t}^{\prime}\right)\left(u_{t}^{\prime}\cdot\hat{v}_{l}\right)\right],\label{eq:obs}
\end{equation}
projected to the observed large-scale modes $\hat{v}_{k},k\leq d^{\prime}$
in \eqref{eq:decomp}. We refer it as the \emph{full observation case}
with $d^{\prime}=d$ and \emph{partial observation case} with $d^{\prime}<d$.
For simplicity, we may always consider the full observation case $d^{\prime}=d$
(that is, $y_{t}=\left(\bar{u}_{t},R_{t}\right)\in\mathbb{R}^{p}$
with $p=d+d^{2}$) in this paper without confusion. The dynamical
equations for $\bar{u}_{t},R_{t}$ can be introduced according to
the closure model \eqref{eq:closure_model}. According to Proposition~\ref{prop:model_equiv},
the statistical equations for $\bar{u}_{t},R_{t}$ in \eqref{eq:closure_model}
provide statistical solutions consistent with the law $p_{t}$ of
the state $u_{t}$. 

On the other hand, model errors are usually introduced due to the
finite truncation of the originally infinite-dimensional continuous
system as well as measurement errors. Therefore, additional correction
terms, using independent white noises, $B_{m,t}$ and $B_{v,t}$,
are added to the equations \eqref{eq:dyn_stat} for observed statistics
accounting for errors from the imperfect model approximations. The
detailed equations for the observations \eqref{eq:fpf_obs} can be
rewritten according to \eqref{eq:dyn_stat} as the following SDEs
for $\bar{u}_{t}\in\mathbb{R}^{d},R_{t}\in\mathbb{R}^{d^{2}}$ 
\begin{equation}
\begin{aligned}\mathrm{d}\bar{u}_{t}= & \left[\mathcal{H}_{m}\rho_{t}+h_{m,t}\left(\bar{u}_{t}\right)\right]\mathrm{d}t\;+\Gamma_{m}\mathrm{d}B_{m,t},\\
\mathrm{d}R_{t}= & \left[\mathcal{H}_{v}\rho_{t}+h_{v,t}\left(\bar{u}_{t},R_{t}\right)\right]\mathrm{d}t+\Gamma_{v}\mathrm{d}B_{v,t}.
\end{aligned}
\label{eq:dyn_obs}
\end{equation}
where $h_{m,t}\left(\bar{u}\right)=M\left(\bar{u}\right)+F_{t}$ and
$h_{v,t}\left(\bar{u},R\right)=L\left(\bar{u}\right)R+RL\left(\bar{u}\right)^{\intercal}+\Sigma_{t}\Sigma_{t}^{\intercal}$
are deterministic functions, while the linear \emph{observation operators},
$\mathcal{H}_{m},\mathcal{H}_{v}$, are defined by the high-order
statistical feedback functions \eqref{eq:stat_feedbacks}
\begin{equation}
\begin{aligned}\mathcal{H}_{m,k}\rho=\int_{\mathbb{R}^{d}}H_{k}^{m}\left(z\right)\rho\left(z\right)\mathrm{d}z= & \sum_{p,q=1}^{d}\gamma_{kpq}\int_{\mathbb{R}^{d}}z_{p}z_{q}\rho\left(z\right)\mathrm{d}z,\\
\mathcal{H}_{v,kl}\rho=\int_{\mathbb{R}^{d}}H_{kl}^{v}\left(z\right)\rho\left(z\right)\mathrm{d}z= & \sum_{p,q=1}^{d}\left[\gamma_{kpq}\int_{\mathbb{R}^{d}}z_{p}z_{q}z_{l}\rho\left(z\right)\mathrm{d}z+\gamma_{lpq}\int_{\mathbb{R}^{d}}z_{p}z_{q}z_{k}\rho\left(z\right)\mathrm{d}z\right].
\end{aligned}
\label{eq:operators_obs}
\end{equation}
We assume $\rho\in\mathcal{P}\left(\mathbb{R}^{d}\right)$ the probability
density with finite moments in all orders, thus we have $\mathcal{H}_{m,k}\rho<\infty$
and $\mathcal{H}_{v,kl}\rho<\infty$ for all $k,l$. Then \eqref{eq:dyn_obs}
fits into the general observation equation \eqref{eq:fpf_obs} by
setting $y_{t}=\left(\bar{u}_{t},R_{t}\right)^{\intercal}\in\mathbb{R}^{p}$
with $p=d+d^{2}$ in a column vector, and letting $\mathcal{H}=\left(\mathcal{H}_{m},\mathcal{H}_{v}\right)^{\intercal}$,
$h_{t}=\left(h_{m,t},h_{v,t}\right)^{\intercal}$, and $\Gamma_{t}=\mathrm{diag}\left(\Gamma_{m},\Gamma_{v}\right)$.
With this explicit setup of the filtering problem, we will consider
the optimal filtering solution for the probability density $\rho_{t}$
of $Z_{t}$ based on the statistical observation data $Y_{t}=\left\{ \left(\bar{u}_{s},R_{s}\right),s\leq t\right\} $. 

\subsection{The optimal filter with conditional Gaussian structure}

Let $\left(\Omega,\mathcal{F},\mathbb{P}\right)$ be the complete
probability space, and denote $\mathcal{P}\left(\mathbb{R}^{d}\right)$
as the space of probability density functions with bounded all moments.
We first define the $\mathcal{P}\left(\mathbb{R}^{d}\right)$-valued
stochastic process $\rho_{t}$ as 
\begin{equation}
\rho_{t}:\mathbb{R}^{d}\times\Omega\rightarrow\mathbb{R}^{+},\;\left(z,\omega\right)\mapsto\rho_{t}\left(z;\omega\right),\;\mathrm{with}\;\rho_{t}\left(\cdot;\omega\right)\in\mathcal{P}\left(\mathbb{R}^{d}\right),\label{eq:def_rf}
\end{equation}
which is thereafter referred to as a \emph{random field}. In contrast
to the standard filtering problem concerning the nonlinear SDE of
the random model states $Z_{t}$, for derivation purpose of the exact
optimal equations, we lift the problem into filtering the random field
$\rho_{t}$ based on the observation information in $y_{s},s\leq t$
as in \eqref{eq:fpf}. A stochastic model on $\mathbb{R}^{d}$ \eqref{eq:mf-fpf}
will be then proposed for practical implementations next in Section~\ref{sec:A-statistically-consistent}.
Let $\mathcal{G}_{t}=\sigma\left\{ \omega:y_{s},s\leq t\right\} $
be the $\sigma$-algebra generated by the observations up to time
$t$. We define the space as the collection of $\mathcal{G}_{t}$-measurable
square-integrable random fields
\begin{equation}
\mathcal{V}_{t}\coloneqq L^{2}\left(\Omega,\mathcal{G}_{t},\mathbb{P};\mathcal{P}\left(\mathbb{R}^{d}\right)\cap L^{2}\left(\mathbb{R}^{d}\right)\right),\label{eq:def_condprob}
\end{equation}
satisfying $\int\left\Vert \nu\left(\cdot;\omega\right)\right\Vert _{L^{2}\left(\mathbb{R}^{d}\right)}^{2}\mathrm{d}\mathbb{P}\left(\omega\right)<\infty$
and $\nu\left(\cdot;\omega\right)\in\mathcal{P}\left(\mathbb{R}^{d}\right)\cap L^{2}\left(\mathbb{R}^{d}\right)$
for $\nu\in\mathcal{V}_{t}$. In this infinite-dimensional filtering
problem, we aim to find the optimal approximation of $\rho_{t}$ in
the space $\mathcal{V}_{t}$. The \emph{optimal filtering solution}
$\hat{\rho}_{t}$ is then introduced as the least-square estimate
with the minimum variance as
\begin{equation}
\hat{\rho}_{t}\coloneqq\underset{\nu\in\mathcal{V}_{t}}{\arg\min}\:\mathbb{E}\left[\left\Vert \rho_{t}-\nu\right\Vert _{L^{2}\left(\mathbb{R}^{d}\right)}^{2}\right]=\mathsf{P}_{\mathcal{V}_{t}}\left[\rho_{t}\right],\label{eq:optim_cost}
\end{equation}
where the optimal solution $\hat{\rho}_{t}$ can be viewed as the
unbiased projection of $\rho_{t}$ onto the space $\mathcal{V}_{t}$.
\eqref{eq:optim_cost} indicates that $\hat{\rho}_{t}$ gives the
PDF estimation closest to the true distribution $\rho_{t}$ in the
mean square sense in agreement with the observations in $\mathcal{G}_{t}$. 

Accordingly, we define the \emph{optimal filter distribution} $\mu_{t}:\mathcal{P}\left(\mathbb{R}^{d}\right)\times\Omega\rightarrow\left[0,1\right]$
as the regular conditional measure of the stochastic process $\rho_{t}$
given $\mathcal{G}_{t}$. That is, for any Borel set $A\in\mathcal{B}\left(\mathcal{P}\left(\mathbb{R}^{d}\right)\right)$,
$\mu_{t}$ gives the conditional probability of $\rho_{t}$ given
$\mathcal{G}_{t}$
\begin{equation}
\mu_{t}\left(A;\omega\right)\coloneqq\mathbb{P}\left(\rho_{t}\in A\mid\mathcal{G}_{t}\right)\left(\omega\right),\quad\mathrm{a.s.}\:\omega\in\Omega.\label{eq:optim_meas}
\end{equation}
Notice that $\mu_{t}\left(A;\cdot\right)\in\mathcal{G}_{t}$ is still
a stochastic process. For any functional $F\in C\left(\mathcal{P}\left(\mathbb{R}^{d}\right)\right)$
and $t>0$, we can introduce the conditional expectation w.r.t. the
measure $\mu_{t}$ given $\mathcal{G}_{t}$ as
\[
\mathbb{E}\left[F\left(\rho_{t}\right)\mid\mathcal{G}_{t}\right]\coloneqq\int_{\mathcal{P}\left(\mathbb{R}^{d}\right)}F\left(\rho\right)\mu_{t}\left(\mathrm{d}\rho\right).
\]
Therefore, the optimal filter solution \eqref{eq:optim_cost} is a
random field
\begin{equation}
\hat{\rho}_{t}=\mathbb{E}\left[\rho_{t}\mid\mathcal{G}_{t}\right]\label{eq:optim_m}
\end{equation}
given by the conditional expectation of $\rho_{t}$ w.r.t. $\mu_{t}$.
For any linear operator $\mathcal{M}:L^{2}\left(\mathbb{R}^{d}\right)\rightarrow\mathbb{R}^{p}$,
we have $\mathcal{M}\hat{\rho}_{t}=\mathbb{E}\left[\mathcal{M}\rho_{t}\mid\mathcal{G}_{t}\right]$.
Furthermore, second moment of $\mathcal{M}\rho_{t}$ is given by
\[
\mathbb{E}\left[\left(\mathcal{M}\rho_{t}-\mathcal{M}\hat{\rho}_{t}\right)\left(\mathcal{M}\rho_{t}-\mathcal{M}\hat{\rho}_{t}\right)^{\intercal}\mid\mathcal{G}_{t}\right]=\mathcal{M}\hat{\mathcal{C}}_{t}\mathcal{M}^{*},
\]
where $\hat{\mathcal{C}}_{t}\left(\omega\right):L^{2}\left(\mathbb{R}^{d};\mathbb{R}^{p}\right)\rightarrow L^{2}\left(\mathbb{R}^{d};\mathbb{R}^{p}\right)$
with $\hat{\mathcal{C}}_{t}^{*}=\hat{\mathcal{C}}_{t}$ is the self-adjoint
covariance operator for any $f\in L^{2}\left(\mathbb{R}^{d};\mathbb{R}^{p}\right)$
\begin{equation}
\hat{\mathcal{C}}_{t}f=\mathbb{E}\left[\left(\rho-\hat{\rho}_{t}\right)\int_{\mathbb{R}^{d}}\left(\rho-\hat{\rho}_{t}\right)\left(z\right)f\left(z\right)\mathrm{d}z\mid\mathcal{G}_{t}\right].\label{eq:optim_v}
\end{equation}
Notice again $\hat{\mathcal{C}}_{t}f\left(z;\cdot\right)$ is also
a random field conditional on $\mathcal{G}_{t}$. For clarification
of notations, we will call $\hat{\rho}_{t}$ and $\hat{\mathcal{C}}_{t}$
the optimal filter solution for the mean and covariance, and $\mu_{t}$
the optimal filter distribution in the rest part of the paper.

In particular, we can characterize the optimal filter solution $\hat{\rho}_{t}$
as the best estimate in each order of moments. The following result
describes the accuracy of the filter approximations under any finite-dimensional
projections.
\begin{prop}
\label{prop:optim_mom}Let $\rho_{t}$ be the random field from the
system \eqref{eq:fpf} and $\hat{\rho}_{t}=\mathbb{E}\left[\rho_{t}\mid\mathcal{G}_{t}\right]$
the optimal filter solution given the observations in $\mathcal{G}_{t}$.
For any linear operator $\mathcal{M}:\mathcal{P}\left(\mathbb{R}^{d}\right)\rightarrow\mathbb{R}^{p}$
defined by $\mathcal{M}\rho=\int M\left(z\right)\rho\left(\mathrm{d}z\right)$
with $M\in C\left(\mathbb{R}^{d};\mathbb{R}^{p}\right)$, $\mathcal{M}\hat{\rho}_{t}=\mathbb{E}\left[\mathcal{M}\rho_{t}\mid\mathcal{G}_{t}\right]$
gives the best unbiased estimate of $\mathcal{M}\rho_{t}$ in the
sense of minimum mean square error, that is,
\begin{equation}
\mathbb{E}\left[\left|\mathcal{M}\rho_{t}-\mathcal{M}\hat{\rho}_{t}\right|^{2}\right]=\min_{\nu\in\mathcal{V}_{t}}\mathbb{E}\left[\left|\mathcal{M}\rho_{t}-\mathcal{M}\nu\right|^{2}\right],\quad\mathrm{with}\quad\mathbb{E}\left[\mathcal{M}\hat{\rho}_{t}\right]=\mathbb{E}\left[\mathcal{M}\rho_{t}\right].\label{eq:optim_mom}
\end{equation}
\end{prop}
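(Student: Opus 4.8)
The plan is to reduce this infinite-dimensional optimality statement to the classical characterization of conditional expectation as the $L^2$-orthogonal projection. The first step is to establish the commutation identity $\mathcal{M}\hat{\rho}_t = \mathbb{E}\left[\mathcal{M}\rho_t \mid \mathcal{G}_t\right]$, which is already asserted in the discussion preceding the statement. Since $\hat{\rho}_t = \mathbb{E}\left[\rho_t \mid \mathcal{G}_t\right]$ is the $L^2(\mathbb{R}^d)$-valued (Bochner) conditional expectation of the random field $\rho_t$, and $\mathcal{M}\rho = \int M(z)\rho(\mathrm{d}z)$ is a bounded linear operator from $L^2(\mathbb{R}^d)$ into the finite-dimensional space $\mathbb{R}^p$, the standard fact that Bochner conditional expectation commutes with bounded linear operators gives $\mathcal{M}\,\mathbb{E}\left[\rho_t\mid\mathcal{G}_t\right] = \mathbb{E}\left[\mathcal{M}\rho_t\mid\mathcal{G}_t\right]$. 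Here I would use that $M \in C(\mathbb{R}^d;\mathbb{R}^p)$ together with the standing assumption that every $\rho \in \mathcal{P}(\mathbb{R}^d)$ carries finite moments of all orders to guarantee $\mathcal{M}\rho_t \in L^2(\Omega;\mathbb{R}^p)$, so that the conditional expectation on the right is well defined.

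Next I would invoke the projection property. Writing $X \coloneqq \mathcal{M}\rho_t \in \mathbb{R}^p$, the conditional expectation $\mathbb{E}\left[X\mid\mathcal{G}_t\right]$ is exactly the orthogonal projection of $X$ onto the closed subspace of all $\mathcal{G}_t$-measurable square-integrable $\mathbb{R}^p$-valued random variables. The Pythagorean identity $\mathbb{E}|X-\xi|^2 = \mathbb{E}|X-\mathbb{E}[X\mid\mathcal{G}_t]|^2 + \mathbb{E}|\mathbb{E}[X\mid\mathcal{G}_t]-\xi|^2$, valid for every $\mathcal{G}_t$-measurable $\xi \in L^2$, then shows that $\mathbb{E}\left[X\mid\mathcal{G}_t\right]$ is the unique minimizer of the mean square error over this subspace. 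To connect this to the minimization over $\mathcal{V}_t$ in \eqref{eq:optim_mom}, I would observe that for every $\nu\in\mathcal{V}_t$ the vector $\mathcal{M}\nu$ is $\mathcal{G}_t$-measurable (because $\nu$ is) and square-integrable, hence lies in that subspace; consequently $\min_{\nu\in\mathcal{V}_t}\mathbb{E}|\mathcal{M}\rho_t-\mathcal{M}\nu|^2 \ge \mathbb{E}|X-\mathbb{E}[X\mid\mathcal{G}_t]|^2$. The reverse inequality holds because $\hat{\rho}_t\in\mathcal{V}_t$ and $\mathcal{M}\hat{\rho}_t=\mathbb{E}[X\mid\mathcal{G}_t]$ by the first step, so the lower bound is attained, which yields \eqref{eq:optim_mom}. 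The unbiasedness $\mathbb{E}[\mathcal{M}\hat{\rho}_t]=\mathbb{E}[\mathcal{M}\rho_t]$ is then immediate from the tower property applied to $\mathbb{E}\bigl[\mathbb{E}[\mathcal{M}\rho_t\mid\mathcal{G}_t]\bigr]$.

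The main obstacle is not the algebra but the functional-analytic bookkeeping underlying the first step. I expect the delicate points to be (i) verifying that $\hat{\rho}_t$ genuinely lies in $\mathcal{V}_t$, that is, that the conditional expectation of the random field $\rho_t$ is again almost surely a probability density in $\mathcal{P}(\mathbb{R}^d)\cap L^2(\mathbb{R}^d)$, which follows because conditional expectation preserves nonnegativity and unit mass and is an $L^2$-contraction, but should be recorded carefully; and (ii) justifying the interchange of $\mathcal{M}$ with conditional expectation, which requires $\rho_t$ to be Bochner-integrable as an $L^2(\mathbb{R}^d)$-valued random variable and $\mathcal{M}$ to be bounded on the effective domain. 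A subtlety worth flagging is that $\mathcal{V}_t$ is a convex set rather than a linear subspace, owing to the constraint $\nu\in\mathcal{P}(\mathbb{R}^d)$, so one cannot directly appeal to a linear projection onto $\mathcal{V}_t$; the argument instead relies on the fact that the unconstrained minimizer over all $\mathcal{G}_t$-measurable vectors is already realized by an element of $\mathcal{V}_t$, so imposing the constraint does not raise the minimum.
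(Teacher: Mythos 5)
Your proposal is correct and follows essentially the same route as the paper: both arguments reduce to the identity $\mathcal{M}\hat{\rho}_{t}=\mathbb{E}\left[\mathcal{M}\rho_{t}\mid\mathcal{G}_{t}\right]$ followed by the Pythagorean/tower-property characterization of conditional expectation as the $L^{2}$-orthogonal projection onto $\mathcal{G}_{t}$-measurable square-integrable variables, with the paper simply writing out the orthogonality computation by expanding $\mathbb{E}\left[\left|\mathcal{M}\rho_{t}-\mathcal{M}\nu\right|^{2}\right]-\mathbb{E}\left[\left|\mathcal{M}\rho_{t}-\mathcal{M}\hat{\rho}_{t}\right|^{2}\right]$ directly. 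Your additional bookkeeping --- the Bochner-commutation justification, the observation that $\mathcal{V}_{t}$ is only convex, and the check that $\hat{\rho}_{t}\in\mathcal{V}_{t}$ so the constrained and unconstrained minima coincide --- is a welcome refinement of points the paper leaves implicit, not a different argument.
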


\begin{proof}
For any $\mathcal{G}_{t}$-measurable square-integrable stochastic
process $\nu$, we have from direct computation
\begin{align*}
 & \mathbb{E}\left[\left|\mathcal{M}\rho_{t}-\mathcal{M}\nu\right|^{2}\right]-\mathbb{E}\left[\left|\mathcal{M}\rho_{t}-\mathcal{M}\hat{\rho}_{t}\right|^{2}\right]\\
= & \mathbb{E}\left[\left(\mathcal{M}\hat{\rho}_{t}-\mathcal{M}\nu\right)\cdot\left(2\mathcal{M}\rho_{t}-\mathcal{M}\nu-\mathcal{M}\hat{\rho}_{t}\right)\right]\\
= & \mathbb{E}\left\{ \mathbb{E}\left[\left(\mathcal{M}\hat{\rho}_{t}-\mathcal{M}\nu\right)\cdot\left(2\mathcal{M}\rho_{t}-\mathcal{M}\nu-\mathcal{M}\hat{\rho}_{t}\right)\mid\mathcal{G}_{t}\right]\right\} \\
= & \mathbb{E}\left\{ \left(\mathcal{M}\hat{\rho}_{t}-\mathcal{M}\nu\right)\cdot\left[\mathbb{E}\left[2\mathcal{M}\rho_{t}\mid\mathcal{G}_{t}\right]-\left(\mathcal{M}\nu+\mathcal{M}\hat{\rho}_{t}\right)\right]\right\} \\
= & \mathbb{E}\left[\left|\mathcal{M}\hat{\rho}_{t}-\mathcal{M}\nu\right|^{2}\right]\geq0.
\end{align*}
Above, the third equality uses the fact $\mathcal{M}\hat{\rho}_{t}-\mathcal{M}\nu=\mathbb{E}\left[\mathcal{M}\rho_{t}\mid\mathcal{G}_{t}\right]-\mathcal{M}\nu$
is $\mathcal{G}_{t}$-measurable. Thus, we get $\mathcal{M}\hat{\rho}_{t}$
minimizes the mean square error. The consistency under the expectation
in $\mathcal{M}\hat{\rho}_{t}$ and $\mathcal{M}\rho_{t}$ can be
directly implied by definition. 
\end{proof}
By taking the operator $\mathcal{M}$ as the expectation on $M\left(Z_{t}\right)=Z_{t}^{m}$
with any integer $m$, $\mathcal{M}\rho_{t}$ and $\mathcal{M}\hat{\rho}_{t}$
give the $m$-th order moments of $Z_{t}$ under the random field
$\rho_{t}$ in \eqref{eq:def_rf} and optimal filter approximation
$\hat{\rho}_{t}$. A direct implication from \eqref{eq:optim_mom}
shows that we have the unbiased statistics in all high-order moments
$\mathbb{E}\left[\mathcal{M}\hat{\rho}_{t}\right]=\mathbb{E}\left[\mathcal{M}\rho_{t}\right]$
with the minimum error $\mathbb{E}\left[\left|\mathcal{M}\rho_{t}-\mathcal{M}\hat{\rho}_{t}\right|^{2}\right]$
from the optimal filter solution.

Importantly, the model equations \eqref{eq:fpf} satisfy the desirable
conditional Gaussian process \cite{liptser2013statistics}, that is,
given the observations of $Y_{t}=\left\{ y_{s}=\left(\bar{u}_{s},R_{s}\right),s\leq t\right\} $
and Gaussian initial state $\rho_{0}$, the random field $\rho_{t}$
follows a Gaussian distribution at each time $t$. Let $\rho_{t}=\rho_{t}\left(\cdot;\omega\right)$
be the (unknown) signal state satisfying linear dynamics \eqref{eq:fpf_model},
and $y_{t}=y_{t}\left(\omega\right)$ the observed statistical process
subject to linear observation operators \eqref{eq:fpf_obs}. The optimal
filter distribution $\mu_{t}$ \eqref{eq:optim_meas} conditional
on $Y_{t}$ then becomes an infinite-dimensional Gaussian distribution,
$\mu_{t}\left(\cdot,\omega\right)=\mathcal{N}\left(\hat{\rho}_{t},\hat{\mathcal{C}}_{t}\right)\left(\omega\right)$,
where the mean $\hat{\rho}_{t}\left(\cdot;\omega\right)$ and covariance
$\hat{\mathcal{C}}_{t}\left(\omega\right)$ give the solution to \eqref{eq:optim_m}
and \eqref{eq:optim_v} respectively. Therefore, the equations for
the mean and covariance are given by the generalized version of Kalman-Bucy
(KB) filter for the infinite-dimensional conditional Gaussian processes 

\begin{equation}
\begin{aligned}\mathrm{d}\hat{\rho}_{t}= & \:\mathcal{L}_{t}^{*}\left(\bar{u}_{t},R_{t}\right)\hat{\rho}_{t}\mathrm{d}t+\mathcal{\hat{C}}_{t}\mathcal{H}_{m}^{*}\Gamma_{m}^{-2}\left\{ \mathrm{d}\bar{u}_{t}-\left[\mathcal{H}_{m}\hat{\rho}_{t}+h_{m,t}\left(\bar{u}_{t}\right)\right]\mathrm{d}t\right\} \\
 & +\hat{\mathcal{C}}_{t}\mathcal{H}_{v}^{*}\Gamma_{v}^{-2}\left\{ \mathrm{d}R_{t}-\left[\mathcal{H}_{v}\hat{\rho}_{t}+h_{v,t}\left(\bar{u},R\right)\right]\mathrm{d}t\right\} ,\\
\mathrm{d}\hat{\mathcal{C}}_{t}= & \left[\mathcal{L}_{t}^{*}\left(\bar{u}_{t},R_{t}\right)\mathcal{\hat{C}}_{t}+\mathcal{\hat{C}}_{t}\mathcal{L}_{t}\left(\bar{u}_{t},R_{t}\right)\right]\mathrm{d}t-\mathcal{\hat{C}}_{t}\left(\mathcal{H}_{m}^{*}\Gamma_{m}^{-2}\mathcal{H}_{m}+\mathcal{H}_{v}^{*}\Gamma_{v}^{-2}\mathcal{H}_{v}\right)\mathcal{\hat{C}}_{t}\mathrm{d}t.
\end{aligned}
\label{eq:KB-fpf}
\end{equation}
The equations of the conditional Gaussian processes and the uniqueness
of the solutions are developed in Chapter 12 of \cite{liptser2013statistics}
for finite dimensional systems. The results are then generalized to
infinite-dimensional Hilbert space \cite{curtain1975survey,falb1967infinite}
(see a summary of the main results in Appendix~\ref{sec:Background-about-filtering}).
The system \eqref{eq:KB-fpf} gives a closed set of coupled SPDEs
(due to the randomness in $\bar{u}_{t},R_{t}$) for the optimal filter
solution $\hat{\rho}_{t}$ enabling more detailed analysis and development
of practical methods for computing the optimal solution.
\begin{rem*}
The filtering problem using statistical observations is originally
introduced as the ensemble Fokker-Planck filter in \cite{bach2023filtering}.
We propose the filtering equations \eqref{eq:KB-fpf} as a further
generalization where nonlinearly coupled conditional processes are
involved. In particular, this new filtering model directly fits into
the coupled stochastic-statistical modeling framework \eqref{eq:closure_model}.
\end{rem*}

\subsection{A surrogate filtering model for approximating the optimal filter solution\protect\label{subsec:A-statistical-filtering}}

The resulting optimal filtering problem from \eqref{eq:fpf} requires
to solve the infinite-dimensional system \eqref{eq:KB-fpf} concerning
the function $\hat{\rho}_{t}\in\mathcal{P}\left(\mathbb{R}^{d}\right)$
and operator $\hat{\mathcal{C}}_{t}$ on $\mathcal{P}\left(\mathbb{R}^{d}\right)$.
It usually becomes intractable in finding such infinite-dimensional
solutions from direct methods. In developing practical strategies
to realize the optimal filter solution, it is more useful to find
a surrogate model for the stochastic process $\tilde{Z}_{t}$, based
on which effective ensemble-based approaches can be built. Therefore,
we aim to construct an approximating filtering model from designing
a new dynamical equation for $\tilde{Z}_{t}$, whose PDF $\tilde{\rho}_{t}$
can effectively represent that of the optimal filter solution $\hat{\rho}_{t}$.

\subsubsection{Filtering updating cycle in a split two-step procedure}

For a clear characterization of the filtering process, we follow the
general procedure in \cite{calvello2022ensemble} to first describe
the filtering process by concatenated iterations of transporting maps
on the corresponding probability distribution. We propose a new stochastic
process $\tilde{Z}_{t}$, whose law $\tilde{\rho}_{t}\in\mathcal{V}_{t}$
is still a $\mathcal{P}\left(\mathbb{R}^{d}\right)$-valued random
field in $\mathcal{G}_{t}$ dependent on the same statistical observation
$y_{t}$ as in the optimal filter \eqref{eq:KB-fpf}. Thus, the filtering
updating cycle during the time interval $\left[t,t+\tau\right]$ can
be characterized by the transport of the probability density $\tilde{\rho}_{t}$
of $\tilde{Z}_{t}$. 

The discrete time update of the approximation filter PDF $\tilde{\rho}_{t}$
is carried out in a split two-step procedure. First, the \emph{forecast
step} can be viewed as the push-forward operator acting on the probability
density function at time instant $t$ with time step $\tau>0$
\begin{equation}
\tilde{\rho}_{t}\rightarrow\tilde{\rho}_{t+\tau}^{-}\coloneqq\mathcal{F}_{t}^{\tau}\tilde{\rho}_{t}=e^{\int_{t}^{t+\tau}\mathcal{L}_{s}^{*}\left(y_{s}\right)\mathrm{d}s}\tilde{\rho}_{t},\label{eq:for_op}
\end{equation}
where $\mathcal{F}_{t}^{\tau}$ represents the forecast updating operator
with forward time step $\tau$, and $\mathcal{L}_{t}\left(y_{t}\right)$
is the same generator as in \eqref{eq:dyn_pdf}. Second, the \emph{analysis
step} updates the prior distribution $\tilde{\rho}^{-}$ to the posterior
distribution $\tilde{\rho}^{+}$ by incorporating the observation
data up to $Y_{t+\tau}=\left\{ y_{s},s\leq t+\tau\right\} $, that
is
\begin{equation}
\tilde{\rho}_{t+\tau}^{-}\rightarrow\tilde{\rho}_{t+\tau}^{+}\coloneqq\mathcal{A}_{t}^{\tau}\left(\tilde{\rho}_{t+\tau}^{-};Y_{t+\tau}\right),\label{eq:ana_op}
\end{equation}
where $\mathcal{A}_{t}^{\tau}$ represents the analysis updating operator.
Therefore, the full filtering cycle from $t$ to $t+\tau$ can be
summarized as the composition of the forecast and analysis maps
\begin{equation}
\tilde{\rho}_{t+\tau}=\mathcal{A}_{t}^{\tau}\left(\mathcal{F}_{t}^{\tau}\tilde{\rho}_{t};Y_{t+\tau}\right).\label{eq:filter_op}
\end{equation}
Notice that $\mathcal{F}_{t}^{\tau}$ is a linear operator on $\tilde{\rho}_{t}$,
while $\mathcal{A}_{t}^{\tau}$ could contain nonlinear actions due
to the normalization of the probability distribution. Continuous equation
for $\partial_{t}\tilde{\rho}_{t}=\lim_{\tau\rightarrow0}\frac{1}{\tau}\left(\tilde{\rho}_{t+\tau}-\tilde{\rho}_{t}\right)$
is then achieved by letting the discrete time step $\tau\rightarrow0$.
Next, we first propose the general structure of the new filtering
model for $\tilde{Z}_{t}\sim\tilde{\rho}_{t}$ as a combination of
the above two-step procedure, then detailed analysis can be done according
to the design of the forecast and analysis step operators $\mathcal{F}_{t}^{\tau}$
and $\mathcal{A}_{t}^{\tau}$ accordingly. 

\subsubsection{Construction of equivalent statistical approximating filter}

For simplicity of notations, we still use the general statistical
observation processes \eqref{eq:dyn_obs} for $y_{t}=\left(\bar{u}_{t},R_{t}\right)$
taking the compact form
\[
\mathrm{d}y_{t}=\left[\mathcal{H}\rho_{t}+h_{t}\left(y_{t}\right)\right]\mathrm{d}t+\Gamma_{t}\mathrm{d}B_{t},
\]
where the general observation operator \eqref{eq:operators_obs},
$\mathcal{H}\rho_{t}=\int H\left(z\right)\rho_{t}\left(z\right)\mathrm{d}z$,
is defined with the general observation function $H\in C\left(\mathbb{R}^{d};\mathbb{R}^{p}\right)$
acting on the density function $\rho_{t}$. Following the general
construction in \cite{crisan2010approximate,pathiraja2021mckean},
we seek the approximating filter model adopting the following McKean-Vlasov
representation with undetermined functionals $a_{t},K_{t}$ 
\begin{equation}
\begin{aligned}\mathrm{d}\tilde{Z}_{t}= & \:L\left(\bar{u}_{t}\right)\tilde{Z}_{t}\mathrm{d}t+\varGamma\left(\tilde{Z}_{t}\tilde{Z}_{t}^{\intercal}-R_{t}\right)\mathrm{d}t+\Sigma_{t}\mathrm{d}\tilde{W}_{t}\\
+ & a_{t}\left(\tilde{Z}_{t};\tilde{\rho}_{t}\right)\mathrm{d}t+K_{t}\left(\tilde{Z}_{t};\tilde{\rho}_{t}\right)\left\{ \mathrm{d}y_{t}-\left[H\left(\tilde{Z}_{t}\right)+h_{t}\left(y_{t}\right)\right]\mathrm{d}t-\Gamma_{t}\mathrm{d}\tilde{B}_{t}\right\} .
\end{aligned}
\label{eq:mf-fpf}
\end{equation}
We use the name `statistical filtering' to refer the above new model
emphasizing our main goal of filtering statistical moments different
from the common filtering case. The first row of the above equation
models the forecast step of the filtering process, while the second
row is the analysis step. The forecast step accepts the same dynamical
model of \eqref{eq:fpf_model} dependent on the mean and covariance
$\left(\bar{u}_{t},R_{t}\right)$. On the other hand, the analysis
step serves as an additional control correction over statistical observations
$y_{t}$. New functionals known as the drift $a_{t}:\mathbb{R}^{d}\times\mathcal{P}\left(\mathbb{R}^{d}\right)\rightarrow\mathbb{R}^{d}$
and the control gain operator $K_{t}:\mathbb{R}^{d}\times\mathcal{P}\left(\mathbb{R}^{d}\right)\rightarrow\mathbb{R}^{d\times p}$
are introduced, resulting in an approximating filtering model about
the process $\tilde{Z}_{t}$. Most importantly, as will be shown next
in Theorem~\ref{thm:consist_analy} under proper condition, $a_{t}$
and $K_{t}$ are only implicitly dependent on $\tilde{\rho}_{t}$
through the leading moments, without the need to compute the (potentially
highly non-Gaussian) density function $\tilde{\rho}_{t}$ explicitly.

In a more clear identification of the filtering updates involving
several levels of approximations, we take the split-step strategy
to analyze the coupled forecast step and analysis step of the filtering
equation \eqref{eq:mf-fpf} separately. In particular, the forecast
step in the first row of \eqref{eq:mf-fpf} is given by the exactly
same form as the stochastic-statistical closure equations \eqref{eq:closure_model}
developed in Section~\ref{sec:A-statistically-consistent}. Thus
in practice, the updating step with the forecast operator can be implemented
adopting the efficient uncertainty prediction methods such as \cite{majda2018strategies,qi2023high}.
Then, the remaining task is to propose proper analysis step update
in the second line of \eqref{eq:mf-fpf} concerning consistent statistics
with the optimal solution $\hat{\rho}_{t}$ in \eqref{eq:KB-fpf}.

\subsection{Statistical consistency in analysis step update of the approximating
filter\protect\label{subsec:Statistical-consistency}}

Now, we focus on updating posterior PDF $\tilde{\rho}_{t}$ in \eqref{eq:ana_op}
of the proposed approximating filter \eqref{eq:mf-fpf} based on the
statistical observation $y_{t}$ satisfying \eqref{eq:dyn_obs}. Concentrating
on the analysis step, the resulting optimal filter equations \eqref{eq:KB-fpf}
for the mean and covariance $\left(\hat{\rho}_{t},\hat{\mathcal{C}}_{t}\right)$
become
\begin{equation}
\begin{aligned}\mathrm{d}\hat{\rho}_{t}= & \:\mathcal{\hat{C}}_{t}\mathcal{H}^{*}\Gamma_{t}^{-2}\left\{ \mathrm{d}y_{t}-\left[\mathcal{H}\hat{\rho}_{t}+h_{t}\left(y_{t}\right)\right]\mathrm{d}t\right\} ,\\
\mathrm{d}\mathcal{\hat{C}}_{t}= & -\hat{\mathcal{C}}_{t}\mathcal{H}^{*}\Gamma_{t}^{-2}\mathcal{H}\mathcal{\hat{C}}_{t}\mathrm{d}t.
\end{aligned}
\label{eq:KB-quad}
\end{equation}
Correspondingly, the approximating statistical filtering model for
$\tilde{Z}_{t}$ satisfies the second line of the SDE \eqref{eq:mf-fpf}
as
\begin{equation}
\begin{aligned}\mathrm{d}\tilde{Z}_{t}= & \:a_{t}\left(\tilde{Z}_{t}\right)\mathrm{d}t+K_{t}\left(\tilde{Z}_{t}\right)\left\{ \mathrm{d}y_{t}-\left[H\left(\tilde{Z}_{t}\right)+h_{t}\left(y_{t}\right)\right]\mathrm{d}t-\Gamma_{t}\mathrm{d}\tilde{B}_{t}\right\} .\end{aligned}
\label{eq:mf-fpf-quad}
\end{equation}
Following the similar idea in the McKean-Vlasov representation of
the filtering equation \cite{pathiraja2021mckean,yang2014continuous},
we expect the PDF $\tilde{\rho}_{t}$ of $\tilde{Z}_{t}$ to satisfy
the following Kushner-Stratonovich-type equation (with requirements
on $a_{t},K_{t}$ given next in \eqref{eq:coeff_sde})
\begin{equation}
\frac{\partial\tilde{\rho}_{t}}{\partial t}=\left[H\left(z\right)-\mathcal{H}\tilde{\rho}_{t}\right]^{\intercal}\Gamma_{t}^{-2}\left[\frac{\mathrm{d}y_{t}}{\mathrm{d}t}-\mathcal{H}\tilde{\rho}_{t}-h_{t}\left(y_{t}\right)\right]\tilde{\rho}_{t}.\label{eq:KS-fpf-quad}
\end{equation}
Again, the goal here is to approximate the optimal filter mean $\hat{\rho}_{t}$
in \eqref{eq:KB-quad} by $\tilde{\rho}_{t}$ generated by the surrogate
SDE model \eqref{eq:mf-fpf-quad} in the sense of consistent statistics. 

Unfortunately, the approximation \eqref{eq:KS-fpf-quad} and the optimal
filtering equation \eqref{eq:KB-quad} will in general have different
continuous solutions for $\tilde{\rho}_{t}$ and $\hat{\rho}_{t}$
due to their distinctive dynamics. In order to compare the statistical
moments of the two distributions, we apply the linear operator $\mathcal{H}$
to the optimal equations \eqref{eq:KB-quad} as a finite-dimensional
projection on leading moments. The resulting equations optimal mean
and covariance equations become finite dimensional as
\begin{equation}
\begin{aligned}\mathrm{d}\left(\mathcal{H}\hat{\rho}_{t}\right)= & \left(\mathcal{H}\mathcal{\hat{C}}_{t}\mathcal{H}^{*}\right)\Gamma_{t}^{-2}\left\{ \mathrm{d}y_{t}-\left[\mathcal{H}\hat{\rho}_{t}+h_{t}\left(y_{t}\right)\right]\mathrm{d}t\right\} ,\\
\mathrm{d}\left(\mathcal{H}\mathcal{\hat{C}}_{t}\mathcal{H}^{*}\right)= & -\left(\mathcal{H}\mathcal{\hat{C}}_{t}\mathcal{H}^{*}\right)\Gamma_{t}^{-2}\left(\mathcal{H}\mathcal{\hat{C}}_{t}\mathcal{H}^{*}\right)\mathrm{d}t.
\end{aligned}
\label{eq:dyn_h_opt}
\end{equation}
Above, remind that the observation operator $\mathcal{H}:L^{2}\left(\mathbb{R}^{d}\right)\rightarrow\mathbb{R}^{p}$
and its adjoint $\mathcal{H}^{*}:\mathbb{R}^{p}\rightarrow L^{2}\left(\mathbb{R}^{d}\right)$
are defined based on the observation function $H\in C_{b}^{2}\left(\mathbb{R}^{d};\mathbb{R}^{p}\right)$
as
\[
\mathcal{H}\rho=\int H\left(z\right)\rho\left(z\right)\mathrm{d}z,\quad\left[\mathcal{H}^{*}u\right]\left(z\right)=u\cdot H\left(z\right),
\]
and the covariance operator $\hat{\mathcal{C}}_{t}:L^{2}\left(\mathbb{R}^{d}\right)\rightarrow L^{2}\left(\mathbb{R}^{d}\right)$
is defined in \eqref{eq:optim_v}. Therefore, \eqref{eq:dyn_h_opt}
gives the equations for the finite-dimensional quantities $\mathcal{H}\hat{\rho}_{t}\in\mathbb{R}^{p}$
and $\mathcal{H}\mathcal{\hat{C}}_{t}\mathcal{H}^{*}\in\mathbb{R}^{p\times p}$
as the first two moments of $H$ w.r.t. $\hat{\rho}_{t}$. The idea
is to design the analysis step operator $\mathcal{A}_{t}^{\tau}$
for the approximating filter process $\tilde{Z}_{t}$ in \eqref{eq:mf-fpf-quad},
so that consistency in the first and second-order moments can be achieved.

Denote the expectation, $\tilde{\mathbb{E}}\left[\cdot\right]\coloneqq\mathbb{E}\left[\cdot\mid\mathcal{G}_{t}\right]=\mathbb{E}_{\tilde{\rho}_{t}}\left[\cdot\right]$,
w.r.t. the conditional density $\tilde{\rho}_{t}$ in \eqref{eq:KS-fpf-quad}
given the same observation process in $\mathcal{G}_{t}$. We define
$\bar{H}_{t}=\tilde{\mathbb{E}}H\left(\tilde{Z}_{t}\right)=\int H\left(z\right)\tilde{\rho}_{t}\left(z\right)\mathrm{d}z$
and $C_{t}^{H}=\tilde{\mathbb{E}}\left[H_{t}\left(\tilde{Z}_{t}\right)-\bar{H}_{t}\right]\left[H_{t}\left(\tilde{Z}_{t}\right)-\bar{H}_{t}\right]^{\intercal}$
as the first and second-order moments of $H$ w.r.t. $\tilde{\rho}_{t}$
. Assume that the drift $a_{t}$ and gain $K_{t}$ in the SDE approximation
\eqref{eq:mf-fpf-quad} satisfy the following identities
\begin{equation}
a_{t}=\nabla\cdot\left(K_{t}\Gamma_{t}^{2}K_{t}^{\intercal}\right)-K_{t}\Gamma_{t}^{2}\nabla\cdot K_{t}^{\intercal},\quad-\nabla\cdot\left(K_{t}^{\intercal}\tilde{\rho}_{t}\right)=\tilde{\rho}_{t}\Gamma_{t}^{-2}\left(H\left(z\right)-\tilde{\mathbb{E}}H\right),\label{eq:coeff_sde}
\end{equation}
where the divergence on a matrix is defined columnwise as $\left(\nabla\cdot A\right)_{i}=\sum_{j}\partial_{z_{j}}A_{ij}.$We
first have the following result concerning the evolution equations
of $\bar{H}_{t}$ and $C_{t}^{H}$ given the realization $Y_{t}=\left\{ y_{s},s\leq t\right\} $.
\begin{lem}
\label{lem:analysis}Given that $\Gamma_{t}\succ0$ in \eqref{eq:fpf}
and the identities \eqref{eq:coeff_sde} are satisfied, the evolution
equations for the mean and covariance of the observation function
$H\left(\tilde{Z}_{t}\right)$ associated with the SDE \eqref{eq:mf-fpf-quad}
are given by
\begin{equation}
\begin{aligned}\mathrm{d}\bar{H}_{t}= & \:C_{t}^{H}\Gamma_{t}^{-2}\left\{ \mathrm{d}y_{t}-\left[\bar{H}_{t}+h_{t}\left(y_{t}\right)\right]\mathrm{d}t\right\} ,\\
\mathrm{d}C_{t}^{H}= & \:Q_{t}^{H}\Gamma_{t}^{-2}\left\{ \mathrm{d}y_{t}-\left[\bar{H}_{t}+h_{t}\left(y_{t}\right)\right]\mathrm{d}t\right\} -C_{t}^{H}\Gamma_{t}^{-2}C_{t}^{H}\mathrm{d}t,
\end{aligned}
\label{eq:dyn_h}
\end{equation}
where $Q_{t}^{H}:\mathbb{R}^{p}\rightarrow\mathbb{R}^{p\times p}$
is defined as
\[
Q_{t}^{H}=\tilde{\mathbb{E}}\left[\left(H_{t}^{\prime}H_{t}^{\prime\intercal}\right)\otimes H_{t}^{\prime\intercal}\right],
\]
containing third moments of $H_{t}^{\prime}=H\left(\tilde{Z}_{t}\right)-\bar{H}_{t}$.
\end{lem}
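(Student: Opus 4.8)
The plan is to route the entire computation through the Kushner--Stratonovich-type density equation \eqref{eq:KS-fpf-quad} and then read off the first two moments of $H$ by testing $\tilde{\rho}_{t}$ against the linear functionals $\rho\mapsto\int H\rho$ and $\rho\mapsto\int HH^{\intercal}\rho$. The first task is to justify that, under the two identities \eqref{eq:coeff_sde}, the conditional law $\tilde{\rho}_{t}$ of $\tilde{Z}_{t}$ generated by \eqref{eq:mf-fpf-quad} actually solves \eqref{eq:KS-fpf-quad}. I would establish this by writing the forward (Fokker--Planck) equation for $\tilde{\rho}_{t}$ conditioned on $\mathcal{G}_{t}$: once the observation path is frozen, the only genuinely diffusive contribution comes from the particle's own noise $\Gamma_{t}\mathrm{d}\tilde{B}_{t}$ carried by $K_{t}$, producing the second-order operator $\tfrac{1}{2}\nabla\cdot\nabla\cdot(K_{t}\Gamma_{t}^{2}K_{t}^{\intercal}\tilde{\rho}_{t})$, while the common increment $\mathrm{d}y_{t}$ enters only through the first-order term $-[\nabla\cdot(K_{t}^{\intercal}\tilde{\rho}_{t})]^{\intercal}\mathrm{d}y_{t}$. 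The first identity in \eqref{eq:coeff_sde} is precisely the drift correction that cancels the spurious It\^{o} drift left by the diffusion operator, and the second identity rewrites $-\nabla\cdot(K_{t}^{\intercal}\tilde{\rho}_{t})$ as the multiplicative weight $\tilde{\rho}_{t}\Gamma_{t}^{-2}(H-\mathcal{H}\tilde{\rho}_{t})$; together they collapse all transport and diffusion into the purely multiplicative innovation structure on the right of \eqref{eq:KS-fpf-quad}. This is where the hypotheses are consumed, and I would lean on the McKean--Vlasov filter computations of \cite{crisan2010approximate,pathiraja2021mckean} for the bookkeeping.

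Granting \eqref{eq:KS-fpf-quad}, read as the It\^{o} SPDE $\mathrm{d}\tilde{\rho}_{t}=\tilde{\rho}_{t}\,[H-\bar{H}_{t}]^{\intercal}\Gamma_{t}^{-2}\,\mathrm{d}I_{t}$ with innovation $\mathrm{d}I_{t}:=\mathrm{d}y_{t}-[\bar{H}_{t}+h_{t}(y_{t})]\mathrm{d}t$, the mean equation is immediate. Since $\bar{H}_{t}=\int H(z)\tilde{\rho}_{t}(z)\,\mathrm{d}z$ is a deterministic linear functional of $\tilde{\rho}_{t}$, applying it termwise commutes with the stochastic integral and introduces no It\^{o} correction, giving $\mathrm{d}\bar{H}_{t}=(\int H(z)[H(z)-\bar{H}_{t}]^{\intercal}\tilde{\rho}_{t}\,\mathrm{d}z)\,\Gamma_{t}^{-2}\,\mathrm{d}I_{t}$. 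The centering identity $\int\bar{H}_{t}[H-\bar{H}_{t}]^{\intercal}\tilde{\rho}_{t}\,\mathrm{d}z=0$ collapses the bracket to $C_{t}^{H}$, which is the first line of \eqref{eq:dyn_h}.

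For the covariance I would write $C_{t}^{H}=S_{t}-\bar{H}_{t}\bar{H}_{t}^{\intercal}$ with $S_{t}=\int HH^{\intercal}\tilde{\rho}_{t}\,\mathrm{d}z$, differentiate $S_{t}$ by the same linear-functional argument to get $\mathrm{d}S_{t}=(\int HH^{\intercal}[H-\bar{H}_{t}]^{\intercal}\tilde{\rho}_{t}\,\mathrm{d}z)\,\Gamma_{t}^{-2}\,\mathrm{d}I_{t}$, and then apply the It\^{o} product rule to $\bar{H}_{t}\bar{H}_{t}^{\intercal}$. This last step is the crux of the moment computation: because $\mathrm{d}\bar{H}_{t}=C_{t}^{H}\Gamma_{t}^{-2}\mathrm{d}I_{t}$ is a genuine It\^{o} differential, the product rule contributes the cross-variation $C_{t}^{H}\Gamma_{t}^{-2}\,\mathrm{d}\langle I\rangle_{t}\,\Gamma_{t}^{-2}C_{t}^{H}$, and since $y_{t}$ has diffusion coefficient $\Gamma_{t}$ we have $\mathrm{d}\langle I\rangle_{t}=\Gamma_{t}^{2}\,\mathrm{d}t$, so this reduces exactly to the Riccati-type decay $C_{t}^{H}\Gamma_{t}^{-2}C_{t}^{H}\,\mathrm{d}t$ in the second line of \eqref{eq:dyn_h}. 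Subtracting $\mathrm{d}(\bar{H}_{t}\bar{H}_{t}^{\intercal})$, the surviving coefficient of $\mathrm{d}I_{t}$ is the uncentered tensor $\int HH^{\intercal}[H-\bar{H}_{t}]^{\intercal}\tilde{\rho}_{t}\,\mathrm{d}z$ corrected by the two $\bar{H}_{t}$-contractions coming from $(\mathrm{d}\bar{H}_{t})\bar{H}_{t}^{\intercal}+\bar{H}_{t}(\mathrm{d}\bar{H}_{t})^{\intercal}$; I would verify in components $(i,j,l)$ that these recombine into the fully centered third moment $\overline{H_{i}^{\prime}H_{j}^{\prime}H_{l}^{\prime}}$, that is $Q_{t}^{H}=\tilde{\mathbb{E}}[(H_{t}^{\prime}H_{t}^{\prime\intercal})\otimes H_{t}^{\prime\intercal}]$.

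The main obstacle is twofold. The genuinely delicate point of rigor is the first step, namely verifying that the McKean--Vlasov SDE \eqref{eq:mf-fpf-quad} reproduces the conditional filtering density, which requires carefully separating the common observation noise in $\mathrm{d}y_{t}$ from the particle's independent noise $\tilde{B}_{t}$ when conditioning on $\mathcal{G}_{t}$, and it is here that both identities \eqref{eq:coeff_sde} are used. The part most prone to algebraic slips is the covariance step, where one must retain the It\^{o} cross-variation term (it alone supplies the $-C_{t}^{H}\Gamma_{t}^{-2}C_{t}^{H}$ contribution) and correctly recombine the uncentered third-moment tensor with the $\bar{H}_{t}$-contractions to recover the centered $Q_{t}^{H}$; everything else is routine bookkeeping once \eqref{eq:KS-fpf-quad} is in hand.
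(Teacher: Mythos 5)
Your route is the dual of the paper's: the paper never actually passes through the density equation \eqref{eq:KS-fpf-quad} in its proof --- it applies It\^{o}'s formula directly to $\varphi(\tilde{Z}_{t})$ for $\varphi=H$ and $\varphi=H_{k}H_{l}$, takes conditional expectations, and lets the identities \eqref{eq:coeff_sde} cancel the drift and diffusion terms test-function by test-function --- whereas you first establish the It\^{o}-form Kushner--Stratonovich SPDE and then test it against $H$ and $HH^{\intercal}$. The two are computationally equivalent (your weak form $\mathrm{d}\tilde{\mathbb{E}}\varphi=\tilde{\mathbb{E}}\left[\varphi\left(H-\bar{H}_{t}\right)^{\intercal}\right]\Gamma_{t}^{-2}\mathrm{d}I_{t}$ is exactly what the paper's cancellation produces for an arbitrary test function), and your moment computations --- in particular the cross-variation $C_{t}^{H}\Gamma_{t}^{-2}\,\mathrm{d}\langle I\rangle_{t}\,\Gamma_{t}^{-2}C_{t}^{H}=C_{t}^{H}\Gamma_{t}^{-2}C_{t}^{H}\mathrm{d}t$ supplying the Riccati decay, and the recombination of the uncentered third-moment tensor with the two $\bar{H}_{t}$-contractions into the centered $Q_{t}^{H}$ --- match the paper's appendix computation line for line. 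What your route buys is that, if completed, it would also rigorously establish \eqref{eq:KS-fpf-quad} itself, which the paper only asserts; what it costs is that you must front-load the conditional-law-with-common-noise SPDE derivation, which the paper's test-function approach avoids entirely.

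One bookkeeping point in your step 1 as sketched: the total second-order coefficient in the particle's It\^{o} formula is $K_{t}\Gamma_{t}^{2}K_{t}^{\intercal}$ with coefficient $1$, not $\tfrac{1}{2}$, because the observation noise $\Gamma_{t}\mathrm{d}B_{t}$ hidden inside $K_{t}\mathrm{d}y_{t}$ contributes a quadratic variation equal to that of the particle's own $-K_{t}\Gamma_{t}\mathrm{d}\tilde{B}_{t}$ (the paper makes this explicit in its proof). The drift identity $a_{t}=\nabla\cdot\left(K_{t}\Gamma_{t}^{2}K_{t}^{\intercal}\right)-K_{t}\Gamma_{t}^{2}\nabla\cdot K_{t}^{\intercal}$ is calibrated against this full coefficient-$1$ operator. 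If in the conditional SPDE you attribute only $\tfrac{1}{2}\nabla\cdot\nabla\cdot\left(K_{t}\Gamma_{t}^{2}K_{t}^{\intercal}\tilde{\rho}_{t}\right)$ to $\tilde{B}$ and treat $\mathrm{d}y_{t}$ as a purely first-order transport, you must recover the missing half as the It\^{o} correction of that stochastic transport term; otherwise the cancellation against $a_{t}$ does not close. This does not affect your final moment equations, which are correct, but it is exactly the slip you flagged as the risk and should be made explicit before the argument is accepted as a proof.
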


We put the detailed derivation of \eqref{eq:dyn_h} in Appendix~\ref{sec:Detailed-proofs}.
Notice that \eqref{eq:dyn_h} goes back to the Kalman-Bucy filter
if we set linear observation $H\left(\tilde{Z}_{t}\right)=\tilde{Z}_{t}$
satisfying a normal distribution as in \cite{bach2023filtering,crisan2014numerical}.
Although here we are considering the more general nonlinear dynamics
and observation functions from \eqref{eq:operators_obs}. 

Comparing \eqref{eq:dyn_h} and \eqref{eq:dyn_h_opt} implies that
the same statistical solution can be reached in $\left(\bar{H}_{t},C_{t}^{H}\right)$
and $\left(\mathcal{H}\hat{\rho}_{t},\mathcal{H}\mathcal{\hat{C}}_{t}\mathcal{H}^{*}\right)$
if we have $Q_{t}^{H}=0$. In order to achieve this, we further introduce
the projection operator on the space of probability distributions
using the Kullback-Leibler (KL) divergence \cite{kullback1951annals}
as an unbiased metric.
\begin{defn}
\label{def:symm_op}Define the operator $\mathcal{S}_{H}$ making
symmetric projection on the probability density $\rho\in\mathcal{P}\left(\mathbb{R}^{d}\right)$
\begin{equation}
\mathcal{S}_{H}\rho=\underset{\nu\in\mathcal{V}_{H}}{\arg\min}\:d_{\mathrm{KL}}\left(\nu\parallel\rho\right),\label{eq:symm_proj}
\end{equation}
where $d_{\mathrm{KL}}$ is the KL divergence between two probability
measures. The minimization is among the probability measures in the
following set
\[
\mathcal{V}_{H}\left(\bar{H},C^{H}\right)=\left\{ \nu\in\mathcal{P}\left(\mathbb{R}^{d}\right):\mathbb{E}_{\nu}H=\bar{H},\;\mathbb{E}_{\nu}\left[H^{\prime}H^{\prime\intercal}\right]=C^{H},\;\mathrm{and}\;\mathbb{E}_{\nu}\left[H_{l}^{\prime}H_{m}^{\prime}H_{n}^{\prime}\right]=0\right\} ,
\]
for all $l,m,n\leq d$ and $H^{\prime}\left(Z\right)=H\left(Z\right)-\mathbb{E}_{\nu}\left[H\left(Z\right)\right]$.
\end{defn}

In Definition~\ref{def:symm_op}, $\mathcal{S}_{H}$ acts as a symmetric
approximation of probability measures with vanishing third-order moments
of the observation function $H$, while maintains consistent first
two leading moments of $H$. It is clear that given $\bar{H},C^{H}$,
the set $\mathcal{V}_{H}$ is closed with respect to weak convergence
of measures. From Proposition 2.1 of \cite{pinski2015kullback} and
\cite{dupuis2011weak}, we have for any $\rho$ and weakly convergent
sequence $\left\{ \nu_{n}\right\} $ to $\nu_{*}$
\[
\underset{n\rightarrow\infty}{\lim\inf}\:d_{\mathrm{KL}}\left(\nu_{n}\parallel\rho\right)\geq d_{\mathrm{KL}}\left(\nu_{*}\parallel\rho\right).
\]
It follows immediately that there exists $\nu_{*}\in\mathcal{V}_{H}$
that reaches the minimum. Therefore, we have the following lemma guaranteeing
the existence of the minimizer in the proposed projection \eqref{eq:symm_proj}. 
\begin{lem}
\label{lem:kl_proj}Assume that there is one $\nu\in\mathcal{V}_{H}$
such that the KL-divergence $d_{\mathrm{KL}}\left(\nu\parallel\rho\right)<\infty$
for given $\rho\in\mathcal{P}\left(\mathbb{R}^{d}\right)$. Then a
minimizer exists in \eqref{eq:symm_proj}.
\end{lem}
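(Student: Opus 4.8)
The plan is to run the direct method of the calculus of variations on the functional $\nu \mapsto d_{\mathrm{KL}}(\nu \parallel \rho)$ over the admissible set $\mathcal{V}_H = \mathcal{V}_H(\bar{H},C^H)$. First I would set $m := \inf_{\nu \in \mathcal{V}_H} d_{\mathrm{KL}}(\nu\parallel\rho)$. By hypothesis there is at least one $\nu \in \mathcal{V}_H$ with $d_{\mathrm{KL}}(\nu\parallel\rho) < \infty$, so $\mathcal{V}_H$ is nonempty and $0 \le m < \infty$. Choosing a minimizing sequence $\{\nu_n\} \subset \mathcal{V}_H$ with $d_{\mathrm{KL}}(\nu_n\parallel\rho) \to m$, I may assume $d_{\mathrm{KL}}(\nu_n\parallel\rho) \le m+1$ for all $n$, so the whole sequence lies in a fixed sublevel set of the relative entropy.

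The next step is to extract a weakly convergent subsequence, for which I would prove tightness of $\{\nu_n\}$ directly from the uniform entropy bound. Using the Gibbs (Donsker--Varadhan) variational inequality $\mathbb{E}_{\nu}[g] \le \log \mathbb{E}_{\rho}[e^{g}] + d_{\mathrm{KL}}(\nu\parallel\rho)$ applied to a coercive test function such as $g(z) = \log(1+|z|^2)$, and using that $\rho \in \mathcal{P}(\mathbb{R}^d)$ has finite moments of all orders so that $\mathbb{E}_{\rho}[e^{g}] = \mathbb{E}_\rho[1+|z|^2] < \infty$, I obtain $\sup_n \mathbb{E}_{\nu_n}[\log(1+|z|^2)] \le \log \mathbb{E}_\rho[1+|z|^2] + (m+1) < \infty$. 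Since $g$ is coercive, Markov's inequality yields tightness, and Prohorov's theorem gives a subsequence $\nu_{n_k} \rightharpoonup \nu_*$ converging weakly to some $\nu_* \in \mathcal{P}(\mathbb{R}^d)$.

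I then need to check that the limit remains admissible, $\nu_* \in \mathcal{V}_H$, i.e. that $\mathbb{E}_{\nu_*}H = \bar{H}$, $\mathbb{E}_{\nu_*}[H'H'^{\intercal}] = C^H$ and $\mathbb{E}_{\nu_*}[H'_l H'_m H'_n] = 0$ still hold. This is precisely the closedness of $\mathcal{V}_H$ under weak convergence recorded before the lemma, and it is the main obstacle: since the observation function $H$ is an unbounded polynomial (quadratic and cubic in $z$), weak convergence does not by itself transfer the first three moments of $H$ to the limit. The point requiring care is therefore the uniform integrability of $H$, $H\otimes H$ and $H^{\otimes 3}$ along the minimizing sequence, which I would extract from the uniform entropy bound $d_{\mathrm{KL}}(\nu_{n_k}\parallel\rho) \le m+1$ combined with the all-order moment integrability of the reference $\rho$; once these moments are uniformly integrable, they converge along $\nu_{n_k}\rightharpoonup\nu_*$ and the three constraints persist, giving $\nu_* \in \mathcal{V}_H$.

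Finally, I would invoke the weak lower semicontinuity of the relative entropy recorded just before the lemma, namely $\liminf_{k} d_{\mathrm{KL}}(\nu_{n_k}\parallel\rho) \ge d_{\mathrm{KL}}(\nu_*\parallel\rho)$, to get $d_{\mathrm{KL}}(\nu_*\parallel\rho) \le \lim_k d_{\mathrm{KL}}(\nu_{n_k}\parallel\rho) = m$. Since $\nu_* \in \mathcal{V}_H$ forces $d_{\mathrm{KL}}(\nu_*\parallel\rho) \ge m$, equality holds and $\nu_*$ attains the infimum, so $\nu_* = \mathcal{S}_H\rho$ is the desired minimizer. If uniqueness were also wanted it would follow from the strict convexity of $\nu \mapsto d_{\mathrm{KL}}(\nu\parallel\rho)$ together with the convexity of the moment constraints defining $\mathcal{V}_H$, but existence is all the lemma requires.
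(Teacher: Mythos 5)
Your proposal follows essentially the same route as the paper: the paper's own proof is exactly the direct method, citing weak lower semicontinuity of $d_{\mathrm{KL}}\left(\cdot\parallel\rho\right)$ together with the asserted weak closedness of $\mathcal{V}_{H}$, and your minimizing-sequence/tightness/Prohorov argument merely fills in the compactness step that the paper leaves implicit. The one caution is that your proposed mechanism for the uniform integrability of the polynomial observables --- the Donsker--Varadhan inequality against the uniform entropy bound, using only the polynomial moments of $\rho$ --- controls only logarithmic moments of $\nu_{n}$ (exponential integrability of $\rho$ would be needed to bound $\mathbb{E}_{\nu_{n}}\left[|z|^{k}\right]$ this way); the uniform moment bounds are better read off from the equality constraints defining $\mathcal{V}_{H}$ itself, though the paper likewise passes over this point by declaring the closedness of $\mathcal{V}_{H}$ ``clear.''
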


Even though Lemma~\ref{lem:kl_proj} does not guarantee the uniqueness
of the minimizer, we can always find one minimizer given the same
mean and covariance $\bar{H}=\tilde{\mathbb{E}}\left[H\left(Z\right)\right],C^{H}=\tilde{\mathbb{E}}\left[H^{\prime}\left(Z\right)H^{\prime}\left(Z\right)^{\intercal}\right]$.
This provides the desirable target density function satisfying the
required symmetric statistics about $H\left(Z_{t}\right)$. Denote
the push-forward operator for the new SDE \eqref{eq:mf-fpf-quad}
with the structure functions \eqref{eq:coeff_sde} as $\tilde{\rho}_{t+s}=\mathcal{Q}_{t}^{s}\left(\tilde{\rho}_{t};Y_{t+s}\right)$
for any $s\geq0$. Under the above construction, we can finally propose
the forward operator in analysis step \eqref{eq:ana_op} as 
\begin{equation}
\mathcal{A}_{t}^{s}\left(\tilde{\rho}_{t}\right)\coloneqq\mathcal{Q}_{t}^{s}\left(\mathcal{S}_{H}\tilde{\rho}_{t};Y_{t+s}\right),\label{eq:analysis_update}
\end{equation}
where the projection $\mathcal{S}_{H}$ in \eqref{eq:symm_proj} is
a linear operator acting on the random fields in space $\mathcal{V}_{t}$
in \eqref{eq:def_condprob}. In a similar fashion as in the proof
of Lemma~\ref{lem:analysis}, by applying the addition projection
$\mathcal{S}_{H}\tilde{\rho}_{t}$ in the expectations on the SDEs
for $H\left(\tilde{Z}_{t}\right)$ and $H_{t}^{\prime}\left(\tilde{Z}_{t}\right)H_{t}^{\prime}\left(\tilde{Z}_{t}\right)^{\intercal}$
(see also \eqref{eq:mean_h} and \eqref{eq:cov_h} in Appendix~\ref{sec:Detailed-proofs}),
$Q_{t}^{H}=0$ is automatically guaranteed in \eqref{eq:dyn_h} w.r.t.
the new projected density $\mathcal{S}_{H}\tilde{\rho}_{t}$. In addition,
the first two moments $\bar{H}_{t}$ and $C_{t}^{H}$ \eqref{eq:dyn_h}
will stay the same w.r.t. $\mathcal{S}_{H}\tilde{\rho}_{t}$. The
continuous filtering process by letting $s\rightarrow0$ will satisfy
the equations \eqref{eq:dyn_h} with $Q_{t}^{H}\equiv0$. Therefore,
under the same initial condition and the uniqueness of the solution,
the same solution will be reached in both \eqref{eq:KB-quad} and
\eqref{eq:dyn_h}. This leads to the main result of this section concerning
the analysis step update in the approximating filter solution.
\begin{thm}
\label{thm:consist_analy}Consider the analysis step update \eqref{eq:analysis_update}
of the statistical filtering model \eqref{eq:mf-fpf-quad}. Assume
that $a_{t},K_{t}$ in the statistical filtering SDE are designed
to satisfy \eqref{eq:coeff_sde} and the probability set $\mathcal{V}_{H}$
according to $H$ defined in \eqref{eq:symm_proj} is not empty. Under
the same statistical observations $y_{t},t\in\left[0,T\right]$ and
the same initial conditions, the following relations hold for $t\in\left[0,T\right]$
\begin{equation}
\mathcal{H}\hat{\rho}_{t}=\tilde{\mathbb{E}}\left[H\left(\tilde{Z}_{t}\right)\right],\quad\mathcal{H}\mathcal{\hat{C}}_{t}\mathcal{H}^{*}=\tilde{\mathbb{E}}\left[H^{\prime}\left(\tilde{Z}_{t}\right)H^{\prime}\left(\tilde{Z}_{t}\right)^{\intercal}\right],\label{eq:consist_analy}
\end{equation}
where $\left(\hat{\rho}_{t},\hat{\mathcal{C}}_{t}\right)$ is the
solution of \eqref{eq:KB-quad}, and $\tilde{\rho}_{t}$ is given
by the solution of \eqref{eq:KS-fpf-quad} with $H^{\prime}=H-\mathcal{H}\hat{\rho}_{t}$.
\end{thm}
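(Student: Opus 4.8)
The plan is to show that, once the symmetric projection is built into the analysis update, the finite-dimensional moment equations \eqref{eq:dyn_h} for the surrogate filter collapse exactly onto the projected optimal equations \eqref{eq:dyn_h_opt}, after which \eqref{eq:consist_analy} follows from uniqueness of the common solution. The starting observation is that \eqref{eq:dyn_h} and \eqref{eq:dyn_h_opt} are structurally identical up to the third-moment term: under the identification $\bar{H}_t \leftrightarrow \mathcal{H}\hat{\rho}_t$ and $C_t^H \leftrightarrow \mathcal{H}\hat{\mathcal{C}}_t\mathcal{H}^*$, the innovation-driven mean update and the quadratic Riccati contraction in the covariance agree verbatim, so the only discrepancy is the $Q_t^H \Gamma_t^{-2}\{\cdots\}$ term feeding into $\mathrm{d}C_t^H$. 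Hence the entire argument reduces to certifying that $Q_t^H \equiv 0$ along the filtered trajectory.

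First I would invoke the definition of the analysis operator \eqref{eq:analysis_update}, which precomposes the McKean--Vlasov flow $\mathcal{Q}_t^s$ with the symmetric projection $\mathcal{S}_H$ of Definition~\ref{def:symm_op}. By construction $\mathcal{S}_H\tilde{\rho}_t$ lies in the admissible set $\mathcal{V}_H(\bar{H}_t,C_t^H)$, which pins down the first two moments of $H$ while simultaneously annihilating every third moment $\mathbb{E}_\nu[H_l' H_m' H_n']$. Lemma~\ref{lem:kl_proj} guarantees the projection is well-defined under the standing nonemptiness hypothesis on $\mathcal{V}_H$. Since $\mathcal{S}_H$ leaves $\bar{H}_t$ and $C_t^H$ untouched, applying it does not disturb the two quantities whose evolution we are tracking; it only resets the higher-order information, so that $Q_t^H$ vanishes at the instant the projection acts.

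Next I would pass to the continuous-time limit $s\to 0$ of the composed analysis cycle \eqref{eq:filter_op}. Recomputing the moment balances exactly as in the derivation of Lemma~\ref{lem:analysis}, but now taking expectations against the projected density $\mathcal{S}_H\tilde{\rho}_t$, the $Q_t^H$-term is removed at each infinitesimal step, and in the limit this yields \eqref{eq:dyn_h} with $Q_t^H\equiv 0$, coinciding term by term with \eqref{eq:dyn_h_opt}. Under the hypothesis of matched initial data we have $\bar{H}_0 = \mathcal{H}\hat{\rho}_0$ and $C_0^H = \mathcal{H}\hat{\mathcal{C}}_0\mathcal{H}^*$ (the initial projection also enforcing vanishing third moments), and with $\Gamma_t\succ 0$ from Lemma~\ref{lem:analysis} the coefficients stay well-behaved. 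The two systems are then identical linear SDEs driven by the same realization $Y_t$, so uniqueness delivers \eqref{eq:consist_analy} for all $t\in[0,T]$.

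The main obstacle I anticipate is rigorously justifying that $Q_t^H\equiv 0$ persists through the continuous limit rather than merely holding at the discrete projection instants. The analysis drift in \eqref{eq:mf-fpf-quad} will in general regenerate nonzero third moments between successive applications of $\mathcal{S}_H$, so one must control the rate at which $Q_t^H$ departs from zero over a step of size $s$ against the exact restoration by the projection, and show the residual vanishes as $s\to 0$. Making this precise requires a quantitative estimate that the third-moment drift is $O(s)$ per step while $\mathcal{S}_H$ is exact, so that the accumulated error over $[0,T]$ tends to zero --- in effect an operator-splitting consistency argument for the map \eqref{eq:analysis_update} in the space $\mathcal{V}_t$.
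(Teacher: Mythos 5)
Your proposal follows essentially the same route as the paper: identify the third-moment term $Q_{t}^{H}$ as the sole discrepancy between \eqref{eq:dyn_h} and \eqref{eq:dyn_h_opt}, use the symmetric projection $\mathcal{S}_{H}$ in the analysis operator \eqref{eq:analysis_update} to annihilate it while preserving $\bar{H}_{t}$ and $C_{t}^{H}$, pass to the limit $s\rightarrow0$, and conclude by uniqueness of the common moment equations. The operator-splitting consistency issue you flag at the end is real, but the paper handles it no more rigorously than you do (it simply asserts the limit), so your argument matches the paper's in both structure and level of detail.
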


Theorem~\ref{thm:consist_analy} validates the use of the statistical
filtering model density $\tilde{\rho}_{t}$ by solving \eqref{eq:mf-fpf}
to approximate the optimal filter $\hat{\rho}_{t}$ from \eqref{eq:KB-fpf}.
Though restricted only on the first two moments of the observation
function $H$, the resulting consistent statistics during analysis
step play a key role in accurate statistical forecast. Notice that
based on the statistical model in \eqref{eq:closure_model}, accurate
prediction of the important leading statistics, $\bar{u}_{t},R_{t}$,
is determined by key higher-order feedbacks in the related functional
$\mathcal{H}\rho_{t}$ (more specifically, the terms $\mathcal{H}_{m}\rho_{t}$
and $\mathcal{H}_{v}\rho_{t}$). According to Proposition~\ref{prop:optim_mom},
the optimal $\mathcal{H}\hat{\rho}_{t}$ gives the least mean square
estimate of the random variable $\mathcal{H}\rho_{t}$ given the statistical
observations. Thus, consistent approximating filter $\tilde{\mathbb{E}}H\left(\tilde{Z}_{t}\right)$
for $\mathcal{H}\hat{\rho}_{t}$ as well as its error estimate guarantees
accurate recovery of key model statistics. For example, applying the
explicit forms of the observation function \eqref{eq:operators_obs},
the quadratic observation operator $\mathcal{H}_{m}$ in the mean
equation gives
\[
\tilde{\mathbb{E}}H^{m}\left(\tilde{Z}_{t}\right)=\mathcal{H}_{m}\hat{\rho}_{t}\;\Leftrightarrow\;\sum_{p,q}\gamma_{kpq}\tilde{\mathbb{E}}\left[\tilde{Z}_{p,t}\tilde{Z}_{q,t}\right]=\sum_{p,q}\gamma_{kpq}\int z_{p}z_{q}\hat{\rho}_{t}\left(z\right)\mathrm{d}z,
\]
which implies consistent statistical feedbacks from the statistical
filtering model $\tilde{\rho}_{t}$ and the optimal filter solution
$\hat{\rho}_{t}$. This demonstrates that the new approximating filter
maintains the accuracy in the statistical mean prediction $\bar{u}_{t}$.
In addition, the covariance operator characterizes the essential uncertainty
in the optimal filter estimate $\hat{\rho}_{t}$ as a random field,
that is,
\[
\mathcal{H}_{m}\mathcal{\hat{C}}_{t}\mathcal{H}_{m}^{*}=\mathbb{E}_{\mu_{t}}\left[\left(\mathcal{H}_{m}\rho_{t}-\mathcal{H}_{m}\hat{\rho}_{t}\right)\left(\mathcal{H}_{m}\rho_{t}-\mathcal{H}_{m}\hat{\rho}_{t}\right)^{\intercal}\right].
\]
It is also linked to the approximation by $C_{t}^{H^{m}}=\tilde{\mathbb{E}}\left[\left(H_{t}^{m}-\bar{H}_{t}^{m}\right)\left(H_{t}^{m}-\bar{H}_{t}^{m}\right)^{\intercal}\right]=\mathcal{H}_{m}\mathcal{\hat{C}}_{t}\mathcal{H}_{m}^{*}$,
demonstrating a consistent error estimate in the statistical filtering
model. Similar conclusion can be reached for the accurate prediction
in the model covariance prediction for $R_{t}$ based on the cubic
observation operator $\mathcal{H}_{v}$. 
\begin{rem*}
Still, the statistical consistency in the analysis step does not guarantee
the consistency in the entire two-step updating procedure in \eqref{eq:filter_op}.
In particular, the forecast models of \eqref{eq:KB-fpf} and \eqref{eq:mf-fpf}
have the following updating equations during the forecast step
\[
\begin{aligned}\partial_{t}\hat{\rho}_{t}= & \:\mathcal{L}_{t}^{*}\hat{\rho}_{t}\\
\partial_{t}\tilde{\rho}_{t}= & \:\mathcal{L}_{t}^{*}\tilde{\rho}_{t}
\end{aligned}
\;\Rightarrow\;\begin{aligned}\partial_{t}\left(\mathcal{H}\hat{\rho}_{t}\right)= & \:\mathcal{H}\mathcal{L}_{t}^{*}\hat{\rho}_{t}=\int\left(\mathcal{L}_{t}H\right)\hat{\rho}_{t}\left(z\right)\mathrm{d}z\\
\partial_{t}\left(\tilde{\mathbb{E}}H\right)= & \:\tilde{\mathbb{E}}\mathcal{L}_{t}H=\int\left(\mathcal{L}_{t}H\right)\tilde{\rho}_{t}\left(z\right)\mathrm{d}z
\end{aligned}
\]
where the generator is defined with $y_{t}=\left(\bar{u}_{t},R_{t}\right)$
\[
\mathcal{L}_{t}\left(y_{t}\right)\varphi=\nabla_{z}\varphi\cdot\left\{ \left[H\left(z\right)-R_{t}\right]+L\left(\bar{u}_{t}\right)z\right\} +\frac{1}{2}\Sigma_{t}^{2}:\nabla_{z}\nabla_{z}\varphi.
\]
The analysis step update only gives consistent first two moments of
$H$, while higher moments may be included in $\mathcal{L}_{t}H$.
It is not guaranteed that the forecast model can give consistent forecast
in $\mathcal{H}\hat{\rho}_{t}$ and $\tilde{\mathbb{E}}H$ as well
as their covariances. More work is still needed for the complete consistency
analysis combining the approximations in both the forecast and analysis
step of the filtering method.
\end{rem*}

\section{Stability and convergence of the statistical filtering model \protect\label{sec:Stability-and-convergence}}

Here, we discuss the long-time performance of the full filtering problem
for the statistical filtering system \eqref{eq:fpf} for finding the
optimal filter PDF $\hat{\rho}_{t}$ based on statistical observations
$\mathcal{G}_{t}=\sigma\left\{ y_{s},s\leq t\right\} $. In Section~\ref{sec:Data-assimilation},
it shows that the statistical filtering model \eqref{eq:mf-fpf} constitutes
the approximate filter solution $\tilde{\rho}_{t}$ with consistent
mean and covariance, $\bar{H}_{t}$ and $C_{t}^{H}$, in the analysis
step update. We show further here that the full filter approximation
of the observation function $\bar{H}_{t}$ will approach the optimal
filter $\mathcal{H}\hat{\rho}_{t}$ at the long-time limit as $t\rightarrow\infty$.
This guarantees the stable performance of the proposed new filtering
strategy. 

\subsection{Complete statistical filter equations based on the observation operator}

We consider the optimal filter solution based on the conditional Gaussian
model \eqref{eq:fpf}. The finite-dimensional statistical states $\mathcal{H}\hat{\rho}_{t}\in\mathbb{R}^{p}$
and $\hat{\mathcal{C}}_{t}^{\mathcal{H}}=\mathcal{H}\hat{\mathcal{C}}_{t}\mathcal{H}^{*}\in\mathbb{R}^{p\times p}$
under the observation operator are solved by the Kalman-Bucy equations
\eqref{eq:KB-fpf} as
\begin{equation}
\begin{aligned}\mathrm{d}\left(\mathcal{H}\hat{\rho}_{t}\right)= & \left\langle \mathcal{L}_{t}\left(y_{t}\right)H,\hat{\rho}_{t}\right\rangle \mathrm{d}t+\hat{\mathcal{C}}_{t}^{\mathcal{H}}\Gamma^{-2}\left\{ \mathrm{d}y_{t}-\left[\mathcal{H}\hat{\rho}_{t}+h_{t}\left(y_{t}\right)\right]\mathrm{d}t\right\} ,\\
\mathrm{d}\hat{\mathcal{C}}_{t}^{\mathcal{H}}= & \left[\left\langle \mathcal{L}_{t}\left(y_{t}\right)H,\hat{\mathcal{C}}_{t}H^{\intercal}\right\rangle +\left\langle \hat{\mathcal{C}}_{t}H,\mathcal{L}_{t}\left(y_{t}\right)H^{\intercal}\right\rangle \right]\mathrm{d}t-\hat{\mathcal{C}}_{t}^{\mathcal{H}}\Gamma^{-2}\hat{\mathcal{C}}_{t}^{\mathcal{H}}\mathrm{d}t.
\end{aligned}
\label{eq:filter_optim}
\end{equation}
Above, $\left\langle F,G\right\rangle =\int F\left(z\right)G\left(z\right)\mathrm{d}z$
denotes the componentwise integration of the product of matrix-valued
functions $F,G$. For simplicity, we use constant observation noise,
$\Gamma_{t}\equiv\Gamma$. In the first equation for the mean, we
rewrite the forecast step dynamics as
\[
\mathcal{H}\mathcal{L}_{t}^{*}\left(y_{t}\right)\hat{\rho}_{t}=\int H\left(z\right)\mathcal{L}_{t}^{*}\left(y_{t}\right)\hat{\rho}_{t}\left(z\right)\mathrm{d}z=\left\langle \mathcal{L}_{t}\left(y_{t}\right)H,\hat{\rho}_{t}\right\rangle .
\]
Similarly in the covariance equation, we rewrite using the definition
of $\hat{\mathcal{C}}_{t}$ in \eqref{eq:optim_v} under the conditional
measure $\mu_{t}\left(\cdot\right)=\mathbb{P}\left(\rho\in\cdot\mid\mathcal{G}_{t}\right)$
\begin{align*}
\mathcal{H}\mathcal{L}_{t}^{*}\left(y_{t}\right)\mathcal{\hat{C}}_{t}\mathcal{H}^{*} & =\mathbb{E}_{\rho\sim\mu_{t}}\int H\left(x\right)\mathcal{L}_{t}^{*}\left(\rho-\hat{\rho}_{t}\right)\left(x\right)\mathrm{d}x\int\left(\rho-\hat{\rho}_{t}\right)\left(z\right)H\left(z\right)^{\intercal}\mathrm{d}z=\left\langle \mathcal{L}_{t}H,\hat{\mathcal{C}}_{t}H^{\intercal}\right\rangle ,\\
\mathcal{H}\mathcal{\hat{C}}_{t}\mathcal{L}_{t}\left(y_{t}\right)\mathcal{H}^{*} & =\mathbb{E}_{\rho\sim\mu_{t}}\int H\left(z\right)\left(\rho-\hat{\rho}_{t}\right)\left(z\right)\mathrm{d}z\int\left(\rho-\hat{\rho}_{t}\right)\left(x\right)\mathcal{L}_{t}H\left(x\right)^{\intercal}\mathrm{d}x=\left\langle \hat{\mathcal{C}}_{t}H,\mathcal{L}_{t}H^{\intercal}\right\rangle ,
\end{align*}
where $\hat{\mathcal{C}}_{t}H\left(x\right)=\mathbb{E}_{\rho}\left[\left(\rho-\hat{\rho}_{t}\right)\left(x\right)\int\left(\rho-\hat{\rho}_{t}\right)\left(z\right)H\left(z\right)\mathrm{d}z\right]\in\mathbb{R}^{d}$
can be viewed as a unnormalized density. 

Correspondingly, consider the approximating filter model \eqref{eq:mf-fpf}
with $\Sigma_{t}\equiv0$. Let $\tilde{\rho}_{t}$ be the PDF after
the symmetric projection \eqref{eq:symm_proj}, that is, satisfying
vanishing third-order moments $\tilde{\mathbb{E}}\left[H_{n}^{\prime}H_{p}^{\prime}H_{q}^{\prime}\right]=0$
for all $n,p,q$. The moments $\bar{H}_{t}=\tilde{\mathbb{E}}\left[H\left(\tilde{Z}_{t}\right)\right]$
and $C_{t}^{H}=\tilde{\mathbb{E}}\left[H_{t}^{\prime}\left(\tilde{Z}_{t}\right)H_{t}^{\prime}\left(\tilde{Z}_{t}\right)^{\intercal}\right]$
according to the observation function $H$ satisfy the following equations
(by combining the forecast step update with generator $\mathcal{L}_{t}$
and the analysis step dynamics \eqref{eq:dyn_h})
\begin{equation}
\begin{aligned}\mathrm{d}\bar{H}_{t}= & \left\langle \mathcal{L}_{t}\left(y_{t}\right)H,\tilde{\rho}_{t}\right\rangle \mathrm{d}t+C_{t}^{H}\Gamma^{-2}\left\{ \mathrm{d}y_{t}-\left[\bar{H}_{t}+h_{t}\left(y_{t}\right)\right]\mathrm{d}t\right\} ,\\
\mathrm{d}C_{t}^{H}= & \left[\left\langle \mathcal{L}_{t}\left(y_{t}\right)H,\tilde{\rho}_{t}H_{t}^{\prime\intercal}\right\rangle +\left\langle \tilde{\rho}_{t}H_{t}^{\prime},\mathcal{L}_{t}\left(y_{t}\right)H^{\intercal}\right\rangle \right]\mathrm{d}t-C_{t}^{H}\Gamma^{-2}C_{t}^{H}\mathrm{d}t,
\end{aligned}
\label{eq:filter_approx}
\end{equation}
where $H_{t}^{\prime}\left(z\right)=H\left(z\right)-\bar{H}_{t}$.

First, we introduce the following assumptions for the approximating
filter PDF $\tilde{\rho}_{t}$ approximating the optimal filter solution
$\hat{\rho}_{t}$ as random fields according to the same observation
process $Y=\left\{ y_{t},t\geq0\right\} $:
\begin{assumption*}
\label{assu:limit}Assume that the optimal filter \eqref{eq:KB-fpf}
and the approximating filter model \eqref{eq:mf-fpf} have a continuous
probability density functions $\hat{\rho}_{t}$ and $\tilde{\rho}_{t}$
respectively that satisfy the following conditions:
\begin{itemize}
\item Unique equilibrium solutions $\left(\mathcal{H}\hat{\rho}_{\infty},\hat{\mathcal{C}}_{\infty}^{\mathcal{H}}\right)$
and $\left(\mathcal{H}\tilde{\rho}_{\infty},C_{\infty}^{H}\right)$
exist to the systems \eqref{eq:filter_optim} and \eqref{eq:filter_approx}
respectively.
\item There exist deterministic matrices $L_{\infty}^{m},L_{\infty}^{v}\in\mathbb{R}^{p\times p}$.
The generator $\mathcal{L}_{t}$ \eqref{eq:dyn_pdf} reaches the same
statistical limit under $H$ w.r.t. both $\hat{\rho}_{t}$ and $\tilde{\rho}_{t}$,
that is,
\begin{equation}
\begin{aligned}\left\langle \mathcal{L}_{t}\left(y_{t}\right)H,\hat{\rho}_{t}\right\rangle \rightarrow L_{\infty}^{m}\mathcal{H}\hat{\rho}_{\infty}, & \;\left\langle \mathcal{L}_{t}\left(y_{t}\right)H,\hat{\mathcal{C}}_{t}H^{\intercal}\right\rangle \rightarrow L_{\infty}^{v}\hat{\mathcal{C}}_{\infty}^{\mathcal{H}},\\
\left\langle \mathcal{L}_{t}\left(y_{t}\right)H,\tilde{\rho}_{t}\right\rangle \rightarrow L_{\infty}^{m}\mathcal{H}\tilde{\rho}_{\infty}, & \;\left\langle \mathcal{L}_{t}\left(y_{t}\right)H,\tilde{\rho}_{t}H_{t}^{\prime\intercal}\right\rangle \rightarrow L_{\infty}^{v}C_{\infty}^{H},
\end{aligned}
\label{eq:limit_filter}
\end{equation}
a.s. as $t\rightarrow\infty$ given the observation data $y_{t}$.
\item Further, the real parts of eigenvalues of the limit matrices $L_{\infty}^{m}-\hat{\mathcal{C}}_{\infty}^{\mathcal{H}}\Gamma^{-2}$
and $L_{\infty}^{v}-\hat{\mathcal{C}}_{\infty}^{\mathcal{H}}\Gamma^{-2}$
are all negative.
\end{itemize}
\end{assumption*}
The first condition in assumption \eqref{eq:limit_filter} guarantees
that both filter equations will finally converge to finite solutions.
The second condition requires consistent first and second-order moments
under $H$ w.r.t. the optimal and approximate model PDF, such as for
the covariance as $t\rightarrow\infty$
\[
\begin{aligned}\int\mathcal{L}_{t}\left(y_{t}\right)H\left(z\right)\hat{\mathcal{C}}_{t}H\left(z\right)^{\intercal}\mathrm{d}z\rightarrow & L_{\infty}^{v}\int H\left(z\right)\left(\hat{\mathcal{C}}_{\infty}H^{\intercal}\right)\left(z\right)\mathrm{d}z,\\
\int\mathcal{L}_{t}\left(y_{t}\right)H_{t}^{\prime}\left(z\right)H_{t}^{\prime}\left(z\right){}^{\intercal}\tilde{\rho}_{t}\left(z\right)\mathrm{d}z\rightarrow & L_{\infty}^{v}\int H_{\infty}^{\prime}\left(z\right)H_{\infty}^{\prime}\left(z\right){}^{\intercal}\tilde{\rho}_{\infty}\left(z\right)\mathrm{d}z.
\end{aligned}
\]
In addition, we may further introduce the convergence rate, that is,
there exist constants $\bar{\lambda}>0$ and $K>0$ such that a.s.
\begin{equation}
\begin{aligned}\left|\left\langle \mathcal{L}_{t}\left(y_{t}\right)H,\hat{\rho}_{t}\right\rangle -L_{\infty}^{m}\mathcal{H}\hat{\rho}_{\infty}\right|\leq Ke^{-\bar{\lambda}t}, & \quad\left\Vert \left\langle \mathcal{L}_{t}\left(y_{t}\right)H,\hat{\mathcal{C}}_{t}H^{\intercal}\right\rangle -L_{\infty}^{v}\hat{\mathcal{C}}_{\infty}^{\mathcal{H}}\right\Vert \leq Ke^{-\bar{\lambda}t},\\
\left|\left\langle \mathcal{L}_{t}\left(y_{t}\right)H,\tilde{\rho}_{t}\right\rangle -L_{\infty}^{m}\mathcal{H}\tilde{\rho}_{\infty}\right|\leq Ke^{-\bar{\lambda}t}, & \quad\left\Vert \left\langle \mathcal{L}_{t}\left(y_{t}\right)H,\tilde{\rho}_{t}H_{t}^{\prime\intercal}\right\rangle -L_{\infty}^{v}C_{\infty}^{H}\right\Vert \leq Ke^{-\bar{\lambda}t},
\end{aligned}
\label{eq:limit_exp}
\end{equation}
where $\left\Vert \cdot\right\Vert $ is the matrix norm. And the
third condition asks that the filter solutions will be stabilized
at the long time limit. Next, we ask the limit behaviour in the mean
and covariance $\left(\bar{H}_{t},C_{t}^{H}\right)$ from the approximation
model \eqref{eq:filter_approx} in comparison with the optimal filter
solution $\left(\mathcal{H}\hat{\rho}_{t},\hat{\mathcal{C}}_{t}^{\mathcal{H}}\right)$
from \eqref{eq:filter_optim}.

\subsection{Asymptotic stability of the equilibrium covariance matrix}

With the assumption \eqref{eq:limit_filter}, we have consistent equilibrium
covariance at $t\rightarrow\infty$ in the two model solutions
\[
\begin{aligned}L_{\infty}^{v}\hat{\mathcal{C}}_{\infty}^{\mathcal{H}}+\hat{\mathcal{C}}_{\infty}^{\mathcal{H}}L_{\infty}^{v\intercal}-\hat{\mathcal{C}}_{\infty}^{\mathcal{H}}\Gamma^{-2}\hat{\mathcal{C}}_{\infty}^{\mathcal{H}} & =0,\\
L_{\infty}^{v}C_{\infty}^{H}+C_{\infty}^{H}L_{\infty}^{v\intercal}-C_{\infty}^{H}\Gamma^{-2}C_{\infty}^{H} & =0.
\end{aligned}
\]
Uniqueness of the solution directly implies that the final equilibrium
covariances satisfy $\hat{\mathcal{C}}_{\infty}^{\mathcal{H}}=C_{\infty}^{H}$
with no randomness. Further, we have that the covariance $C_{t}^{H}$
in the approximating filter will approach the optimal equilibrium
covariance $\hat{\mathcal{C}}_{\infty}^{\mathcal{H}}$ as described
in the following result.
\begin{lem}
\label{lem:cov_limit}Suppose that Assumption~\ref{assu:limit} is
satisfied, $C_{t}^{H}$ is the covariance solution to the statistical
filtering model \eqref{eq:filter_approx} and $\hat{\mathcal{C}}_{\infty}^{\mathcal{H}}$
is the unique equilibrium solutions to the optimal model \eqref{eq:filter_optim}.
Then there is
\begin{equation}
\left\Vert C_{t}^{H}-\hat{\mathcal{C}}_{\infty}^{\mathcal{H}}\right\Vert \rightarrow0,\label{eq:conv_cov}
\end{equation}
a.s. as $t\rightarrow\infty$. Further, the convergence rate will
be exponential if \eqref{eq:limit_exp} is also satisfied.
\end{lem}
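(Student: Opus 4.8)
The plan is to treat the covariance equation \eqref{eq:filter_approx} as a matrix Riccati ODE with asymptotically constant coefficients, and to prove convergence to the unique stabilizing equilibrium $\hat{\mathcal{C}}_\infty^{\mathcal H}=C_\infty^H$ by a Lyapunov argument built around the Hurwitz closed-loop matrix supplied by Assumption~\ref{assu:limit}. I would first observe that, conditioned on a fixed observation path $Y$, the equation for $C_t^H$ carries no stochastic integral, so the whole argument is a pathwise ODE analysis; it holds almost surely precisely because the hypotheses \eqref{eq:limit_filter}--\eqref{eq:limit_exp} hold a.s. I set $\bar C:=\hat{\mathcal{C}}_\infty^{\mathcal H}=C_\infty^H$ (the equality follows from uniqueness of the algebraic Riccati solution, as already noted above) and record the equilibrium identity $L_\infty^v\bar C+\bar C L_\infty^{v\intercal}-\bar C\Gamma^{-2}\bar C=0$.

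Reading the second bullet of Assumption~\ref{assu:limit} as the statement that the drift coefficient converges to the linear action $C\mapsto L_\infty^v C$ (so that along the flow it equals $L_\infty^v C_t^H$ up to a residual, with the recorded limit $L_\infty^v C_\infty^H$ attained at the fixed point), I write $\langle\mathcal L_t H,\tilde\rho_t H_t'^\intercal\rangle = L_\infty^v C_t^H + r_t$ with $r_t\to0$ a.s.\ (and $\|r_t\|\le Ke^{-\bar\lambda t}$ when \eqref{eq:limit_exp} holds). Substituting into \eqref{eq:filter_approx} and subtracting the equilibrium identity, the error $E_t:=C_t^H-\bar C$ satisfies
\[
\dot E_t=\mathcal A_\infty E_t-E_t\Gamma^{-2}E_t+\varrho_t,\qquad \mathcal A_\infty E:=A_\infty E+E A_\infty^\intercal,\qquad A_\infty:=L_\infty^v-\bar C\Gamma^{-2},
\]
where $A_\infty$ is exactly the closed-loop matrix appearing in the third bullet of Assumption~\ref{assu:limit} and $\varrho_t:=r_t+r_t^\intercal\to0$.

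Next I would exploit stability of the linear part. Since $A_\infty$ is Hurwitz, the Lyapunov operator $\mathcal A_\infty$ on symmetric matrices has spectrum in the open left half-plane (equivalently, vectorizing $E$, the generator $I\otimes A_\infty+A_\infty\otimes I$ is Hurwitz), so there is a unique $P\succ0$ solving $\mathcal A_\infty^{*}P+P\mathcal A_\infty=-I$. Setting $V(E)=\langle E,PE\rangle$, the linear part contributes $-\|E_t\|^2$ to $\dot V$, giving $\dot V\le -\alpha V+c_1\|E_t\|^3+c_2\|E_t\|\,\|\varrho_t\|$ for constants $\alpha,c_1,c_2>0$. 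The Riccati term is genuinely higher order in $\|E_t\|$, so on a neighborhood $\{\|E\|\le\delta\}$ the cubic term is absorbed by $-\tfrac{\alpha}{2}V$, leaving $\dot V\le -\tfrac{\alpha}{2}V+c_2\|E_t\|\,\|\varrho_t\|$. A Gronwall/Barbalat estimate with $\varrho_t\to0$ then forces $V(E_t)\to0$, hence \eqref{eq:conv_cov}; when $\|\varrho_t\|\le Ke^{-\bar\lambda t}$ the same estimate yields an exponential rate of order $\min(\alpha,2\bar\lambda)$.

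The main obstacle is the global-to-local passage: the Lyapunov estimate only controls the dynamics once $E_t$ has entered the basin $\{\|E\|\le\delta\}$, so I must first show the trajectory reaches this set. The plan is to use the intrinsic structure of covariance Riccati flows --- $C_t^H$ stays symmetric positive semidefinite and a priori bounded, since the $-C\Gamma^{-2}C$ term is uniformly dissipative and hence confining --- together with the convergence of the coefficients to their autonomous limits, to invoke the classical convergence of the \emph{frozen} limit Riccati flow to its unique stabilizing solution, and then treat the vanishing perturbation $\varrho_t$ as an asymptotically small forcing. Making this asymptotically-autonomous comparison rigorous, so that convergence for the frozen system transfers to the time-varying one, is the delicate point; a monotonicity/comparison principle for ordered Riccati solutions, or a direct variation-of-constants bound using the exponentially decaying semigroup $e^{\mathcal A_\infty t}$, is the tool I would use to close this gap.
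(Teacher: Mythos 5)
Your local analysis is sound, but the step you yourself flag as ``the delicate point'' --- getting the trajectory into the basin $\{\|E\|\le\delta\}$ where the cubic term in $\dot V$ can be absorbed --- is a genuine gap, and it is precisely the step the paper's proof is engineered to avoid. The paper does not linearize around $\bar C$ and fight the quadratic remainder $-E\Gamma^{-2}E$; instead it uses the exact symmetrization
\begin{equation*}
C_t^{H}\Gamma^{-2}C_t^{H}-\hat{\mathcal{C}}_{\infty}^{\mathcal{H}}\Gamma^{-2}\hat{\mathcal{C}}_{\infty}^{\mathcal{H}}
=\tfrac{1}{2}\left(C_t^{H}+\hat{\mathcal{C}}_{\infty}^{\mathcal{H}}\right)\Gamma^{-2}\left(C_t^{H}-\hat{\mathcal{C}}_{\infty}^{\mathcal{H}}\right)
+\tfrac{1}{2}\left(C_t^{H}-\hat{\mathcal{C}}_{\infty}^{\mathcal{H}}\right)\Gamma^{-2}\left(C_t^{H}+\hat{\mathcal{C}}_{\infty}^{\mathcal{H}}\right),
\end{equation*}
which turns the difference of Riccati equations into an \emph{exactly linear} time-varying Sylvester-type ODE for $E_t=C_t^H-\hat{\mathcal{C}}_{\infty}^{\mathcal{H}}$, with coefficient $L_\infty^v-\tfrac12(C_t^H+\hat{\mathcal{C}}_{\infty}^{\mathcal{H}})\Gamma^{-2}$ and no higher-order remainder. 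Once the coefficient converges to the Hurwitz matrix $A_\infty=L_\infty^v-\hat{\mathcal{C}}_{\infty}^{\mathcal{H}}\Gamma^{-2}$, the transition matrix decays like $e^{-\lambda_v(t-s)}$ and a plain variation-of-constants/Gronwall bound, $\|E_t\|\le K_1e^{-2\lambda_v t}+K_2\sup_{s\ge T}|F_s|$ with $F_t$ the drift residual, gives the conclusion (and the exponential rate under \eqref{eq:limit_exp}). No Lyapunov matrix $P$, no basin of attraction, no asymptotically-autonomous comparison principle is needed.

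Be aware, though, that the paper does not prove global attraction from scratch either: it reads Assumption~\ref{assu:limit} (uniqueness of the equilibrium together with the limits in \eqref{eq:limit_filter}) as already supplying $C_t^H\to C_\infty^H$, and uses this both to show the residual $F_t=\langle\mathcal{L}_tH,\tilde\rho_tH_t'^{\intercal}\rangle-L_\infty^vC_t^H$ vanishes and to justify the exponential bound on the time-varying transition matrix. So the fix for your gap is twofold: adopt the symmetrized factorization so that no smallness of $E_t$ is ever required, and invoke the assumed convergence of $C_t^H$ to its (unique) equilibrium rather than attempting to derive it from a priori boundedness of the Riccati flow. With those two substitutions your argument collapses to the paper's.
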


\begin{proof}
Combining the covariance equation in \eqref{eq:filter_approx} and
the equilibrium equation of \eqref{eq:filter_optim}, we have
\begin{align*}
\mathrm{d}\left(C_{t}^{H}-\hat{\mathcal{C}}_{\infty}^{\mathcal{H}}\right) & =\left[L_{\infty}^{v}-\frac{1}{2}\left(C_{t}^{H}+\hat{\mathcal{C}}_{\infty}^{\mathcal{H}}\right)\Gamma^{-2}\right]\left(C_{t}^{H}-\hat{\mathcal{C}}_{\infty}^{\mathcal{H}}\right)\mathrm{d}t\\
 & +\left(C_{t}^{H}-\hat{\mathcal{C}}_{\infty}^{\mathcal{H}}\right)\left[L_{\infty}^{v\intercal}-\frac{1}{2}\Gamma^{-2}\left(C_{t}^{H}+\hat{\mathcal{C}}_{\infty}^{\mathcal{H}}\right)\right]\mathrm{d}t\\
 & +\left[\left\langle \mathcal{L}_{t}H,\tilde{\rho}_{t}H_{t}^{\prime}{}^{\intercal}\right\rangle -L_{\infty}^{v}C_{t}^{H}\right]\mathrm{d}t+\left[\left\langle \tilde{\rho}_{t}H_{t}^{\prime},\mathcal{L}_{t}H^{\intercal}\right\rangle -C_{t}^{H}L_{\infty}^{v\intercal}\right]\mathrm{d}t.
\end{align*}
For the last row of the above equation, using the uniqueness of the
solution we have $C_{t}^{H}\rightarrow C_{\infty}^{H}$ as $t\rightarrow\infty$.
Then denote
\[
F_{t}=\left\langle \mathcal{L}_{t}H,\tilde{\rho}_{t}H_{t}^{\prime}{}^{\intercal}\right\rangle -L_{\infty}^{v}C_{t}^{H}.
\]
We get $\left|F_{t}\right|\rightarrow0$ as $t\rightarrow\infty$
by using
\[
\left\Vert \left\langle \mathcal{L}_{t}H,\tilde{\rho}_{t}H_{t}^{\prime}{}^{\intercal}\right\rangle -L_{\infty}^{v}C_{t}^{H}\right\Vert \leq\left\Vert \left\langle \mathcal{L}_{t}H,\tilde{\rho}_{t}H_{t}^{\prime}{}^{\intercal}\right\rangle -L_{\infty}^{v}C_{\infty}^{H}\right\Vert +\left\Vert L_{\infty}^{v}\left(C_{\infty}^{H}-C_{t}^{H}\right)\right\Vert \rightarrow0.
\]
Also by taking $\ \lambda_{v}=\min\left\{ \mathrm{Re}\lambda:\lambda\:\mathrm{is\:the\:eighenvalue\:of}\:-L_{\infty}^{v}+\hat{\mathcal{C}}_{\infty}^{\mathcal{H}}\Gamma^{-2}\right\} $,
we have $\lambda_{v}>0$ from Assumption~\ref{assu:limit}. This
implies $\left\Vert e^{\int_{s}^{t}\left[L_{\infty}^{v}-\frac{1}{2}\left(C_{\tau}^{H}+\hat{\mathcal{C}}_{\infty}^{\mathcal{H}}\right)\Gamma^{-2}\right]\mathrm{d}\tau}\right\Vert \leq Ke^{-\lambda_{v}\left(t-s\right)}$
. Together with $F_{t}$ vanishing as $t\rightarrow\infty$, we have
for any $t>T$
\[
\left\Vert C_{t}^{H}-\hat{\mathcal{C}}_{\infty}^{\mathcal{H}}\right\Vert \leq K\left\Vert C_{T}^{H}-\hat{\mathcal{C}}_{\infty}^{\mathcal{H}}\right\Vert e^{-2\lambda_{v}\left(t-T\right)}+2K\int_{T}^{t}e^{-2\lambda_{v}\left(t-s\right)}\left|F_{s}\right|\mathrm{d}s\leq K_{1}e^{-2\lambda_{v}t}+K_{2}\sup_{s\geq T}\left|F_{s}\right|.
\]
Therefore, by first letting $t\rightarrow\infty$ then letting $T\rightarrow\infty$,
we reach a.s. $\lim_{t\rightarrow\infty}\left\Vert C_{t}^{H}-\hat{\mathcal{C}}_{\infty}^{\mathcal{H}}\right\Vert =0$.

Further, if we assume exponential convergence rate $\bar{\lambda}$
in the covariances under the generator $\mathcal{L}_{t}$ as in \eqref{eq:limit_exp},
we can have exponential convergence rate in both $\hat{\mathcal{C}}_{t}$
and $C_{t}^{H}$ as $t\rightarrow\infty$ a.s.
\begin{equation}
\left\Vert \hat{\mathcal{C}}_{t}^{\mathcal{H}}-\hat{\mathcal{C}}_{\infty}^{\mathcal{H}}\right\Vert \leq K_{H}e^{-\min\left\{ \lambda_{v},\bar{\lambda}\right\} t},\;\left\Vert C_{t}^{H}-\hat{\mathcal{C}}_{\infty}^{\mathcal{H}}\right\Vert \leq K_{H}e^{-\min\left\{ \lambda_{v},\bar{\lambda}\right\} t}.\label{eq:conv_exp}
\end{equation}
\end{proof}
In addition, from the definition of the optimal filter solution \eqref{eq:optim_mom},
the covariance is defined as
\begin{align*}
\mathrm{tr}\mathbb{E}\hat{\mathcal{C}}_{t}^{\mathcal{H}} & =\mathbb{E}_{\mu_{t}}\mathrm{tr}\left[\left(\mathcal{H}\rho_{t}-\mathcal{H}\hat{\rho}_{t}\right)\left(\mathcal{H}\rho_{t}-\mathcal{H}\hat{\rho}_{t}\right)^{\intercal}\right]\\
 & =\mathbb{E}_{\mu_{t}}\mathrm{tr}\left[\left(\mathcal{H}\rho_{t}-\mathcal{H}\hat{\rho}_{t}\right)^{\intercal}\left(\mathcal{H}\rho_{t}-\mathcal{H}\hat{\rho}_{t}\right)\right]\\
 & =\mathbb{E}_{\mu_{t}}\left[\left|\mathcal{H}\rho_{t}-\mathcal{H}\hat{\rho}_{t}\right|^{2}\right].
\end{align*}
Notice that $\mathcal{H}\hat{\rho}_{t}$ and $\hat{\mathcal{C}}_{t}^{\mathcal{H}}$
are still random field in $\mathcal{G}_{t}$. Under Assumption~\ref{assu:limit}
$\hat{\mathcal{C}}_{t}^{\mathcal{H}}\rightarrow\hat{\mathcal{C}}_{\infty}^{\mathcal{H}}$
a.s. with the observations in $\mathcal{G}_{t}$, we have as $t\rightarrow\infty$
\begin{equation}
\mathbb{E}\left[\left|\mathcal{H}\rho_{t}-\mathcal{H}\hat{\rho}_{t}\right|^{2}\right]=\mathrm{tr}\mathbb{E}\left[\hat{\mathcal{C}}_{t}^{\mathcal{H}}\right]\rightarrow\mathrm{tr}\hat{\mathcal{C}}_{\infty}^{\mathcal{H}}.\label{eq:uncertainty_v}
\end{equation}
This confirms that the total uncertainty at equilibrium in the optimal
filter solution $\mathcal{H}\hat{\rho}_{t}$ is estimated by the total
variance $\mathrm{tr}\hat{\mathcal{C}}_{\infty}^{\mathcal{H}}$.

\subsection{Convergence of the statistical state under the observation operator}

Next, we consider the convergence of the statistical observation function
$\bar{H}_{t}$ from the approximating filter to the optimal filter
solution $\mathcal{H}\hat{\rho}_{t}$. We have the long-term stability
in the statistical solution in the following theorem.
\begin{thm}
\label{thm:mean_conv}Suppose that Assumption~\ref{assu:limit} holds
and the covariances goes to the same deterministic limit
\[
\hat{\mathcal{C}}_{t}^{\mathcal{H}}\rightarrow\hat{\mathcal{C}}_{\infty}^{\mathcal{H}},\quad C_{t}^{H}\rightarrow\hat{\mathcal{C}}_{\infty}^{\mathcal{H}},
\]
a.s. as $t\rightarrow\infty$. Then there is
\begin{equation}
\mathbb{E}\left[\left|\mathcal{H}\hat{\rho}_{t}-\bar{H}_{t}\right|^{2}\right]\rightarrow0,\label{eq:conv_mean}
\end{equation}
as $t\rightarrow\infty$. Furthermore, assume exponential convergence
rate in $\hat{\mathcal{C}}_{t}^{\mathcal{H}},C_{t}^{H}$ as in \eqref{eq:conv_exp},
and let $\lambda=\min\left\{ \lambda_{m},\lambda_{v}\right\} $ where
$\lambda_{m}$ and $\lambda_{v}$ are the minimums of the real parts
of the eigenvalues of $-L_{\infty}^{m}+\hat{\mathcal{C}}_{\infty}^{\mathcal{H}}\Gamma^{-2}$
and $-L_{\infty}^{v}+\hat{\mathcal{C}}_{\infty}^{\mathcal{H}}\Gamma^{-2}$
respectively. There is also exponential convergence as
\begin{equation}
\mathbb{E}\left[\left|\mathcal{H}\hat{\rho}_{t}-\bar{H}_{t}\right|^{2}\right]\leq K_{H}e^{-\lambda_{1}t},\label{eq:conv_mean_exp}
\end{equation}
with some $\lambda_{1}\leq\min\left\{ \lambda,\bar{\lambda}\right\} $
and $K_{H}$ a constant only dependent on the observation function
$H$.
\end{thm}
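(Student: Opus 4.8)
The plan is to work directly with the error process $e_t := \mathcal{H}\hat{\rho}_t - \bar{H}_t$ and show it satisfies a linear stochastic differential equation whose drift becomes asymptotically Hurwitz while its forcing and diffusion coefficients vanish. First I would subtract the mean equation of the approximating filter \eqref{eq:filter_approx} from that of the optimal filter \eqref{eq:filter_optim}. The two Kalman-gain corrections act on the same observed increment $\mathrm{d}y_t$, so invoking the innovations representation of the optimal filter, $\mathrm{d}y_t - [\mathcal{H}\hat{\rho}_t + h_t(y_t)]\mathrm{d}t = \Gamma\,\mathrm{d}\bar{W}_t$ with $\bar{W}_t$ a $\mathcal{G}_t$-Brownian motion, and writing $\mathrm{d}y_t - [\bar{H}_t + h_t(y_t)]\mathrm{d}t = \Gamma\,\mathrm{d}\bar{W}_t + e_t\,\mathrm{d}t$, the difference of the two corrections collapses to $(\hat{\mathcal{C}}_t^{\mathcal{H}} - C_t^H)\Gamma^{-1}\mathrm{d}\bar{W}_t - C_t^H\Gamma^{-2}e_t\,\mathrm{d}t$. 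This yields the closed error equation
\[
\mathrm{d}e_t = \big[\langle \mathcal{L}_t H,\, \hat{\rho}_t - \tilde{\rho}_t\rangle - C_t^H\Gamma^{-2}e_t\big]\mathrm{d}t + (\hat{\mathcal{C}}_t^{\mathcal{H}} - C_t^H)\Gamma^{-1}\mathrm{d}\bar{W}_t.
\]

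Next I would reduce the forecast (generator) contribution using Assumption~\ref{assu:limit}. Writing $\langle \mathcal{L}_t H,\hat\rho_t-\tilde\rho_t\rangle = L_\infty^m e_t + g_t$, where the remainder $g_t$ collects the deviations $\langle \mathcal{L}_t H, \hat{\rho}_t\rangle - L_\infty^m\mathcal{H}\hat{\rho}_t$ and $\langle \mathcal{L}_t H, \tilde{\rho}_t\rangle - L_\infty^m\bar{H}_t$, the limit \eqref{eq:limit_filter} together with the convergence of the filter means to their equilibria forces $g_t \to 0$ a.s. (exponentially fast, rate $\bar\lambda$, under \eqref{eq:limit_exp}). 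Substituting, the error equation takes the linear form $\mathrm{d}e_t = [A_t e_t + g_t]\,\mathrm{d}t + G_t\,\mathrm{d}\bar{W}_t$ with drift $A_t = L_\infty^m - C_t^H\Gamma^{-2}$ and diffusion $G_t = (\hat{\mathcal{C}}_t^{\mathcal{H}} - C_t^H)\Gamma^{-1}$. By Lemma~\ref{lem:cov_limit} and the hypothesis $\hat{\mathcal{C}}_t^{\mathcal{H}}\to\hat{\mathcal{C}}_\infty^{\mathcal{H}}$ one has $A_t \to A_\infty := L_\infty^m - \hat{\mathcal{C}}_\infty^{\mathcal{H}}\Gamma^{-2}$, which is Hurwitz by the third bullet of Assumption~\ref{assu:limit}, while $G_t \to 0$.

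For the convergence itself I would avoid the non-adapted fundamental solution and argue through a Lyapunov functional. Let $P\succ0$ solve $A_\infty^\intercal P + P A_\infty = -I$ and set $V_t = \mathbb{E}[e_t^\intercal P e_t]$. Applying It\^{o}'s formula to $e_t^\intercal P e_t$, taking expectations so the stochastic integral drops out (as $G_t$ is $\mathcal{G}_t$-adapted and $\bar W$ is a $\mathcal{G}_t$-Brownian motion), and splitting $A_t = A_\infty + (A_t - A_\infty)$ gives
\[
\frac{\mathrm{d}}{\mathrm{d}t}V_t = \mathbb{E}\big[2e_t^\intercal P A_\infty e_t\big] + \mathbb{E}\big[2e_t^\intercal P (A_t-A_\infty)e_t + 2e_t^\intercal P g_t + \mathrm{tr}(G_t^\intercal P G_t)\big].
\]
The first term equals $-\mathbb{E}|e_t|^2$; for $t$ large enough that $\|A_t-A_\infty\|$ is small it dominates the cross term, and a Young inequality on $2e_t^\intercal P g_t$ absorbs a further fraction of $\mathbb{E}|e_t|^2$, leaving $\dot V_t \le -cV_t + C(\mathbb{E}\|g_t\|^2 + \mathbb{E}\|G_t\|_F^2)$ for some $c>0$. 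Since the covariances are bounded and converge a.s., dominated convergence upgrades the a.s. limits to $\mathbb{E}\|g_t\|^2, \mathbb{E}\|G_t\|_F^2 \to 0$, and a Gronwall estimate yields $V_t\to0$, hence \eqref{eq:conv_mean}. For the rate \eqref{eq:conv_mean_exp}, the exponential bounds \eqref{eq:conv_exp} and \eqref{eq:limit_exp} make the forcing $O(e^{-\min\{\lambda_v,\bar\lambda\}t})$; convolving with the $e^{-ct}$ decay of the Lyapunov flow (with $c$ controlled by $\lambda_m$) produces the stated $\lambda_1\le\min\{\lambda,\bar\lambda\}$.

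The step I expect to be the main obstacle is the drift reduction: justifying $\langle \mathcal{L}_t H,\hat\rho_t-\tilde\rho_t\rangle = L_\infty^m e_t + g_t$ with $g_t\to0$ genuinely requires that each filter mean approach its own equilibrium, so that $L_\infty^m\mathcal{H}\hat\rho_t$ may replace the limiting value $L_\infty^m\mathcal{H}\hat\rho_\infty$ appearing in \eqref{eq:limit_filter}; this is precisely the content Assumption~\ref{assu:limit} must supply, and it is where the argument is least automatic. A secondary technical point is that the drift $A_t$ is time-dependent and partially random: the Lyapunov estimate only becomes effective after a (random) time at which $\|A_t-A_\infty\|$ is small, so one must combine the bounded pre-threshold contribution with the exponentially decaying post-threshold one before letting the threshold time tend to infinity.
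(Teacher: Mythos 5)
Your proposal is correct and arrives at the same conclusion, but by a genuinely different route from the paper's proof. The paper subtracts the two mean equations while keeping the raw observation increment $\mathrm{d}y_{t}$, centers the gain around the deterministic limit $\hat{\mathcal{C}}_{\infty}^{\mathcal{H}}\Gamma^{-2}$, writes the solution by variation of constants against $e^{t\left(L_{\infty}^{m}-\hat{\mathcal{C}}_{\infty}^{\mathcal{H}}\Gamma^{-2}\right)}$, and bounds five resulting integrals separately by Cauchy--Schwarz; this forces it to invoke uniform boundedness of $\mathbb{E}\left|\mathcal{H}\rho_{t}\right|$, $\mathbb{E}\left|h_{t}\left(y_{t}\right)\right|$, $\mathcal{H}\hat{\rho}_{t}$ and $\bar{H}_{t}$ to control the terms multiplying the non-vanishing quantities. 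You instead substitute the innovations representation $\mathrm{d}y_{t}-\left[\mathcal{H}\hat{\rho}_{t}+h_{t}\left(y_{t}\right)\right]\mathrm{d}t=\Gamma\,\mathrm{d}\bar{W}_{t}$ (legitimate here since $\mathcal{H}\hat{\rho}_{t}=\mathbb{E}\left[\mathcal{H}\rho_{t}\mid\mathcal{G}_{t}\right]$ and the innovation is a $\mathcal{G}_{t}$-Brownian motion, as recalled in Appendix~\ref{sec:Background-about-filtering}); your closed error equation is algebraically identical to the paper's after regrouping, so nothing is lost, and the boundedness of $\mathbb{E}\left|\mathcal{H}\rho_{t}\right|$ and $\mathbb{E}\left|h_{t}\right|$ never enters because those contributions are absorbed into the innovation martingale, whose only trace in $\frac{\mathrm{d}}{\mathrm{d}t}\mathbb{E}\left[e_{t}^{\intercal}Pe_{t}\right]$ is $\mathbb{E}\,\mathrm{tr}\left(G_{t}^{\intercal}PG_{t}\right)\rightarrow0$. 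What the paper's Duhamel route buys in exchange is that the drift is frozen at the deterministic $A_{\infty}$ from the outset, so the random, time-dependent perturbation $A_{t}-A_{\infty}=-\left(C_{t}^{H}-\hat{\mathcal{C}}_{\infty}^{\mathcal{H}}\right)\Gamma^{-2}$ never sits inside an exponential or a quadratic form. Two points you should make explicit to fully close your version: (i) the bound $\mathbb{E}\left[2e_{t}^{\intercal}P\left(A_{t}-A_{\infty}\right)e_{t}\right]\leq\varepsilon\,\mathbb{E}\left|e_{t}\right|^{2}$ requires splitting over the event $\left\{ \left\Vert A_{t}-A_{\infty}\right\Vert >\varepsilon\right\} $ together with uniform boundedness of the covariances, since the convergence is only almost sure --- you flag this yourself, and the paper's own proof is no more careful about the analogous interchange of a.s. limits and expectations; (ii) dropping the stochastic integral after taking expectations needs the a priori integrability $\mathbb{E}\int_{0}^{t}\left|e_{s}^{\intercal}PG_{s}\right|^{2}\mathrm{d}s<\infty$, which should be stated. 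With those caveats recorded, your Lyapunov--Gronwall argument delivers both \eqref{eq:conv_mean} and, via the exponential bounds \eqref{eq:conv_exp} and \eqref{eq:limit_exp} on the forcing, the rate \eqref{eq:conv_mean_exp}.
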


\begin{proof}
By taking the difference of the mean equations in \eqref{eq:filter_optim}
and \eqref{eq:filter_approx}, we have
\begin{align*}
\mathrm{d}\left(\mathcal{H}\hat{\rho}_{t}-\bar{H}_{t}\right) & =\left(L_{\infty}^{m}-\hat{\mathcal{C}}_{\infty}^{\mathcal{H}}\Gamma^{-2}\right)\left(\mathcal{H}\hat{\rho}_{t}-\bar{H}_{t}\right)\mathrm{d}t+\left(\hat{\mathcal{C}}_{t}^{\mathcal{H}}-C_{t}^{H}\right)\Gamma^{-2}\left[\mathrm{d}y_{t}-h_{t}\left(y_{t}\right)\mathrm{d}t\right]\\
 & +\left[\left(\hat{\mathcal{C}}_{\infty}^{\mathcal{H}}-\hat{\mathcal{C}}_{t}^{\mathcal{H}}\right)\Gamma^{-2}\mathcal{H}\hat{\rho}_{t}-\left(\hat{\mathcal{C}}_{\infty}^{\mathcal{H}}-C_{t}^{H}\right)\Gamma^{-2}\bar{H}_{t}\right]\mathrm{d}t\\
 & +\left[\left\langle \mathcal{L}_{t}\left(y_{t}\right)H,\hat{\rho}_{t}\right\rangle -L_{\infty}^{m}\mathcal{H}\hat{\rho}_{t}\right]\mathrm{d}t+\left[\left\langle \mathcal{L}_{t}\left(y_{t}\right)H,\tilde{\rho}_{t}\right\rangle -L_{\infty}^{m}\bar{H}_{t}\right]\mathrm{d}t.
\end{align*}
By applying It\^{o}'s formula to the above equation, there is
\begin{align*}
\mathrm{d}\left[e^{-t\left(L_{\infty}^{m}-\hat{\mathcal{C}}_{\infty}^{\mathcal{H}}\Gamma^{-2}\right)}\left(\mathcal{H}\hat{\rho}_{t}-\bar{H}_{t}\right)\right] & =e^{-t\left(L_{\infty}^{m}-\hat{\mathcal{C}}_{\infty}^{\mathcal{H}}\Gamma^{-2}\right)}\left\{ \left(\hat{\mathcal{C}}_{t}^{\mathcal{H}}-C_{t}^{H}\right)\Gamma^{-2}\left[\mathrm{d}y_{t}-h_{t}\left(y_{t}\right)\mathrm{d}t\right]+G_{t}\left(y_{t}\right)\mathrm{d}t\right\} \\
 & +e^{-t\left(L_{\infty}^{m}-\hat{\mathcal{C}}_{\infty}^{\mathcal{H}}\Gamma^{-2}\right)}\left[\left(\hat{\mathcal{C}}_{\infty}^{\mathcal{H}}-\hat{\mathcal{C}}_{t}^{\mathcal{H}}\right)\Gamma^{-2}\mathcal{H}\hat{\rho}_{t}-\left(\hat{\mathcal{C}}_{\infty}^{\mathcal{H}}-C_{t}^{H}\right)\Gamma^{-2}\bar{H}_{t}\right]\mathrm{d}t.
\end{align*}
Above, we denote the residual term as
\[
G_{t}\left(y_{t}\right)=\left[\left\langle \mathcal{L}_{t}\left(y_{t}\right)H,\hat{\rho}_{t}\right\rangle -L_{\infty}^{m}\mathcal{H}\hat{\rho}_{t}\right]+\left[\left\langle \mathcal{L}_{t}\left(y_{t}\right)H,\tilde{\rho}_{t}\right\rangle -L_{\infty}^{m}\bar{H}_{t}\right].
\]
By similar computation in Lemma~\ref{lem:cov_limit} according to
assumption \eqref{eq:limit_filter}, there is
\[
\left|G_{t}\right|\leq\begin{aligned}\left|\left\langle \mathcal{L}_{t}\left(y_{t}\right)H,\hat{\rho}_{t}\right\rangle -L_{\infty}^{m}\mathcal{H}\hat{\rho}_{\infty}\right| & +\left|L_{\infty}^{m}\left(\mathcal{H}\hat{\rho}_{\infty}-\mathcal{H}\hat{\rho}_{t}\right)\right|\\
+\left|\left\langle \mathcal{L}_{t}\left(y_{t}\right)H,\tilde{\rho}_{t}\right\rangle -L_{\infty}^{m}\bar{H}_{\infty}\right| & +\left|L_{\infty}^{m}\left(\bar{H}_{\infty}-\bar{H}_{t}\right)\right|
\end{aligned}
\leq Ke^{-\bar{\lambda}t}\rightarrow0
\]
a.s. as $t\rightarrow\infty$ using the uniqueness of the solutions
$\mathcal{H}\hat{\rho}_{t}\rightarrow\mathcal{H}\hat{\rho}_{\infty}$
and $\bar{H}_{t}\rightarrow\bar{H}_{\infty}$ (with exponential decay
in the stronger convergence case \eqref{eq:conv_exp}). Equivalently,
the above SDE can be written as (ignoring initial condition by assuming
$\mathcal{H}\hat{\rho}_{0}=\bar{H}_{0}$)
\begin{align*}
\mathcal{H}\hat{\rho}_{t}-\bar{H}_{t} & =\int_{0}^{t}e^{\left(t-s\right)\left(L_{\infty}^{m}-\hat{\mathcal{C}}_{\infty}^{\mathcal{H}}\Gamma^{-2}\right)}\left\{ \left(\hat{\mathcal{C}}_{s}^{\mathcal{H}}-C_{s}^{H}\right)\Gamma^{-2}\left[\mathrm{d}y_{s}-h_{s}\left(y_{s}\right)\mathrm{d}s\right]+G_{s}\left(y_{s}\right)\mathrm{d}s\right\} \\
 & +\int_{0}^{t}e^{\left(t-s\right)\left(L_{\infty}^{m}-\hat{\mathcal{C}}_{\infty}^{\mathcal{H}}\Gamma^{-2}\right)}\left[\left(\hat{\mathcal{C}}_{\infty}^{\mathcal{H}}-\hat{\mathcal{C}}_{s}^{\mathcal{H}}\right)\Gamma^{-2}\mathcal{H}\hat{\rho}_{s}-\left(\hat{\mathcal{C}}_{\infty}^{\mathcal{H}}-C_{s}^{H}\right)\Gamma^{-2}\bar{H}_{s}\right]\mathrm{d}s.
\end{align*}
Therefore, by taking the expectation for each term on the right hand
side of the above identity
\begin{align*}
\mathbb{E}\left[\left|\mathcal{H}\hat{\rho}_{t}-\bar{H}_{t}\right|^{2}\right] & \leq5\mathbb{E}\left|\int_{0}^{t}e^{\left(t-s\right)\left(L_{\infty}^{m}-\hat{\mathcal{C}}_{\infty}^{\mathcal{H}}\Gamma^{-2}\right)}\left(\hat{\mathcal{C}}_{s}^{\mathcal{H}}-C_{s}^{H}\right)\Gamma^{-2}\mathrm{d}y_{s}\right|^{2}\\
 & +5\mathbb{E}\left|\int_{0}^{t}e^{\left(t-s\right)\left(L_{\infty}^{m}-\hat{\mathcal{C}}_{\infty}^{\mathcal{H}}\Gamma^{-2}\right)}G_{s}\left(y_{s}\right)\mathrm{d}s\right|^{2}\\
 & +5\mathbb{E}\left|\int_{0}^{t}e^{\left(t-s\right)\left(L_{\infty}^{m}-\hat{\mathcal{C}}_{\infty}^{\mathcal{H}}\Gamma^{-2}\right)}\left(\hat{\mathcal{C}}_{s}^{\mathcal{H}}-C_{s}^{H}\right)\Gamma^{-2}h_{s}\left(y_{s}\right)\mathrm{d}s\right|^{2}\\
 & +5\mathbb{E}\left|\int_{0}^{t}e^{\left(t-s\right)\left(L_{\infty}^{m}-\hat{\mathcal{C}}_{\infty}^{\mathcal{H}}\Gamma^{-2}\right)}\left(\hat{\mathcal{C}}_{\infty}^{\mathcal{H}}-\hat{\mathcal{C}}_{s}^{\mathcal{H}}\right)\Gamma^{-2}\mathcal{H}\hat{\rho}_{s}\mathrm{d}s\right|^{2}\\
 & +5\mathbb{E}\left|\int_{0}^{t}e^{\left(t-s\right)\left(L_{\infty}^{m}-\hat{\mathcal{C}}_{\infty}^{\mathcal{H}}\Gamma^{-2}\right)}\left(\hat{\mathcal{C}}_{\infty}^{\mathcal{H}}-C_{s}^{H}\right)\Gamma^{-2}\bar{H}_{s}\mathrm{d}s\right|^{2}.
\end{align*}
The second line above follows the same argument as in Lemma~\ref{lem:cov_limit}
and using Cauchy-Schwarz inequality
\begin{align*}
\mathbb{E}\left|\int_{0}^{t}e^{\left(t-s\right)\left(L_{\infty}^{m}-\hat{\mathcal{C}}_{\infty}^{\mathcal{H}}\Gamma^{-2}\right)}G_{s}\left(y_{s}\right)\mathrm{d}s\right|^{2} & \leq\int_{0}^{t}\left\Vert e^{\frac{1}{2}\left(t-s\right)\left(L_{\infty}^{m}-\hat{\mathcal{C}}_{\infty}^{\mathcal{H}}\Gamma^{-2}\right)}\right\Vert ^{2}\mathrm{d}s\mathbb{E}\int_{0}^{t}\left|e^{\frac{1}{2}\left(t-s\right)\left(L_{\infty}^{m}-\hat{\mathcal{C}}_{\infty}^{\mathcal{H}}\Gamma^{-2}\right)}G_{s}\left(y_{s}\right)\right|^{2}\mathrm{d}s.\\
 & \leq\lambda_{m}^{-1}\mathbb{E}\int_{0}^{t}e^{-\lambda_{m}\left(t-s\right)}\left|G_{s}\left(y_{s}\right)\right|^{2}\mathrm{d}s\\
 & \leq Ke^{-\lambda_{1}t}\rightarrow0.
\end{align*}
Similar results can be achieved for line three to five following the
convergence (or exponential convergence) of the integrants. Finally,
for the first line using the observation equation in \eqref{eq:fpf},
that is, $\mathrm{d}y_{t}=\left[\mathcal{H}\rho_{t}+h_{t}\left(y_{t}\right)\right]\mathrm{d}t+\Gamma\mathrm{d}B_{t}$,
there is
\begin{align*}
 & \mathbb{E}\left|\int_{0}^{t}e^{\left(t-s\right)\left(L_{\infty}^{m}-\hat{\mathcal{C}}_{\infty}^{\mathcal{H}}\Gamma^{-2}\right)}\left(\hat{\mathcal{C}}_{s}^{\mathcal{H}}-C_{s}^{H}\right)\Gamma^{-2}\mathrm{d}y_{s}\right|^{2}\\
\leq & \mathbb{E}\left|\int_{0}^{t}e^{\left(t-s\right)\left(L_{\infty}^{m}-\hat{\mathcal{C}}_{\infty}^{\mathcal{H}}\Gamma^{-2}\right)}\left(\hat{\mathcal{C}}_{s}^{\mathcal{H}}-C_{s}^{H}\right)\Gamma^{-2}\left[\mathcal{H}\rho_{t}+h_{t}\left(y_{t}\right)\right]\mathrm{d}s\right|^{2}\\
+ & \mathbb{E}\int_{0}^{t}e^{2\left(t-s\right)\left(L_{\infty}^{m}-\hat{\mathcal{C}}_{\infty}^{\mathcal{H}}\Gamma^{-2}\right)}\left\Vert \hat{\mathcal{C}}_{s}^{\mathcal{H}}-C_{s}^{H}\right\Vert ^{2}\mathrm{d}s\\
\leq & K_{1}\int_{0}^{t}e^{-\lambda_{m}\left(t-s\right)}\left(\mathbb{E}\left\Vert \hat{\mathcal{C}}_{s}^{\mathcal{H}}-\hat{\mathcal{C}}_{\infty}^{\mathcal{H}}\right\Vert ^{2}+\mathbb{E}\left\Vert C_{s}^{H}-\hat{\mathcal{C}}_{\infty}^{\mathcal{H}}\right\Vert ^{2}\right)\mathrm{d}s\\
\leq & Ke^{-\lambda_{1}t}\rightarrow0.
\end{align*}
Above in the second to the last inequality, we use the uniform boundedness
of $\mathbb{E}\left|\mathcal{H}\rho_{t}\right|$ and $\mathbb{E}\left|h_{t}\left(y_{t}\right)\right|$
as $t\rightarrow\infty$, and the last line uses the convergence (or
exponential convergence) of the covariance \eqref{eq:conv_cov} or
\eqref{eq:conv_exp}. Combining all the above bounds, we finally get
\eqref{eq:conv_mean} and \eqref{eq:conv_mean_exp}.
\end{proof}
Together with the equilibrium estimate in \eqref{eq:uncertainty_v}
combining the result in Theorem~\ref{thm:mean_conv}, we can also
get the same error estimate compared with the target field as $t\rightarrow\infty$
\[
\mathbb{E}\left[\left|\mathcal{H}\rho_{t}-\bar{H}_{t}\right|^{2}\right]\leq\mathbb{E}\left[\left|\mathcal{H}\rho_{t}-\mathcal{H}\hat{\rho}_{t}\right|^{2}\right]+\mathbb{E}\left[\left|\mathcal{H}\hat{\rho}_{t}-\bar{H}_{t}\right|^{2}\right]\rightarrow\mathrm{tr}\hat{\mathcal{C}}_{\infty}^{\mathcal{H}}.
\]
Thus, we show that the statistical filtering  model solution $\bar{H}_{t}$
converges to the optimal filter solution with the same mean square
error, and exponential convergence is reached if the forecast model
has exponential convergence to the equilibrium.

As a final comment, we can further relax Assumption~\ref{assu:limit}
as there exist deterministic uniformly continuous functions, $L_{\infty}^{m}\left(y\right)$
and $L_{\infty}^{v}\left(y\right)$, so that for any $y_{t}\rightarrow y_{\infty},t\rightarrow\infty$
\begin{equation}
\begin{aligned}\left\langle \mathcal{L}_{t}\left(y_{t}\right)H,\hat{\rho}_{t}\right\rangle \rightarrow L_{\infty}^{m}\left(y_{\infty}\right)\mathcal{H}\hat{\rho}_{\infty}, & \;\left\langle \mathcal{L}_{t}H,\hat{\mathcal{C}}_{t}H^{\intercal}\right\rangle \rightarrow L_{\infty}^{v}\left(y_{\infty}\right)\hat{\mathcal{C}}_{\infty}^{\mathcal{H}},\\
\left\langle \mathcal{L}_{t}\left(y_{t}\right)H,\tilde{\rho}_{t}\right\rangle \rightarrow L_{\infty}^{m}\left(y_{\infty}\right)\mathcal{H}\tilde{\rho}_{\infty}, & \;\left\langle \mathcal{L}_{t}H,\tilde{\rho}_{t}H_{t}^{\prime}{}^{\intercal}\right\rangle \rightarrow L_{\infty}^{v}\left(y_{\infty}\right)C_{\infty}^{H},
\end{aligned}
\label{eq:limit_filter-1}
\end{equation}
a.s. as $t\rightarrow\infty$. And the limiting matrices are uniformly
bounded by negative-definite matrices $A_{m},A_{v}$
\begin{equation}
L_{\infty}^{m}\left(y\right)-\hat{\mathcal{C}}_{\infty}^{\mathcal{H}}\Gamma^{-2}\preceq A_{m}\prec0,\quad L_{\infty}^{v}\left(y\right)-\hat{\mathcal{C}}_{\infty}^{\mathcal{H}}\Gamma^{-2}\preceq A_{v}\prec0.\label{eq:boundedness}
\end{equation}
In addition, non-zero noise in \eqref{eq:mf-fpf} can be included
satisfying $\Sigma_{t}\rightarrow0$ as $t\rightarrow\infty$. Then,
the same result applies for the convergence of the observation mean
and covariance at long-time limit as in Lemma~\ref{lem:cov_limit}
and Theorem~\ref{thm:mean_conv}.

\section{Ensemble approximation of the statistical filtering model\protect\label{sec:Ensemble-approximation}}

Finally, we discuss the construction of practical ensemble methods
for solving the statistical filtering model \eqref{eq:mf-fpf} with
explicit model parameters that is easy to implement. We demonstrate
the accuracy of the numerical approximation at the limit of large
number of samples and small integration time step.

\subsection{Numerical methods for the approximating filter with discrete observations\protect\label{subsec:Numerical-methods}}

Assume that the observation data comes at discrete times $t_{n}=\delta n$
with a constant observation frequency $\delta$. We can introduce
the linear interpolation of $y_{n}=\left(\bar{u}_{t_{n}},R_{t_{n}}\right)\in\mathbb{R}^{p}$
as
\begin{equation}
\frac{\mathrm{d}y_{t}^{\delta}}{\mathrm{d}t}=\frac{y_{n+1}-y_{n}}{t_{n+1}-t_{n}}=\frac{\Delta y_{n+1}}{\delta},\label{eq:obs_interp}
\end{equation}
during time interval $t\in\left[t_{n},t_{n+1}\right]$. We propose
ensemble algorithms to approximate the filtering distribution $\tilde{\rho}_{t_{n}}\sim\tilde{Z}_{t_{n}}$
conditional on the statistical observations $Y_{t}^{\delta}=\left\{ y_{s}^{\delta},s\leq t\right\} $
based on the statistical filter equation \eqref{eq:mf-fpf}. First,
$N$ independent particles, $\tilde{\mathbf{Z}}_{t}=\left\{ \tilde{Z}_{t}^{\left(i\right)}\right\} _{i=1}^{N}$,
are drawn to sample the initial distribution of the stochastic state.
Then, the particles are evolved according to the following SDE with
drift terms $a_{t}^{m},a_{t}^{v}$ and control gains $K_{t}^{m},K_{t}^{v}$
\begin{equation}
\begin{aligned}\mathrm{d}\tilde{Z}_{t}^{\left(i\right)}= & \;L\left(\bar{u}_{t}^{N}\right)\tilde{Z}_{t}^{\left(i\right)}\mathrm{d}t+\varGamma\left(\tilde{Z}_{t}^{\left(i\right)}\tilde{Z}_{t}^{\left(i\right)\intercal}-R_{t}^{N}\right)\mathrm{d}t+\Sigma_{t}\mathrm{d}\tilde{W}_{t}^{\left(i\right)}\\
 & +a_{t}^{m}\left(\tilde{Z}_{t}^{\left(i\right)}\right)\mathrm{d}t+K_{t}^{m}\left(\tilde{Z}_{t}^{\left(i\right)}\right)\left\{ \mathrm{d}\bar{u}_{t}^{\delta}-\left[H^{m}\left(\tilde{Z}_{t}^{\left(i\right)}\right)+h_{m,t}\left(\bar{u}_{t}^{\delta}\right)\right]\mathrm{d}t-\Gamma_{m,t}\mathrm{d}\tilde{B}_{m,t}^{\left(i\right)}\right\} \\
 & +a_{t}^{v}\left(\tilde{Z}_{t}^{\left(i\right)}\right)\mathrm{d}t+K_{t}^{v}\left(\tilde{Z}_{t}^{\left(i\right)}\right)\left\{ \mathrm{d}R_{t}^{\delta}-\left[H^{v}\left(\tilde{Z}_{t}^{\left(i\right)}\right)+h_{v,t}\left(\bar{u}_{t}^{\delta},R_{t}^{\delta}\right)\right]\mathrm{d}t-\Gamma_{v,t}\mathrm{d}\tilde{B}_{v,t}^{\left(i\right)}\right\} ,
\end{aligned}
\label{eq:model_full}
\end{equation}
where the expressions for $h_{m},h_{v}$ are defined in \eqref{eq:dyn_obs},
$H^{m},H^{v}$ are defined in \eqref{eq:operators_obs}, and $\tilde{B}_{m,t}^{\left(i\right)},\tilde{B}_{v,t}^{\left(i\right)}$
are independent white noises. Above, in the first line of \eqref{eq:model_full}
for the forecast step of the filter, the first two moments $\left(\bar{u}_{t}^{N},R_{t}^{N}\right)$
can be explicitly solved by the statistical equations according to
the stochastic-statistical model \eqref{eq:closure_model}
\begin{equation}
\begin{aligned}\frac{\mathrm{d}\bar{u}_{t}^{N}}{\mathrm{d}t}= & \:M\left(\bar{u}_{t}^{N}\right)+F_{t}+\mathbb{E}^{N}\left[H^{m}\left(\tilde{\mathbf{Z}}_{t}\right)\right],\\
\frac{\mathrm{d}R_{t}^{N}}{\mathrm{d}t}= & \:L\left(\bar{u}_{t}^{N}\right)R_{t}^{N}+R_{t}^{N}L\left(\bar{u}_{t}^{N}\right)^{\intercal}+\Sigma_{t}\Sigma_{t}^{\intercal}+\mathbb{E}^{N}\left[H^{v}\left(\tilde{\mathbf{Z}}_{t}\right)\right]+\epsilon^{-1}\left(\mathbb{E}^{N}\left[\tilde{\mathbf{Z}}_{t}\tilde{\mathbf{Z}}_{t}^{\intercal}\right]-R_{t}^{N}\right).
\end{aligned}
\label{eq:obs_full}
\end{equation}
The expectation is computed through the empirical average of the ensemble,
$\mathbb{E}^{N}f\left(\tilde{\mathbf{Z}}\right)=\frac{1}{N}\sum_{i=1}^{N}f\left(\tilde{Z}^{\left(i\right)}\right)$.
In this way, the particle simulation of \eqref{eq:model_full} can
be carried out easily for each individual sample $\tilde{Z}_{t}^{\left(i\right)}$,
and the dependence on the distribution of the whole interacting particles
is only introduced through the empirical average in the statistical
equations \eqref{eq:obs_full}. We summarize the ensemble filtering
strategy in Algorithm~\ref{alg:filter_full}.

\begin{algorithm}
\begin{algorithmic}[1] 
\Ensure{Get the interpolated sequence of observation \eqref{eq:obs_interp} of the mean and covariance $y^{\delta}_t=\left\{ \bar{u}_{t}^{\delta},R_{t}^{\delta}\right\}, t\in\left[0,T\right]$; and  introduce discrete time integration step $\tau$, and the initial distribution $\rho_0$ of the model state.}
\Require{At the initial time $t=0$, draw an ensemble of samples $\left\{\tilde{Z}_{0}^{\left(i\right)}\right\}_{i=1}^{N}$ from the initial distribution $\rho_0$, and compute initial mean and covariance $\left\{\bar{u}^{N}_{0}, R^{N}_{0}\right\}$ consistent with the statistics w.r.t. $\rho_0$. }

\For{$n = 0$ while $n < \left\lfloor T/\tau \right\rfloor $, during the time updating interval $t\in\left[t_{n},t_{n+1}\right]$  with $t_{n}=n\tau$}
	\State{Compute the gain functions $K^{m}_{t_n}$ and $K^{v}_{t_n}$ using \eqref{eq:kalman_algm} and the associated drift terms $a^{m}_{t_n}$ and $a^{v}_{t_n}$.}
    \State{Update the samples $\left\{\tilde{Z}_{t_{n+1}}^{\left(i\right)}\right\}_{i=1}^{N}$ using \eqref{eq:model_full} with the statistical states $\left\{\bar{u}^{N}_{t}, R^{N}_{t}\right\}$ and observation data $y^{\delta}_{t}$.}
    \State{Update the statistical mean and covariance $\left\{\bar{u}^{N}_{t_{n+1}}, R^{N}_{t_{n+1}}\right\}$ by integrating \eqref{eq:obs_full} to the next time step using the average of all samples.}
\EndFor

\end{algorithmic}

\caption{Ensemble statistical filter with observations in mean and covariance\protect\label{alg:filter_full}}
\end{algorithm}

\begin{rem*}
Solving the equations \eqref{eq:model_full} may still demand high
computational cost for resolving the multiple nonlinear coupling terms
in high dimension $d\gg1$. One potential approach to address the
computational challenge is to adopt the efficient random batch approach
\cite{qi2023high,qi2023random} developed for the coupled models \eqref{eq:closure_model}.
A detailed investigation of the numerical methods will be performed
in the follow-up research.
\end{rem*}

\subsection{Construction of explicit model operators in the analysis step}

In the second and third lines of \eqref{eq:model_full} for the analysis
step update of filtering, we still need to propose explicit expressions
for functions $a_{t}^{m},K_{t}^{m}$ and $a_{t}^{v},K_{t}^{v}$ according
to the observations of the mean and covariance respectively. According
to Theorem~\ref{thm:consist_analy}, the gain function $K_{t}$ needs
to be solved from equation \eqref{eq:coeff_sde}, that is,
\[
-\nabla\cdot\left(K_{t}^{\intercal}\tilde{\rho}_{t}\right)=\tilde{\rho}_{t}\Gamma_{t}^{-2}\left(H\left(z\right)-\bar{H}_{t}\right).
\]
Then the drift function $a_{t}$ can be directly computed from the
solution of $K_{t}$ as 
\[
a_{t}=\nabla\cdot\left(K_{t}\Gamma_{t}^{2}K_{t}^{\intercal}\right)-K_{t}\Gamma_{t}^{2}\nabla\cdot K_{t}^{\intercal}.
\]
In general, it is still difficult to find solutions of the above equations.
By multiplying $H$ on both sides and integrating about $z$, the
identity for $K_{t}$ implies a necessary condition
\begin{equation}
\tilde{\mathbb{E}}\left[K_{t}^{\intercal}\nabla H\right]=\Gamma_{t}^{-2}C_{t}^{H},\label{eq:kalman_gain}
\end{equation}
where $C_{t}^{H}=\tilde{\mathbb{E}}\left[\left(H\left(\tilde{Z}_{t}\right)-\bar{H}_{t}\right)\left(H\left(\tilde{Z}_{t}\right)-\bar{H}_{t}\right)^{\intercal}\right]$
is the covariance of $H$. Therefore, we can first design proper gain
functions $K_{t}$ by solving \eqref{eq:kalman_gain} according to
the specific structures of $H^{m}:\mathbb{R}^{d}\rightarrow\mathbb{R}^{d}$
and $H^{v}:\mathbb{R}^{d}\rightarrow\mathbb{R}^{d^{2}}$ required
in our problem in \eqref{eq:operators_obs} 
\begin{equation}
\begin{aligned}H_{k}^{m}\left(z\right)= & \sum_{m,n}\gamma_{kmn}z_{m}z_{n}=z^{\intercal}A_{k}z,\\
H_{kl}^{v}\left(z\right)= & \sum_{m,n}\gamma_{kmn}z_{m}z_{n}z_{l}+\gamma_{lmn}z_{m}z_{n}z_{k}=\left(z^{\intercal}A_{k}z\right)z_{l}+z_{k}\left(z^{\intercal}A_{l}z\right),
\end{aligned}
\label{eq:obs_func}
\end{equation}
for all $1\leq k,l\leq d$ where we rewrite the quadratic and cubic
functions using the symmetric coefficient matrix, $A_{k}^{\intercal}=A_{k}\in\mathbb{R}^{d\times d}$,
using the assumed structural symmetry in the coupling coefficient
$\gamma_{kmn}$. The resulting gain functions are then constructed
with the following specific expressions.
\begin{prop}
\label{prop:Kalman_gain}The matrix-valued functions $K_{t}^{m}=\tilde{K}^{m}\Gamma_{m,t}^{-2}$
and $K_{t}^{v}=\tilde{K}^{v}\Gamma_{v,t}^{-2}$ with $\tilde{K}^{m}\left(z\right)\in\mathbb{R}^{d\times d}$
and $\tilde{K}^{v}\left(z\right)\in\mathbb{R}^{d\times d^{2}}$ in
the following expressions
\begin{equation}
\begin{aligned}\tilde{K}_{j,k}^{m}\left(z\right)= & \frac{1}{2}z_{j}\left[\left(z^{\intercal}A_{k}z\right)-\bar{H}_{k}^{m}\right],\\
\tilde{K}_{j,kl}^{v}\left(z\right)= & \frac{1}{3}z_{j}\left(z^{\intercal}A_{k}z\right)z_{l}+\frac{1}{3}z_{j}\left(z^{\intercal}A_{l}z\right)z_{k}-\frac{1}{3}\bar{H}_{kl}^{v}.
\end{aligned}
\label{eq:kalman_algm}
\end{equation}
for $1\leq k,l\leq d$ and $1\leq j\leq d$ satisfy the equation \eqref{eq:kalman_gain}
according to the structures of the functions $H^{m}$ and $H^{v}$
in the form of \eqref{eq:obs_func} respectively, and $\bar{H}^{m}=\tilde{\mathbb{E}}\left[H^{m}\left(\tilde{Z}\right)\right]$
and $\bar{H}^{v}=\tilde{\mathbb{E}}\left[H^{v}\left(\tilde{Z}\right)\right]$.
\end{prop}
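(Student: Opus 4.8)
The plan is to verify the single identity \eqref{eq:kalman_gain} directly for the two explicit gains, since Proposition~\ref{prop:Kalman_gain} only asserts that the stated $K_t^m, K_t^v$ satisfy this necessary condition (obtained from \eqref{eq:coeff_sde} by contracting with $H$), not the stronger divergence relation itself. First I would peel off the noise factor: writing $K_t^m = \tilde{K}^m \Gamma_{m,t}^{-2}$ with $\Gamma_{m,t}$ symmetric positive definite gives $(K_t^m)^\intercal = \Gamma_{m,t}^{-2}(\tilde{K}^m)^\intercal$, so \eqref{eq:kalman_gain} is equivalent, after left-multiplying by the invertible $\Gamma_{m,t}^2$, to the noise-free identity
\[
\tilde{\mathbb{E}}\left[(\tilde{K}^m)^\intercal \nabla H^m\right] = C_t^{H^m},
\]
and analogously $\tilde{\mathbb{E}}\bigl[(\tilde{K}^v)^\intercal \nabla H^v\bigr] = C_t^{H^v}$ for the covariance block. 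This isolates the combinatorial heart of the claim from the bookkeeping with $\Gamma$.

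The observation driving the whole computation is Euler's homogeneous-function identity. The quadratic $H_k^m(z) = z^\intercal A_k z$ is homogeneous of degree two and the cubic $H_{kl}^v(z) = (z^\intercal A_k z)z_l + z_k(z^\intercal A_l z)$ of degree three, so
\[
\sum_j z_j \partial_{z_j} H_k^m = 2 H_k^m, \qquad \sum_j z_j \partial_{z_j} H_{kl}^v = 3 H_{kl}^v,
\]
which is precisely why the prefactors $\tfrac12$ and $\tfrac13$ appear in \eqref{eq:kalman_algm}. I would then assemble the matrix product componentwise, using the symmetry $A_l^\intercal = A_l$ to write $\partial_{z_j}H_l^m = 2(A_l z)_j$. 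For the mean block, the $j$-contraction recombines into the Euler operator acting on $H_l^m$:
\[
\left[(\tilde{K}^m)^\intercal \nabla H^m\right]_{kl} = \sum_j \tilde{K}^m_{j,k}\,\partial_{z_j}H_l^m = \tfrac12\bigl(H_k^m - \bar{H}_k^m\bigr)\sum_j z_j\,\partial_{z_j}H_l^m = \bigl(H_k^m - \bar{H}_k^m\bigr)H_l^m,
\]
and the degree-three relation collapses the cubic gain identically to $\bigl(H_{kl}^v - \bar{H}_{kl}^v\bigr)H_{mn}^v$, reading the $z_j$-weighted part of $\tilde{K}^v_{j,(kl)}$ in \eqref{eq:kalman_algm} as $\tfrac13 z_j H_{kl}^v$.

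Finally I would take the conditional expectation $\tilde{\mathbb{E}}[\cdot]$ and exploit that the mean-correction term integrates away: because $\tilde{\mathbb{E}}[H_l^m - \bar{H}_l^m]=0$,
\[
\tilde{\mathbb{E}}\bigl[(H_k^m - \bar{H}_k^m)H_l^m\bigr] = \tilde{\mathbb{E}}\bigl[(H_k^m - \bar{H}_k^m)(H_l^m - \bar{H}_l^m)\bigr] = (C_t^{H^m})_{kl},
\]
and verbatim for the cubic block, which closes the argument. The main difficulty here is organizational rather than analytic: one must keep the index conventions consistent --- the first index $j$ of $\tilde{K}$ labels the $z$-component contracted by the gradient, while the second slot (a single index $k$ for $H^m$, a bundled pair $(k,l)$ for $H^v$) labels the observation coordinate --- and confirm that the $d^2$-fold bundling for the cubic case still aligns with $C_t^{H^v}\in\mathbb{R}^{d^2\times d^2}$. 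The one genuinely delicate point is that the $\bar{H}$-correction in $\tilde{K}^v$ must carry the weight $z_j$ (exactly as in the mean gain $\tilde{K}^m$) for the Euler identity to apply uniformly and for the term to be annihilated by $\tilde{\mathbb{E}}$; without that weight the residual $\tfrac13\sum_j\partial_{z_j}H^v_{mn}$ would not reduce to $\bar{H}^v_{mn}$, so this is where I would check the displayed formula \eqref{eq:kalman_algm} most carefully.
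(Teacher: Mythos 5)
Your proof is correct and follows essentially the same route as the paper's: a direct componentwise computation of the contraction $\tilde{K}^{\intercal}\nabla H$ (your Euler homogeneous-function framing is simply a name for the same $\sum_{j}z_{j}\partial_{z_{j}}$ step the paper carries out explicitly, and it does cleanly explain the prefactors $\tfrac{1}{2}$ and $\tfrac{1}{3}$), followed by taking $\tilde{\mathbb{E}}$ and using $\tilde{\mathbb{E}}\left[H^{\prime}\right]=0$ to recover $C_{t}^{H}$. Your closing caveat is well placed: the paper's own computation for the cubic block treats the $\bar{H}_{kl}^{v}$ correction as carrying the weight $z_{j}$, so the displayed formula \eqref{eq:kalman_algm} as printed (with the unweighted $-\tfrac{1}{3}\bar{H}_{kl}^{v}$) appears to contain a typo that your reading correctly repairs --- without the $z_{j}$ weight the residual $\tfrac{1}{3}\bar{H}_{kl}^{v}\sum_{j}\partial_{z_{j}}H_{pq}^{v}$ would not cancel and the identity would fail.
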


The proof of Proposition~\ref{prop:Kalman_gain} is put in Appendix~\ref{sec:Detailed-proofs}.
The average terms, $\bar{H}^{m},\bar{H}^{v}$, are already computed
in the statistical equations \eqref{eq:obs_full} thus no additional
computational cost is needed. On the other hand, it is noticed that
\eqref{eq:kalman_algm} can only give a necessary condition for the
gain operators and may not guarantee the original identity for $K_{t}$
in general. However, in the proof of Theorem~\ref{thm:consist_analy},
it shows that \eqref{eq:kalman_gain} is the main relation used to
derive the consistent analysis statistics on the mean of $H\left(\tilde{Z}_{t}\right)$.
Therefore, \eqref{eq:kalman_algm} provides a desirable candidate
for practical implementations of the algorithm concerning the consistency
in the leading moments.

\subsection{Ensemble approximation in the discrete filtering model}

At last, we discuss the accuracy in the finite ensemble approximation.
Let $\tilde{\rho}_{t}^{N}\left(z\right)=\frac{1}{N}\sum_{i=1}^{N}\delta\left(z-\tilde{Z}_{t}^{\left(i\right)}\right)$
be the random field from the empirical PDF of the ensemble approximation
with finite ensemble size $N$, and $\tilde{\rho}_{t}$ is the random
field from \eqref{eq:mf-fpf} conditional on the observations $\mathcal{G}_{t}$.
Additional assumptions are needed for the structures of the model.
\begin{assumption}
\label{assu:model_disc}We assume that the functions $M:\mathbb{R}^{d}\rightarrow\mathbb{R}^{d}$
and $L:\mathbb{R}^{d}\rightarrow\mathbb{R}^{d\times d}$ in the mean
and covariance equations \eqref{eq:model_conti} are Lipschitz continuous,
that is, there is a constant $\beta>0$ so that
\[
\left|M\left(u\right)-M\left(v\right)\right|\leq\beta\left|u-v\right|,\quad\left\Vert L\left(u\right)-L\left(v\right)\right\Vert \leq\beta\left|u-v\right|.
\]
In addition, the coefficients in $\Gamma_{k}\left(R\right)=\sum_{m,n}\gamma_{kmn}R_{mn}$
are uniformly bounded, that is, there exists a constant $C>0$, so
that for all $k,m,n$
\[
\left|\gamma_{kmn}\right|\leq C.
\]
\end{assumption}

First, we direct use the conclusion from the McKean-Vlasov limit of
the $N$-particle system \cite{oelschlager1984martingale,graham1996asymptotic}.
Under the above assumptions of the model coefficients, the ensemble
representation \eqref{eq:model_full} will converge to the continuous
system \eqref{eq:mf-fpf} as $N\rightarrow\infty$. In particular,
assume that the a unique solution exists for the McKean-Vlasov SDE,
and the initial ensemble $\left\{ \tilde{Z}_{0}^{\left(i\right)}\right\} _{i=1}^{N}$
is drawn from i.i.d. random samples. Given the observations $\mathcal{G}_{t}$,
we have for any $\varphi\in C_{b}^{2}\left(\mathbb{R}^{d}\right)$
\begin{equation}
\left\langle \varphi,\tilde{\rho}_{t}^{N}\right\rangle =\frac{1}{N}\sum_{i=1}^{N}\varphi\left(\tilde{Z}_{t}^{\left(i\right)}\right)\rightarrow\mathbb{E}\left[\varphi\left(\tilde{Z}_{t}\right)\mid\mathcal{G}_{t}\right]=\left\langle \varphi,\tilde{\rho}_{t}\right\rangle ,\label{eq:lln}
\end{equation}
a.s. as $N\rightarrow\infty$. Furthermore, there is the error estimate
for the empirical estimate $\mathbb{E}^{N}\varphi=\frac{1}{N}\sum\varphi\left(\tilde{Z}_{t}^{\left(i\right)}\right)$
\begin{equation}
\mathbb{E}\left[\sup_{0\leq t\leq T}\left|\left\langle \varphi,\tilde{\rho}_{t}^{N}\right\rangle -\left\langle \varphi,\tilde{\rho}_{t}\right\rangle \right|^{2}\right]\leq\frac{C_{T}}{N}\left\Vert \varphi\right\Vert _{\infty}^{2}.\label{eq:conv_pdf}
\end{equation}
Detailed proofs on \eqref{eq:lln} and \eqref{eq:conv_pdf} can be
found in such as Chapter 9 of \cite{bain2009fundamentals}.

Next, we consider the finite ensemble and discrete time estimation
of the leading-order mean and covariance from the filter model. With
the ensemble approximation of the density function $\tilde{\rho}_{t}^{N}$,
the mean and covariance estimates are computed from the equations
\eqref{eq:obs_full}
\begin{equation}
\begin{aligned}\frac{\mathrm{d}\bar{u}_{t}^{N,\tau}}{\mathrm{d}t}= & \:M\left(\bar{u}_{\sigma\left(t\right)}^{N,\tau}\right)+F_{\sigma\left(t\right)}+\frac{1}{N}\sum_{i=1}^{N}H^{m}\left(\tilde{Z}_{\sigma\left(t\right)}^{\left(i\right)}\right),\\
\frac{\mathrm{d}R_{t}^{N,\tau}}{\mathrm{d}t}= & \:L\left(\bar{u}_{\sigma\left(t\right)}^{N,\tau}\right)R_{\sigma\left(t\right)}^{N,\tau}+R_{\sigma\left(t\right)}^{N,\tau}L\left(\bar{u}_{\sigma\left(t\right)}^{N,\tau}\right)^{\intercal}+\Sigma_{\sigma\left(t\right)}^{2}+\frac{1}{N}\sum_{i=1}^{N}H^{v}\left(\tilde{Z}_{\sigma\left(t\right)}^{\left(i\right)}\right),
\end{aligned}
\label{eq:model_stat_ens}
\end{equation}
where a finite time integration scheme is also applied with $\sigma\left(t\right)=n\tau$
for $t\in\left[t_{n},t_{n+1}\right]$. Above, we neglect the last
relaxation term with $\epsilon$ since it will automatically vanish
with the resulting consistency. For clarity, we show below again the
continuous forecast model from \eqref{eq:closure_model} 
\begin{equation}
\begin{aligned}\frac{\mathrm{d}\bar{u}_{t}}{\mathrm{d}t}= & \:M\left(\bar{u}_{t}\right)+F_{t}+\tilde{\mathbb{E}}H^{m}\left(\tilde{Z}_{t}\right),\\
\frac{\mathrm{d}R_{t}}{\mathrm{d}t}= & \:L\left(\bar{u}_{t}\right)R_{t}+R_{t}L\left(\bar{u}_{t}\right)^{\intercal}+\Sigma_{t}^{2}+\tilde{\mathbb{E}}H^{v}\left(\tilde{Z}_{t}\right).
\end{aligned}
\label{eq:model_conti}
\end{equation}
Notice that $\bar{u}_{t}^{N},R_{t}^{N}$ and $\bar{u}_{t},R_{t}^ {}$
are stochastic processes due to the random samples $\left\{ \tilde{Z}_{t}^{\left(i\right)}\right\} $
and $\tilde{Z}_{t}$ is still dependent on the observations $\mathcal{G}_{t}$.
We have the following convergence result for the finite ensemble $N$
and finite time step $\tau$ approximation to the continuous model.
\begin{prop}
\label{thm:conv_stat}If Assumption~\ref{assu:model_disc} is satisfied,
under the same initial condition the statistical solution of the ensemble
approximation model \eqref{eq:model_stat_ens} with discrete time
step $\tau$ converges to that of the continuous model \eqref{eq:model_conti}
with
\begin{equation}
\begin{aligned}\mathbb{E}\left[\sup_{n\tau\leq T}\left|\bar{u}_{t_{n}}^{N}-\bar{u}_{t_{n}}\right|^{2}\right] & \leq\left(C_{1,T}\tau+\frac{C_{2,T}}{N}\right)\left\Vert H^{m}\right\Vert _{\infty},\\
\mathbb{E}\left[\sup_{n\tau\leq T}\left|R_{t_{n}}^{N}-R_{t_{n}}\right|^{2}\right] & \leq\left(C_{1,T}^{\prime}\tau+\frac{C_{2,T}^{\prime}}{N}\right)\left(\left\Vert H^{m}\right\Vert _{\infty}+\left\Vert H^{v}\right\Vert _{\infty}\right).
\end{aligned}
\label{eq:stat_bnds}
\end{equation}
\end{prop}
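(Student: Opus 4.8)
The plan is to combine the already-established mean-field particle convergence \eqref{eq:conv_pdf} with a standard strong-error (Gronwall) analysis for the coupled ODE system governing $(\bar{u}_t,R_t)$, treating the two error sources separately and then summing them: the time-discretization error (the $\tau$ terms, arising from freezing the arguments at $\sigma(t)$) and the Monte-Carlo fluctuation error (the $1/N$ terms, arising from replacing $\tilde{\mathbb{E}}H^\bullet$ by the empirical average $\frac{1}{N}\sum_i H^\bullet(\tilde{Z}^{(i)})$). Throughout I would use that $\|H^m\|_\infty,\|H^v\|_\infty<\infty$, so that the empirical averages are uniformly bounded; together with Assumption~\ref{assu:model_disc} and a preliminary Gronwall estimate this yields uniform-in-$(N,\tau)$ a priori bounds $\sup_{t\le T}(|\bar{u}^{N}_t|+\|R^N_t\|)\le C_T$, obtained sequentially (bound $\bar{u}^N$ from its ODE using the linear growth of $M$, then bound $\|R^N\|$ using $\|L(\bar{u}^N)\|\le \beta|\bar{u}^N|+\|L(0)\|$). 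These bounds are indispensable for closing the bilinear covariance estimate below.

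First I would treat the mean equation. Writing $e^u_{t_n}=\bar{u}^N_{t_n}-\bar{u}_{t_n}$ and subtracting the integrated forms of \eqref{eq:model_stat_ens} and \eqref{eq:model_conti}, the error splits into four contributions: (i) $M(\bar{u}^{N,\tau}_{\sigma(s)})-M(\bar{u}_{\sigma(s)})$, bounded by $\beta|e^u_{\sigma(s)}|$ via the Lipschitz hypothesis; (ii) the time-freezing discrepancies $M(\bar{u}_{\sigma(s)})-M(\bar{u}_s)$ and $F_{\sigma(s)}-F_s$, each $O(\tau)$ since the continuous solution is Lipschitz in time (its right-hand side being bounded); (iii) the mean-field error $\langle H^m,\tilde{\rho}^N_{\sigma(s)}\rangle-\langle H^m,\tilde{\rho}_{\sigma(s)}\rangle$, controlled in mean square by $C_T\|H^m\|_\infty^2/N$ from \eqref{eq:conv_pdf}; and (iv) the time regularity $\langle H^m,\tilde{\rho}_{\sigma(s)}\rangle-\langle H^m,\tilde{\rho}_s\rangle=O(\tau)$. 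Taking the supremum, squaring, applying Cauchy--Schwarz on the time integral, and invoking discrete Gronwall with constant $\beta$ yields the first bound in \eqref{eq:stat_bnds}.

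Next I would repeat the argument for the covariance, where the only genuinely new feature is the bilinear term. I would split $L(\bar{u}^N)R^N-L(\bar{u})R=[L(\bar{u}^N)-L(\bar{u})]R^N+L(\bar{u})[R^N-R]$; the first summand is bounded by $\beta|e^u|\,\|R^N\|$, using the Lipschitz property of $L$, the a priori bound on $\|R^N\|$, and the mean-error estimate just obtained, while the second contributes $\|L(\bar{u})\|\,\|R^N-R\|$ to the Gronwall inequality for the covariance error. Adding the $H^v$ mean-field term (rate $C'_T\|H^v\|_\infty^2/N$ from \eqref{eq:conv_pdf}), the $O(\tau)$ freezing terms, and the discretized noise coefficient $\Sigma^2_{\sigma(s)}-\Sigma^2_s=O(\tau)$, Gronwall again closes the estimate and produces the second line of \eqref{eq:stat_bnds}.

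The main obstacle is the bilinear coupling in the covariance equation: because the $R$-error feeds back through $L(\bar{u}^N)R^N$ while simultaneously depending on the $\bar{u}$-error, the two errors cannot be treated symmetrically. The clean remedy is the sequential structure above—bound the mean error first, then inject it as a forcing term into the covariance Gronwall inequality—which is precisely why uniform a priori control of $\|R^N\|$ (the multiplier in front of $e^u$) must be secured before the Gronwall step. A secondary technical point is that the mean-field estimate \eqref{eq:conv_pdf}, stated with a supremum over $t$, may freely be evaluated at the frozen times $\sigma(s)$ and combined with the $O(\tau)$ time-regularity increments without double counting; this is routine once the $\tau$-channel and the $1/N$-channel are kept strictly separate.
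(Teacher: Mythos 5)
Your proposal follows essentially the same route as the paper's proof: integrate the difference of the two systems, use the Lipschitz hypotheses on $M$ and $L$ together with the mean-field estimate \eqref{eq:conv_pdf} to separate the $O(\tau)$ time-freezing errors from the $O(1/N)$ sampling errors, apply Gr\"{o}nwall to the mean equation first, and then feed that estimate into the Gr\"{o}nwall inequality for the bilinear covariance equation. Your explicit derivation of the uniform a priori bounds on $\bar{u}^{N}_{t}$ and $R^{N}_{t}$ is a welcome clarification of a step the paper only invokes implicitly, but it does not change the argument.
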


\begin{proof}
First, consider the mean equations from the same initial state, we
have
\begin{align*}
\bar{u}_{t}^{N,\tau}-\bar{u}_{t}= & \int_{0}^{t}\left[M\left(\bar{u}_{\sigma\left(s\right)}^{N,\tau}\right)-M\left(\bar{u}_{s}\right)\right]\mathrm{d}s+\int_{0}^{t}\left[F_{\sigma\left(s\right)}-F_{s}\right]\mathrm{d}s\\
 & +\int_{0}^{t}\left[\left\langle H^{m},\tilde{\rho}_{\sigma\left(s\right)}^{N}\right\rangle -\left\langle H^{m},\tilde{\rho}_{s}\right\rangle \right]\mathrm{d}s.
\end{align*}
Therefore, using H\"{o}lder's inequality and Lipschitz condition
for $M$ there is
\begin{align}
\mathbb{E}\sup_{t\leq T}\left|\bar{u}_{t}^{N,\tau}-\bar{u}_{t}\right|^{2} & \leq3T\beta\mathbb{E}\int_{0}^{T}\left|\bar{u}_{\sigma\left(s\right)}^{N,\tau}-\bar{u}_{s}\right|^{2}\mathrm{d}s\nonumber \\
 & +3T\mathbb{E}\int_{0}^{T}\left|\left\langle H^{m},\tilde{\rho}_{\sigma\left(s\right)}^{N}\right\rangle -\left\langle H^{m},\tilde{\rho}_{s}\right\rangle \right|^{2}\mathrm{d}s+C_{T}\tau\nonumber \\
 & \leq C_{1}\int_{0}^{T}\mathbb{E}\sup_{s^{\prime}\leq s}\left|\bar{u}_{s^{\prime}}^{N,\tau}-\bar{u}_{s^{\prime}}\right|^{2}\mathrm{d}s+C_{2}\tau\left\Vert H^{m}\right\Vert _{\infty}^{2}+\frac{C_{3}}{N}\left\Vert H_{m}\right\Vert _{\infty}^{2}.\label{eq:est_m}
\end{align}
In the first term above on the right hand side, we follow the same
procedure to estimate the error in the corresponding continuous solution
$\bar{u}_{s}^{N,\tau}$ compared to the time discretization solution
$\bar{u}_{\sigma\left(s\right)}^{N,\tau}$
\[
\left|\bar{u}_{\sigma\left(s\right)}^{N,\tau}-\bar{u}_{s}\right|^{2}\leq\left|\bar{u}_{\sigma\left(s\right)}^{N,\tau}-\bar{u}_{s}^{N,\tau}\right|^{2}+\left|\bar{u}_{s}^{N,\tau}-\bar{u}_{s}\right|^{2}\leq C\tau\left\Vert H^{m}\right\Vert _{\infty}^{2}+\left|\bar{u}_{s}^{N,\tau}-\bar{u}_{s}\right|^{2}.
\]
And in the second line for the term related to $H^{m}$, combining
the discrete time estimate with \eqref{eq:conv_pdf} gives
\[
\mathbb{E}\left[\sup_{t\leq T}\left|\left\langle H^{m},\tilde{\rho}_{\sigma\left(s\right)}^{N}\right\rangle -\left\langle H^{m},\tilde{\rho}_{t}\right\rangle \right|^{2}\right]\leq\left(C\tau^{2}+\frac{C_{T}}{N}\right)\left\Vert H^{m}\right\Vert _{\infty}^{2}.
\]
Therefore, applying Gr\"{o}nwall's inequality to \eqref{eq:est_m},
we get the mean estimate in \eqref{eq:stat_bnds}.

Next, under a similar fashion, we can compute from the covariance
equation and using the Lipschitz condition for $L$
\begin{align*}
\mathbb{E}\sup_{t\leq T}\left|R_{t}^{N,\tau}-R_{t}\right|^{2} & \leq C_{1}\beta\mathbb{E}\sup_{t\leq T}\left\Vert \bar{u}_{t}\right\Vert _{\infty}^{2}\int_{0}^{T}\left|R_{\sigma\left(s\right)}^{N,\tau}-R_{s}\right|^{2}\mathrm{d}s\\
 & +C_{2}\beta\mathbb{E}\sup_{t\leq T}\left|\bar{u}_{\sigma\left(t\right)}^{N,\tau}-\bar{u}_{t}\right|^{2}\sup_{t\leq T}\left\Vert R_{\sigma\left(t\right)}\right\Vert ^{2}\\
 & +C_{3}\mathbb{E}\int_{0}^{T}\left|\left\langle H^{v},\tilde{\rho}_{\sigma\left(s\right)}^{N}\right\rangle -\left\langle H^{v},\tilde{\rho}_{s}\right\rangle \right|^{2}\mathrm{d}s+C_{T}\tau.
\end{align*}
Using the uniform boundedness of $\bar{u}_{t},R_{t}$ and \eqref{eq:conv_pdf}
for $H^{v}$ together with the previous estimate for $\mathbb{E}\sup_{t\leq T}\left|\bar{u}_{\sigma\left(t\right)}^{N,\tau}-\bar{u}_{t}\right|^{2}$,
we reach the final covariance estimate in \eqref{eq:stat_bnds}.
\end{proof}
Theorem~ \ref{thm:conv_stat} shows that the discrete approximation
scheme of the approximating filter model can recover the key model
statistics in mean and covariance. It again demonstrates the central
role in achieving accurate prediction of the nonlinear observation
functions $H^{m},H^{v}$ in the filtering method.

\section{Summarizing discussions\protect\label{sec:Summary}}

We developed a systematic statistical filtering strategy that enables
effective ensemble approximation of non-Gaussian probability distributions
of multiscale turbulent states using observations in the leading-order
moments of the mean and covariance. The filtering model is based on
a closed set of coupled stochastic-statistical equations established
for modeling general turbulent dynamical systems involving nonlinear
coupling. The non-Gaussian features then can be characterized by a
McKean-Vlasov SDE taking into account both the stochastic forecast
equation and corrections according to the observed first two moments.
Importantly, the proposed McKean-Vlasov SDE for the finite-dimensional
stochastic state does not require the explicit computation of the
infinite-dimensional probability distribution, but just relies on
the feedbacks from the leading moments that can be computed from the
associated statistical equations. This leads to straightforward numerical
algorithms using ensemble approximations. The stochastic--statistical
formulation offers a flexible approach for recovering essential model
statistics, making it applicable to a wide range of problems in uncertainty
quantification and data assimilation. In the immediate applications
of this research, the performance of the new filtering strategy will
be tested on a series of turbulent systems, starting from prototype
models to realistic applications in really high-dimensional systems.
Also, currently,we are only able to show the statistical consistency
of the approximating filter in the first two moments of the observation
function under restricted conditions. Further explorations exploiting
specific model structures such as the conservation properties will
be used to provide a thorough understanding of the approximation skill
of the filter predictions.

\section*{Acknowledge}

The research of J.-G. L. is partially supported by the National Science
Foundation (NSF) under Grant No. DMS-2106988. The research of D. Q.
is partially supported by Office of Naval Research (ONR) Grant No.
N00014-24-1-2192.

\appendix
\renewcommand\theequation{A\arabic{equation}}
\setcounter{equation}{0}

\section{General background about filtering\protect\label{sec:Background-about-filtering}}

Here, we summarize the useful results needed in the main text of this
paper following mostly \cite{bain2009fundamentals,liptser2013statistics,falb1967infinite}.

\subsection{Filtering equations for general stochastic systems}

Let $\left(\Omega,\mathcal{F},\mathbb{P}\right)$ be the complete
probability space. The \emph{signal process} $u_{t}\in\mathbb{R}^{d}$
and the \emph{observation process} $y_{t}\in\mathbb{R}^{p}$ are defined
on the probability space satisfying the following SDEs\addtocounter{equation}{0}\begin{subequations}\label{eq:filtering}
\begin{align}
\mathrm{d}u_{t} & =F\left(u_{t}\right)\mathrm{d}t+\Sigma\mathrm{d}W_{t},\quad u_{t=0}=u_{0},\label{eq:dyn_filter}\\
\mathrm{d}y_{t} & =H\left(u_{t}\right)\mathrm{d}t+\mathrm{d}B_{t},\quad y_{t=0}=y_{0},\label{eq:obs_filter}
\end{align}
\end{subequations}where $F:\mathbb{R}^{d}\rightarrow\mathbb{R}^{d}$
and $H:\mathbb{R}^{d}\rightarrow\mathbb{R}^{p}$ are bounded and globally
Lipschitz continuous functions, and $W_{t}\in\mathbb{R}^{s},B_{t}\in\mathbb{R}^{p}$
are independent standard Wiener processes with matrix coefficient
$\Sigma\in\mathbb{R}^{d\times s}$. The aim of the general filtering
problem is to determine the conditional probability distribution $\mu_{t}$
of the signal process, $u_{t}$, given the accumulated observation
process, $Y_{t}=\left\{ y_{s},s\leq t\right\} $.

Define the observation filtration $\mathcal{G}_{t}=\sigma\left\{ y_{s},s\leq t\right\} $.
The random conditional distribution $\mu_{t}:\mathbb{R}^{d}\times\Omega\rightarrow\left[0,1\right]$
is defined as the $\mathcal{P}\left(\mathbb{R}^{d}\right)$-valued
stochastic process which is measurable w.r.t. $\mathcal{G}_{t}$,
so that for any function $\varphi\in C_{b}^{2}\left(\mathbb{R}^{d}\right)$
a.s.
\[
\mathbb{E}\left[\varphi\left(u_{t}\right)\mid\mathcal{G}_{t}\right]=\mu_{t}\left(\varphi\right)\coloneqq\int_{\mathbb{R}^{d}}\varphi\left(u\right)\mu_{t}\left(\mathrm{d}u\right).
\]
In particular, the optimal filter solution, $\hat{u}_{t}=\mathbb{E}\left[u_{t}\mid\mathcal{G}_{t}\right]$,
can be defined based on $\mu_{t}$. It shows that $\hat{u}_{t}$ is
the minimizer $\mathbb{E}\left[\left|\hat{u}_{t}-u_{t}\right|^{2}\right]=\min_{v}\mathbb{E}\left[\left|v-u_{t}\right|^{2}\right]$
among all $v\in L^{2}\left(\Omega,\mathcal{G}_{t},\mathbb{P}\right)$
in the set of $\mathcal{G}_{t}$-measurable square-integrable random
variables for any fixed $t$. The filtering equation for the conditional
probability distribution $\mu_{t}$ is verified to satisfy the Kushner-Stratonovich
equation 
\begin{equation}
\mathrm{d}\mu_{t}\left(\varphi\right)=\mu_{t}\left(\mathcal{L}\varphi\right)\mathrm{d}t+\sigma\left(H,\varphi;\mu_{t}\right)\mathrm{d}\nu_{t}.\label{eq:KS_eqn}
\end{equation}
On the right hand side of the above equation, the first term is the
drift due to the infinitesimal generator $\mathcal{L}=F\cdot\nabla+\frac{1}{2}\Sigma\Sigma^{\intercal}:\nabla\nabla$
of the signal process \eqref{eq:dyn_filter}; the second term represents
the correction from the observation process \eqref{eq:obs_filter}.
The innovation process, $\mathrm{d}\nu_{t}=\mathrm{d}y_{t}-\mu_{t}\left(H\right)\mathrm{d}t\in\mathbb{R}^{p}$,
is a $\mathcal{G}_{t}$-Brownian motion under the probability measure
$\mathbb{P}$, and $\sigma\left(H,\varphi;\mu_{t}\right)=\mu_{t}\left(\varphi H^{\intercal}\right)-\mu_{t}\left(\varphi\right)\mu_{t}\left(H^{\intercal}\right)\in\mathbb{R}^{1\times p}$
gives the coefficient with finite quadratic variation, where $\mu_{t}\left(H\right)$
is the componentwise measure of the vector-valued function $H$. In
addition, the filtering equation \eqref{eq:KS_eqn} is shown to have
a unique solution under proper conditions (Theorem 3.30 and 4.19 in
\cite{bain2009fundamentals} and Theorem 7.7 in \cite{xiong2008introduction})
that is also stable (Theorem 2.7 in \cite{budhiraja2003asymptotic}).
Therefore, this guarantees that the solution of the Kushner-Stratonovich
equation \eqref{eq:KS_eqn} uniquely characterizes the filter distribution
$\mu_{t}$ as a $\mathcal{P}\left(\mathbb{R}^{d}\right)$-valued stochastic
process.

Next, assume that the conditional probability $\mu_{t}$ possesses
a square integrable density, $\mu_{t}\left(\mathrm{d}x\right)=\rho_{t}\left(x\right)\mathrm{d}x$,
with respect to the Lebesgue measure. It can be shown under proper
conditions (Corollary 7.18 in \cite{bain2009fundamentals} and \cite{kurtz1999particle}),
the conditional probability solution $\mu_{t}$ of \eqref{eq:filtering}
has a probability density $\rho_{t}\in W_{k}^{2}\left(\mathbb{R}^{d}\right)$.
According to \eqref{eq:KS_eqn} for the conditional probability, $\rho_{t}$
can be found to be the unique solution to the following SPDE
\begin{equation}
\partial_{t}\rho_{t}=\mathcal{L}^{*}\rho_{t}\mathrm{d}t+\rho_{t}\left(H-\bar{H}_{t}\right)^{\intercal}\left(\mathrm{d}y_{t}-\bar{H}_{t}\mathrm{d}t\right),\quad\rho_{t=0}=\rho_{0},\label{eq:KS_den}
\end{equation}
where $\bar{H}_{t}=\int_{\mathbb{R}^{d}}H\left(x\right)\rho_{t}\left(x\right)\mathrm{d}x$
and $\rho_{0}\in L^{2}\left(\mathbb{R}^{d}\right)$ is the absolute
continuous density of $\mu_{0}$. The randomness of the above SPDE
only comes from the innovation process $\mathrm{d}\nu_{t}=\mathrm{d}y_{t}-\mu_{t}\left(H\right)\mathrm{d}t$
as a finite-dimensional white noise in time.

At last, if the functions on the right hand sides of \eqref{eq:filtering}
satisfy the Ornstein-Uhlenbeck processes with matrix coefficients,
that is, $F\left(u\right)=Fu+f_{t}$ and $H\left(u\right)=Hu+h_{t}$.
With Gaussian initial condition, $u_{0}\sim\mathcal{N}\left(\hat{u}_{0},C_{0}\right)$,
the conditional distribution $\mu_{t}=\mathcal{N}\left(\hat{u}_{t},C_{t}\right)$
given $\mathcal{G}_{t}$ in \eqref{eq:KS_eqn} becomes a multivariate
normal distribution, where $\hat{u}_{t}=\mathbb{E}\left[u_{t}\mid\mathcal{G}_{t}\right]$
and $C_{t}=\mathbb{E}\left[\left(u_{t}-\hat{u}_{t}\right)\left(u_{t}-\hat{u}_{t}\right)^{\intercal}\mid\mathcal{G}_{t}\right]$.
The filtering equations for $\hat{u}_{t}\in\mathbb{R}^{d}$ and $C_{t}\in\mathbb{R}^{d\times d}$
are given by the Kalman-Bucy filter \cite{kalman1961new} \addtocounter{equation}{0}\begin{subequations}\label{eq:KB_fil}
\begin{align}
\mathrm{d}\hat{u}_{t} & =\left(F\hat{u}_{t}+f_{t}\right)\mathrm{d}t+K_{t}\left[\mathrm{d}y_{t}-\left(H\hat{u}_{t}+h_{t}\right)\mathrm{d}t\right],\label{eq:KB_m}\\
\dot{C}_{t} & =FC_{t}+C_{t}F^{\intercal}-K_{t}K_{t}^{\intercal}+\Sigma\Sigma^{\intercal}.\label{eq:KB_v}
\end{align}
\end{subequations}with the Kalman gain matrix $K_{t}=C_{t}H^{\intercal}$.
Above, \eqref{eq:KB_m} is an SDE coupled with the deterministic Riccati
equation \eqref{eq:KB_v}.

\subsection{Infinite dimensional filtering in Hilbert space}

It is shown that the linear Kalman-Bucy filter can be generalized
to linear stochastic equations on a Hilbert space \cite{curtain1975survey,falb1967infinite}.
Let $H$ be a Hilbert space. Denote $L^{2}\left(\Omega,\mathcal{G},\mathbb{P};H\right)$
as the collection of all $H$-valued $\mathcal{G}$-measurable square-integrable
random variables. The expectation of $u\in L^{2}\left(\Omega,\mathcal{G},\mathbb{P};H\right)$
is denoted by
\begin{equation}
\mathbb{E}\left[u\right]=\int_{\Omega}u\left(\omega\right)\mathrm{d}\mathbb{P}\left(\omega\right).\label{eq:exp}
\end{equation}
The inner produce for $u,v\in L^{2}\left(\Omega,\mathcal{G},\mathbb{P};H\right)$
can be defined as $\left\langle u,v\right\rangle _{2}=\mathbb{E}\left[\left\langle u,v\right\rangle _{H}\right]=\int_{\Omega}\left\langle u\left(\omega\right),v\left(\omega\right)\right\rangle _{H}\mathrm{d}\mathbb{P}\left(\omega\right)$.
With the above notations, the \emph{covariance operator} $\mathcal{C}$
can be introduced as an element in the linear transformations $\mathcal{L}\left(H,H\right)$.
\begin{defn}
Let $u,v\in L^{2}\left(\Omega,\mathcal{G},\mathbb{P};H\right)$ be
two $H$-valued random variables. Then the covariance of $u$ and
$v$ is given by
\begin{equation}
\mathcal{C}\left(u,v\right)=\mathbb{E}\left[u\otimes v\right]-\mathbb{E}\left[u\right]\otimes\mathbb{E}\left[v\right],\label{eq:cov_op}
\end{equation}
where $u\left(\omega\right)\otimes v\left(\omega\right)\in\mathcal{L}\left(H,H\right)$
is a linear transformation of $H$ into $H$ defined for any $f\in H$
as
\[
\left(u\otimes v\right)f=u\left\langle v,f\right\rangle _{H}.
\]
\end{defn}

It is easy to check that the adjoint $\mathcal{C}\left(u,v\right)^{*}=\mathcal{C}\left(v,u\right)$
and $\mathcal{C}\left(u,u\right)^{*}=\mathcal{C}\left(u,u\right)$
is self-adjoint since
\[
\left\langle f,\left(u\otimes v\right)g\right\rangle _{H}=\left\langle u,f\right\rangle _{H}\left\langle v,g\right\rangle _{H}
\]
Notice that if $H=\mathbb{R}^{d}$ is finite-dimensional, for any
$x,y\in L^{2}\left(\Omega,\mathcal{G},\mathbb{P};\mathbb{R}^{d}\right)$,
$x\otimes y=xy^{\intercal}\in\mathbb{R}^{d\times d}$, then the covariance
$\mathcal{C}\left(x,y\right)\in\mathbb{R}^{d\times d}$ becomes the
$d\times d$ matrix 
\[
\mathcal{C}\left(x,y\right)=\mathbb{E}\left[xy^{\intercal}\right]-\mathbb{E}\left[x\right]\mathbb{E}\left[y\right]^{\intercal}=\mathbb{E}\left[\left(x-\mathbb{E}\left[x\right]\right)\left(y-\mathbb{E}\left[y\right]\right)^{\intercal}\right].
\]

Then, we call the $u_{t}\left(\omega\right)$ from $\left[0,T\right]\times\Omega$
to $H$ an $H$-valued stochastic process. An infinite-dimensional
$H$-valued Wiener process $W_{t}$ can be defined accordingly and
the It\^{o} integral can be generalized to infinite-dimensional Hilbert
space accordingly (see Chapter 2 of \cite{pardoux2007stochastic}
with precise validations). Therefore, the signal process $u_{t}$
of filtering can be given by the following $H$-valued SDE 
\begin{equation}
\mathrm{d}u_{t}=\mathcal{A}_{t}u_{t}\mathrm{d}t+\mathcal{Q}_{t}\mathrm{d}W_{t},\quad u_{t=0}=u_{0},\label{eq:signal}
\end{equation}
where $\mathcal{A}_{t}\in L^{\infty}\left(\left[0,T\right];\mathcal{L}\left(H,H\right)\right)$
is a regulated mapping of $\left[0,T\right]$ into $\mathcal{L}\left(H,H\right)$
(which is further generalized to unbounded operators in \cite{curtain1975infinite})
and $\mathcal{Q}_{t}\in L^{2}\left(\left[0,T\right];\mathcal{L}\left(H,H\right)\right)$,
and $W_{t}$ is the $H$-valued Wiener process. It can be shown (Theorem
2.13 in \cite{pardoux2007stochastic} and Theorem 5.1 in \cite{falb1967infinite})
that \eqref{eq:signal} has unique solution in $L^{2}\left(\Omega,\mathcal{G},\mathbb{P};H\right)$
with initial value $\mathbb{E}\left[\left|u_{0}\right|_{H}^{2}\right]<\infty$.
The observation stochastic process $y_{t}\in\mathbb{R}^{p}$ can be
generated by the SDE
\begin{equation}
\mathrm{d}y_{t}=\mathcal{H}_{t}u_{t}\mathrm{d}t+\mathrm{d}B_{t},\quad y_{t=0}=y_{0},\label{eq:obs-1}
\end{equation}
where $\mathcal{H}_{t}$ is a linear mapping of $\left[0,T\right]$
into $\mathcal{L}\left(H,\mathbb{R}^{p}\right)$, and $B_{t}$ is
the Wiener process in $\mathbb{R}^{p}$ independent of $W_{t}$. The
infinite-dimensional filtering problem can be then described as: given
$y_{s},s\leq t$, determine the optimal estimate $\hat{u}_{t}$ of
$u_{t}$ that minimizes
\begin{equation}
\mathbb{E}\left[\left|u_{t}-v\right|_{H}^{2}\right],\quad v\in L^{2}\left(\Omega,\mathcal{G}_{t},\mathbb{P};H\right),\label{eq:optim}
\end{equation}
where $\mathcal{G}_{t}=\sigma\left\{ y_{s},s\leq t\right\} $ is generated
by the observations up to time $t$.

Finally, in parallel to the Kalman-Bucy filter \eqref{eq:KB_fil}
in finite-dimensional space, similar result can be extended to the
above infinite-dimensional filtering problem \eqref{eq:signal} and
\eqref{eq:obs-1}. Below, we summarize the main results in Theorem
7.10, 7.14 of \cite{falb1967infinite}.
\begin{thm}
The optimal filter solution $\hat{u}_{t}$ of \eqref{eq:optim} exists
and is unique, which satisfies the following infinite-dimensional
SDE
\begin{equation}
\mathrm{d}\hat{u}_{t}=\mathcal{A}_{t}\hat{u}_{t}\mathrm{d}t+\mathcal{K}_{t}\left(\mathrm{d}y_{t}-\mathcal{H}_{t}\hat{u}_{t}\mathrm{d}t\right),\quad\hat{u}_{t=0}=u_{0},\label{eq:fil_mean}
\end{equation}
where $\mathcal{K}_{t}=\mathcal{C}_{t}\mathcal{H}_{t}^{*}\in\mathcal{L}\left(\mathbb{R}^{p},H\right)$
with $\mathcal{H}_{t}^{*}$ the adjoint of $\mathcal{H}_{t}$. And
the covariance operator $\mathcal{C}_{t}=\mathbb{E}\left[\left(u_{t}-\hat{u}_{t}\right)\otimes\left(u_{t}-\hat{u}_{t}\right)\right]$
satisfies the following Riccati equation
\begin{equation}
\dot{\mathcal{C}}_{t}=\mathcal{A}_{t}\mathcal{C}_{t}+\mathcal{C}_{t}\mathcal{A}_{t}^{*}-\mathcal{K}_{t}\mathcal{K}_{t}^{*}+\mathcal{Q}_{t}\mathcal{Q}_{t}^{*},\quad\mathcal{C}_{t=0}=\mathcal{C}\left(u_{0},u_{0}\right).\label{eq:fil_cov}
\end{equation}
\end{thm}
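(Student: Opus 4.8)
The plan is to carry over the classical innovations method from the finite-dimensional Kalman--Bucy filter \eqref{eq:KB_fil} to the Hilbert space $H$. First I would identify the minimizer of \eqref{eq:optim} as the conditional expectation $\hat{u}_t=\mathbb{E}\left[u_t\mid\mathcal{G}_t\right]$: the space $L^2(\Omega,\mathcal{G}_t,\mathbb{P};H)$ is a closed subspace of the Hilbert space $L^2(\Omega,\mathcal{F},\mathbb{P};H)$ under $\langle u,v\rangle_2=\mathbb{E}\left[\langle u,v\rangle_H\right]$, so the projection theorem yields a unique minimizer characterized by the orthogonality relation $\mathbb{E}\left[\langle u_t-\hat{u}_t,v\rangle_H\right]=0$ for all $v\in L^2(\Omega,\mathcal{G}_t,\mathbb{P};H)$, which is exactly the conditional expectation. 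Linearity of \eqref{eq:signal}--\eqref{eq:obs-1} together with a Gaussian initial state makes the conditional law Gaussian, so the pair $(\hat{u}_t,\mathcal{C}_t)$ determines the filter completely.

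Next I would introduce the innovation process $\nu_t$ via $\mathrm{d}\nu_t=\mathrm{d}y_t-\mathcal{H}_t\hat{u}_t\,\mathrm{d}t$ with $\nu_0=0$. Rewriting $\mathrm{d}\nu_t=\mathcal{H}_t(u_t-\hat{u}_t)\,\mathrm{d}t+\mathrm{d}B_t$ and using $\mathbb{E}\left[u_t-\hat{u}_t\mid\mathcal{G}_t\right]=0$, one checks that $\nu_t$ is $\mathcal{G}_t$-adapted, has vanishing $\mathcal{G}_t$-conditional drift, and has the same quadratic variation as $B_t$; L\'evy's characterization then shows $\nu_t$ is a standard $\mathbb{R}^p$-valued $\mathcal{G}_t$-Wiener process generating the same filtration as the observations. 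Since $\mathcal{A}_t$ is linear, $\mathbb{E}\left[\mathcal{A}_t u_t\mid\mathcal{G}_t\right]=\mathcal{A}_t\hat{u}_t$, so the martingale part of the semimartingale $\hat{u}_t$ can, by the martingale representation theorem relative to $\nu_t$, be written as $\int_0^t\mathcal{K}_s\,\mathrm{d}\nu_s$, giving $\mathrm{d}\hat{u}_t=\mathcal{A}_t\hat{u}_t\,\mathrm{d}t+\mathcal{K}_t\,\mathrm{d}\nu_t$, which is \eqref{eq:fil_mean}. The gain is then pinned down by matching, computed two ways, the cross-covariance of the error $e_t=u_t-\hat{u}_t$ with the innovation increment, which forces $\mathcal{K}_t=\mathcal{C}_t\mathcal{H}_t^*$.

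The Riccati equation \eqref{eq:fil_cov} follows from an It\^{o} computation on the error. Subtracting \eqref{eq:fil_mean} from \eqref{eq:signal} yields $\mathrm{d}e_t=\mathcal{A}_t e_t\,\mathrm{d}t+\mathcal{Q}_t\,\mathrm{d}W_t-\mathcal{K}_t\,\mathrm{d}\nu_t$. Applying the infinite-dimensional It\^{o} formula to the operator $e_t\otimes e_t$ (see \eqref{eq:cov_op}) and taking expectations, the stochastic integrals vanish in mean while the quadratic-variation terms contribute $\mathcal{Q}_t\mathcal{Q}_t^*$ from the signal noise and $\mathcal{K}_t\mathcal{K}_t^*$ from the innovation, so $\dot{\mathcal{C}}_t=\mathcal{A}_t\mathcal{C}_t+\mathcal{C}_t\mathcal{A}_t^*+\mathcal{Q}_t\mathcal{Q}_t^*-\mathcal{K}_t\mathcal{K}_t^*$; substituting $\mathcal{K}_t=\mathcal{C}_t\mathcal{H}_t^*$ recovers \eqref{eq:fil_cov}. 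Because $W_t$ and $B_t$ are independent and the dynamics are linear, $\mathcal{C}_t$ is deterministic, so its conditional and unconditional versions coincide with the definition $\mathcal{C}_t=\mathbb{E}\left[(u_t-\hat{u}_t)\otimes(u_t-\hat{u}_t)\right]$ in the statement.

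I expect the real difficulty to lie not in the formal derivation but in the infinite-dimensional functional analysis. The hard parts will be justifying the It\^{o} formula and the stochastic integrals for $\mathcal{L}(H,H)$-valued integrands, guaranteeing that $\mathcal{C}_t$ stays a well-defined (trace-class) covariance operator on $H$, and above all establishing existence and uniqueness for the operator-valued Riccati equation \eqref{eq:fil_cov}, which is routine for matrices but in Hilbert space requires the theory of \cite{falb1967infinite,curtain1975survey}. The martingale-representation step is also delicate, since representing the filter's martingale part through $\nu_t$ relies on the fact that $\nu_t$ generates the observation filtration $\mathcal{G}_t$.
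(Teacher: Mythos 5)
This theorem is stated in Appendix~\ref{sec:Background-about-filtering} purely as background: the paper gives no proof of its own and instead quotes the result from Theorems 7.10 and 7.14 of \cite{falb1967infinite} (see also \cite{curtain1975survey}). So there is no in-paper argument to match against; what you have written is a sketch of the standard modern \emph{innovations} derivation, whereas Falb's original treatment proceeds by the orthogonal-projection/Wiener--Hopf route of Kalman and Bucy, obtaining the gain from the orthogonality condition $\mathbb{E}\left[\langle u_{t}-\hat{u}_{t},v\rangle_{H}\right]=0$ and a limiting argument on finite-dimensional projections rather than through L\'evy's characterization and martingale representation. Your route is legitimate and, as you correctly anticipate, the genuine work is in the functional analysis: well-posedness of the operator Riccati equation, trace-class preservation of $\mathcal{C}_{t}$, and the $H$-valued stochastic calculus. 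Two small corrections to the sketch. First, the martingale representation with respect to $\nu_{t}$ does not require that $\nu_{t}$ generate $\mathcal{G}_{t}$ (the innovations conjecture is delicate in general); the Fujisaki--Kallianpur--Kunita theorem gives the representation of $\mathcal{G}_{t}$-martingales as stochastic integrals against $\nu_{t}$ directly, and in the linear Gaussian case the filtration equality does in fact hold, but it is cleaner not to lean on it. Second, your sign bookkeeping for the Riccati equation is internally inconsistent as written: if you keep the error equation in the form $\mathrm{d}e_{t}=\mathcal{A}_{t}e_{t}\,\mathrm{d}t+\mathcal{Q}_{t}\,\mathrm{d}W_{t}-\mathcal{K}_{t}\,\mathrm{d}\nu_{t}$, the quadratic variation of the innovation term contributes $+\mathcal{K}_{t}\mathcal{K}_{t}^{*}$, not $-\mathcal{K}_{t}\mathcal{K}_{t}^{*}$; the minus sign in \eqref{eq:fil_cov} comes from expanding $\mathrm{d}\nu_{t}=\mathcal{H}_{t}e_{t}\,\mathrm{d}t+\mathrm{d}B_{t}$ so that the drift becomes $(\mathcal{A}_{t}-\mathcal{K}_{t}\mathcal{H}_{t})e_{t}$, whose cross-terms give $-2\mathcal{K}_{t}\mathcal{K}_{t}^{*}$ and combine with the $+\mathcal{K}_{t}\mathcal{K}_{t}^{*}$ from the quadratic variation to yield the stated $-\mathcal{K}_{t}\mathcal{K}_{t}^{*}+\mathcal{Q}_{t}\mathcal{Q}_{t}^{*}$. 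You land on the correct equation, but the stated mechanism should be repaired.
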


\subsection{Kalman-Bucy filter with conditional Gaussian processes}

The linear Kalman-Bucy filter can be generalized to nonlinear filtering
accepting the conditional Gaussian processes \cite{liptser2013statistics}.
The conditional Gaussian process $\left(v_{t},y_{t}\right),0\leq t\leq T$
is given by the solution of the following coupled equations\addtocounter{equation}{0}\begin{subequations}\label{eq:condGau}
\begin{align}
\mathrm{d}v_{t} & =\left[F_{t}\left(y_{t}\right)v_{t}+f_{t}\left(y_{t}\right)\right]\mathrm{d}t+\Sigma_{t}\mathrm{d}W_{t},\quad v_{t=0}=v_{0},\label{eq:dyn_condGau}\\
\mathrm{d}y_{t} & =\left[H_{t}\left(y_{t}\right)v_{t}+h_{t}\left(y_{t}\right)\right]\mathrm{d}t+\Gamma_{t}\mathrm{d}B_{t},\quad y_{t=0}=y_{0},\label{eq:obs_condGau}
\end{align}
\end{subequations}where $W_{t},B_{t}$ are mutually independent standard
Gaussian white noise processes, and the initial states $\left(v_{0},y_{0}\right)$
are random variables independent of $W_{t},B_{t}$. In general, $v_{t}\in\mathbb{R}^{d}$
represents the signal process and $y_{t}\in\mathbb{R}^{p}$ represents
the observation process. The functions $f_{t},F_{t}$ and $h_{t},H_{t}$
are globally Lipschitz continuous and uniformly bounded on the observed
state $y_{t}$ over the time interval $0\leq t\leq T$. Assume that
the sequence $\left(v_{t},y_{t}\right)$ is obtained from a realization
$\omega$ and the initial condition, $\left\{ v_{0},y_{0}\right\} $.
We can then define the observation sequence $Y_{t}=\left\{ y_{s}\left(\omega\right),s\leq t\right\} $
as well as the unobserved target signal $v_{t}=v_{t}\left(\omega\right)$.
The above system \eqref{eq:condGau} is called the conditional Gaussian
process since the conditional distribution $\mu_{t}=\mathbb{P}\left(v_{t}\in\cdot\mid Y_{t}\right)$
given $Y_{t}$ becomes a Gaussian distribution a.s. if $\mu_{0}=\mathbb{P}\left(v_{0}\in\cdot\mid y_{0}\right)$
is Gaussian (Theorem 12.6 of \cite{liptser2013statistics}).

Next, let $\mathcal{G}_{t}=\sigma\left\{ y_{s},s\leq t\right\} $,
and define the mean $\hat{v}_{t}=\mathbb{E}\left[v_{t}\mid\mathcal{G}_{t}\right]$
and covariance $\hat{C}_{t}=\mathbb{E}\left[\left(v_{t}-\hat{v}_{t}\right)\left(v_{t}-\hat{v}_{t}\right)^{\intercal}\mid\mathcal{G}_{t}\right]$
w.r.t. the conditional Gaussian distribution $\mu_{t}=\mathcal{N}\left(\hat{v}_{t},\hat{C}_{t}\right)$
. Then, it shows that the explicit dynamical equations for $\left(\hat{v}_{t},\hat{C}_{t}\right)$
can be derived based on the conditional Gaussian process \eqref{eq:condGau}.
As a result, filtering equations from the linear Kalman-Bucy filter
\eqref{eq:KB_fil} can be directly applied to the conditional linear
system regardless of its essentially nonlinear dynamics. The equations
of the conditional Gaussian processes and the uniqueness of the solutions
are proved under suitable conditions for the model coefficients in
Chapter 12 of \cite{liptser2013statistics}. We summarize the results
according to Theorem 12.7 of \cite{liptser2013statistics} for the
nonlinear conditional Gaussian filter.
\begin{thm}
The conditional distribution $\mu_{t}$ of the stochastic processes
$v_{t}$ given $Y_{t}$ from \eqref{eq:condGau} is Gaussian, $\mathcal{N}\left(\hat{v}_{t},\hat{C}_{t}\right)$.
Then, with $\Gamma_{t}\Gamma_{t}^{\intercal}\succ0$ and the initial
mean and covariance $\hat{v}_{0},\hat{C}_{0}$, the solutions for
the mean $\hat{v}_{t}$ and covariance matrix $\hat{C}_{t}$ are uniquely
given by the following closed equations\addtocounter{equation}{0}\begin{subequations}\label{eq:KB_cond}
\begin{align}
\mathrm{d}\hat{v}_{t} & =\left[F_{t}\left(y_{t}\right)\hat{v}_{t}+f_{t}\left(y_{t}\right)\right]\mathrm{d}t\nonumber \\
 & +K_{t}\left(y_{t}\right)\left(\Gamma_{t}\Gamma_{t}^{\intercal}\right)^{-1}\left\{ \mathrm{d}y_{t}-\left[H_{t}\left(y_{t}\right)\hat{v}_{t}+h_{t}\left(y_{t}\right)\right]\mathrm{d}t\right\} ,\label{eq:cond_m}\\
\mathrm{d}\hat{C}_{t} & =\left[F_{t}\left(y_{t}\right)\hat{C}_{t}+\hat{C}_{t}F_{t}\left(y_{t}\right)^{\intercal}+\Sigma_{t}\Sigma_{t}^{\intercal}\right]\mathrm{d}t\nonumber \\
 & -K_{t}\left(y_{t}\right)\left(\Gamma_{t}\Gamma_{t}^{\intercal}\right)^{-1}K_{t}^{\intercal}\left(y_{t}\right)\mathrm{d}t,\label{eq:cond_v}
\end{align}
\end{subequations}where $K_{t}\left(y_{t}\right)=\hat{C}_{t}\left(y_{t}\right)H_{t}\left(y_{t}\right)^{\intercal}$.
The matrix $\hat{C}_{t}$ will remain positive-definite for all $0\leq t\leq T$
if $\hat{C}_{0}\succ0$.
\end{thm}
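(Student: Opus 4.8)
The plan is to treat the two claims in sequence—first that the conditional law $\mu_{t}$ is Gaussian, and then that its mean $\hat v_{t}$ and covariance $\hat C_{t}$ obey \eqref{eq:KB_cond}—exploiting that the drift and diffusion of \eqref{eq:condGau} are \emph{affine} in the unobserved state $v_{t}$ with coefficients that are functionals of the observation alone. First I would establish Gaussianity by conditioning on the whole observation realization: once $Y_{t}=\{y_{s},s\le t\}$ is fixed, the maps $s\mapsto F_{s}(y_{s}),f_{s}(y_{s}),H_{s}(y_{s}),h_{s}(y_{s})$ become $\mathcal{G}_{t}$-measurable, so the signal–observation pair reduces to a time-inhomogeneous linear-Gaussian system driven by the independent Wiener processes $W_{t},B_{t}$. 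Propagation of Gaussianity then follows by a time-discretization argument: on each subinterval the Euler update of $(v,y)$ is an affine-Gaussian transformation, which sends Gaussian conditionals to Gaussian conditionals, and conditional Gaussianity survives the refinement limit (the route of Theorem 12.6 of \cite{liptser2013statistics}). Equivalently, one may verify that the Gaussian ansatz $\mu_{t}=\mathcal{N}(\hat v_{t},\hat C_{t})$ solves the Kushner–Stratonovich equation \eqref{eq:KS_eqn} and invoke its uniqueness to identify it with the conditional law.

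Second, granting Gaussianity, I would derive the mean equation by applying \eqref{eq:KS_eqn}, in the form appropriate to the observation model \eqref{eq:obs_condGau}, to the coordinate test functions $\varphi(v)=v_{k}$. The generator contributes $\mu_{t}(\mathcal{L}_{t}v_{k})=\bigl(F_{t}(y_{t})\hat v_{t}+f_{t}(y_{t})\bigr)_{k}$, while the innovation coefficient $\mu_{t}\!\left(v_{k}(H_{t}v+h_{t})^{\intercal}\right)-\hat v_{t,k}\,\mu_{t}\!\left((H_{t}v+h_{t})^{\intercal}\right)$ collapses to the $k$-th row of $\hat C_{t}H_{t}(y_{t})^{\intercal}$, since $H_{t}v+h_{t}$ is affine in $v$. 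Assembling coordinates yields $\mathrm{d}\hat v_{t}=\bigl[F_{t}\hat v_{t}+f_{t}\bigr]\mathrm{d}t+\hat C_{t}H_{t}^{\intercal}(\Gamma_{t}\Gamma_{t}^{\intercal})^{-1}\{\mathrm{d}y_{t}-[H_{t}\hat v_{t}+h_{t}]\mathrm{d}t\}$, which is \eqref{eq:cond_m} with gain $K_{t}=\hat C_{t}H_{t}^{\intercal}$.

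Third, for the covariance I would apply \eqref{eq:KS_eqn} to $\varphi(v)=v_{k}v_{l}$ and subtract the contribution of $\hat v_{t}\hat v_{t}^{\intercal}$ obtained from the mean equation via the It\^{o} product rule, thereby tracking $\hat C_{t}=\mu_{t}(vv^{\intercal})-\hat v_{t}\hat v_{t}^{\intercal}$. The drift terms reproduce $F_{t}\hat C_{t}+\hat C_{t}F_{t}^{\intercal}+\Sigma_{t}\Sigma_{t}^{\intercal}$. The decisive simplification is that the innovation-driven term in the equation for $\mu_{t}(v_{k}v_{l})$ carries the conditional third central moments $\mu_{t}\!\left((v-\hat v)_{k}(v-\hat v)_{l}(v-\hat v)^{\intercal}\right)$, all of which vanish because $\mu_{t}$ is Gaussian; this is precisely what removes the stochastic $\mathrm{d}y_{t}$ forcing and leaves the deterministic Riccati equation $\dot{\hat C}_{t}=F_{t}\hat C_{t}+\hat C_{t}F_{t}^{\intercal}+\Sigma_{t}\Sigma_{t}^{\intercal}-\hat C_{t}H_{t}^{\intercal}(\Gamma_{t}\Gamma_{t}^{\intercal})^{-1}H_{t}\hat C_{t}$, i.e.\ \eqref{eq:cond_v}.

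Finally, uniqueness and positivity I would obtain from the structure of these equations along a fixed observation path: given $y_{t}$, the Riccati equation \eqref{eq:cond_v} is a matrix ODE with locally Lipschitz right-hand side (using $\Gamma_{t}\Gamma_{t}^{\intercal}\succ0$ together with the boundedness and Lipschitz hypotheses on the coefficients), hence has a unique solution, after which \eqref{eq:cond_m} is a linear SDE in $\hat v_{t}$ with a unique strong solution. Positive-definiteness of $\hat C_{t}$ on $[0,T]$ follows because the Riccati vector field points into the positive-definite cone wherever $\hat C_{t}$ approaches its boundary, with the input $\Sigma_{t}\Sigma_{t}^{\intercal}$ preventing degeneracy; a clean way to see this is the representation of $\hat C_{t}$ as a conditional error covariance, positive semidefinite by construction and strictly positive once $\hat C_{0}\succ0$. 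The main obstacle is the rigorous proof of the first step—conditional Gaussianity—because $v_{t}$ and $y_{t}$ are correlated through their shared history while the coefficients depend on $y_{t}$, so the \emph{linearization upon conditioning} must be justified rather than assumed; this is where the discretization-and-limit argument (or, alternatively, a closure argument for the conditional characteristic function $\mathbb{E}[\mathrm{e}^{\mathrm{i}\lambda^{\intercal}v_{t}}\mid\mathcal{G}_{t}]$ combined with uniqueness of \eqref{eq:KS_eqn}) does the real work.
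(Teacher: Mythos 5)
The paper gives no proof of this statement: it is quoted as background from Theorem 12.7 of Liptser--Shiryaev, whose Chapter 12 argument is exactly the route you outline (conditional Gaussianity via the discretization/conditional characteristic function argument of Theorem 12.6, then the Kushner--Stratonovich equation applied to linear and quadratic test functions, with the vanishing of conditional third central moments closing the covariance equation into the deterministic Riccati form, and positivity read off from the error-covariance representation). Your sketch is correct and, importantly, correctly identifies the one genuinely delicate step --- justifying that conditioning on $\mathcal{G}_{t}$ linearizes the system, which cannot simply be asserted by ``freezing'' the observation path since $v_{t}$ and $y_{t}$ share their history --- so nothing essential is missing relative to the cited source.
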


\renewcommand\theequation{B\arabic{equation}}
\setcounter{equation}{0}

\section{Detailed proofs of theorems\protect\label{sec:Detailed-proofs}}
\begin{proof}
[Proof of Lemma \ref{lem:consist_model}]First, consider the model
\eqref{eq:closure_model} without the relaxation term in $R_{t}$,
that is, set $\epsilon^{-1}=0$. Applying It\^{o}'s formula for $Z_{t}$
with any test function $\varphi\in C_{b}^{2}\left(\mathbb{R}^{d}\right)$
gives
\begin{align}
\mathrm{d}\varphi\left(Z_{t}\right)= & \mathcal{L}\left(\bar{u}_{t},R_{t}\right)\varphi\left(Z_{t}\right)\mathrm{d}t+\nabla\varphi\left(Z_{t}\right)^{\intercal}\Sigma_{t}\mathrm{d}W_{t}\nonumber \\
= & \nabla\varphi\left(Z_{t}\right)^{\intercal}\left[L\left(\bar{u}_{t}\right)Z_{t}+\varGamma\left(Z_{t}Z_{t}^{\intercal}-R_{t}\right)\right]\mathrm{d}t\\
 & +\frac{1}{2}\Sigma_{t}\Sigma_{t}^{\intercal}:\nabla\nabla\varphi\left(Z_{t}\right)\mathrm{d}t+\nabla\varphi\left(Z_{t}\right)^{\intercal}\Sigma_{t}\mathrm{d}W_{t},\label{eq:ito_z}
\end{align}
where $\mathcal{L}$ is the generator of $Z_{t}$. Given any statistical
solution $\left(\bar{u}_{t},R_{t}\right)$ and taking expectation
using $\varphi\left(Z\right)=Z$, the equation for the first moment
of $Z_{t}$ can be found as
\begin{equation}
\frac{\mathrm{d}}{\mathrm{d}t}\mathbb{E}\left[Z_{t}\right]=\left[L\left(\bar{u}_{t}\right)\mathbb{E}\left[Z_{t}\right]+\varGamma\left(\mathbb{E}\left[Z_{t}Z_{t}^{\intercal}\right]-R_{t}\right)\right].\label{eq:mean_z}
\end{equation}
Next by taking $\varphi\left(Z\right)=Z_{k}Z_{l}$, we have
\begin{align*}
\mathrm{d}\left(Z_{k.t}Z_{l,t}\right)= & \sum_{m}\left[L_{km}\left(\bar{u}_{t}\right)Z_{m.t}Z_{l,t}+Z_{k.t}Z_{m,t}L_{lm}\left(\bar{u}_{t}\right)\right]\mathrm{d}t+\Sigma_{km,t}\Sigma_{lm,t}\mathrm{d}t\\
 & +\sum_{m,n}\gamma_{mnk}\left(Z_{m,t}Z_{n,t}Z_{l,t}-R_{mn,t}Z_{l,t}\right)\mathrm{d}t+\gamma_{mnl}\left(Z_{m,t}Z_{n,t}Z_{k,t}-R_{mn,t}Z_{k,t}\right)\mathrm{d}t\\
 & +\sum_{m}\Sigma_{km,t}Z_{l,t}\mathrm{d}W_{m,t}+\Sigma_{lm,t}Z_{k,t}\mathrm{d}W_{m,t}.
\end{align*}
This implies the second moment equation of $Z_{t}$ as
\begin{align}
\frac{\mathrm{d}}{\mathrm{d}t}\mathbb{E}\left[Z_{t}Z_{t}^{\intercal}\right]= & \sum_{m}\left[L_{km}\left(\bar{u}_{t}\right)\mathbb{E}\left[Z_{m.t}Z_{l,t}\right]+\mathbb{E}\left[Z_{k.t}Z_{m,t}\right]L_{lm}\left(\bar{u}_{t}\right)\right]\mathrm{d}t+\Sigma_{t}\Sigma_{t}^{\intercal}\label{eq:cov_z}\\
 & +\sum_{m,n}\gamma_{mnk}\left(\mathbb{E}\left[Z_{m,t}Z_{n,t}Z_{l,t}\right]-R_{mn,t}\mathbb{E}\left[Z_{l,t}\right]\right)\mathrm{d}t+\gamma_{mnl}\left(\mathbb{E}\left[Z_{m,t}Z_{n,t}Z_{k,t}\right]-R_{mn,t}\mathbb{E}\left[Z_{k,t}\right]\right)\mathrm{d}t.\nonumber 
\end{align}
Assuming $\mathbb{E}\left[Z_{t}\right]=0$ and $\mathbb{E}\left[Z_{t}Z_{t}^{\intercal}\right]=R_{t}$
for any time instant $t$, first we see that the right hand side of
\eqref{eq:mean_z} will always stay zero. Then, with the same statistics
in the third moments of $Z_{t}$, the right hand side of \eqref{eq:cov_z}
becomes equal to the right hand of the statistical equation of $R_{t}$
in \eqref{eq:closure_model} with \eqref{eq:stat_feedbacks}. Uniqueness
of solution in the statistical equations given the same initial values
implies that the leading two moments of $Z_{t}$ satisfy $\mathbb{E}\left[Z_{t}\right]=0$
and $\mathbb{E}\left[Z_{t}Z_{t}^{\intercal}\right]=R_{t}$ for all
$t>0$.

Finally, by adding the additional relaxation term, $\epsilon^{-1}\left(\mathbb{E}\left[Z_{t}Z_{t}^{\intercal}\right]-R_{t}\right)$
in the dynamics of $R_{t},$ the statistical consistency condition
\eqref{eq:consistent} guarantees $R_{t}=\mathbb{E}\left[Z_{t}Z_{t}^{\intercal}\right]$
for all the time. Thus this term gives zero contribution, so the model
\eqref{eq:closure_model} will also produce the same statistical solution.
\end{proof}
\
\begin{proof}
[Proof of Proposition \ref{prop:model_equiv}]First, consider $\mathbb{E}\varphi\left(u_{t}\right)$
with any test function $\varphi\in C_{b}^{2}\left(\mathbb{R}^{d}\right)$
w.r.t. the PDF $p_{t}$ for the state of the original system \eqref{eq:abs_formu}.
It\^{o}'s lemma shows that
\[
\frac{\mathrm{d}\mathbb{E}\varphi\left(u_{t}\right)}{\mathrm{d}t}=\mathbb{E}\left[\left(\Lambda u_{t}+B\left(u_{t},u_{t}\right)+F_{t}\right)\cdot\nabla\varphi\left(u_{t}\right)\right]+\frac{1}{2}\mathbb{E}\left[\sigma_{t}\sigma_{t}^{\intercal}:\nabla\nabla\varphi\left(u_{t}\right)\right],
\]
where $A:\nabla\nabla f=\sum_{mn}a_{mn}\partial_{u_{m}}\partial_{u_{n}}f$.
By taking $\varphi=u$ and introducing the decomposition $u^{\prime}=u-\mathbb{E}u=\sum_{k}u_{k,t}^{\prime}\hat{v}_{k}$
with $u_{k,t}^{\prime}=\hat{v}_{k}\cdot u_{t}^{\prime}$, we have
\begin{align}
\frac{\mathrm{d}\mathbb{E}u_{t}}{\mathrm{d}t} & =\mathbb{E}\left[\Lambda u_{t}+B\left(u,u\right)+F_{t}\right]\nonumber \\
 & =\Lambda\mathbb{E}u_{t}+B\left(\mathbb{E}u,\mathbb{E}u\right)+\sum_{k,l}\mathbb{E}\left[u_{k,t}^{\prime}u_{l,t}^{\prime}\right]B\left(\hat{v}_{k},\hat{v}_{l}\right)+F_{t}.\label{eq:mean_t}
\end{align}
Above, we use the bilinearity of the operator $B$ and $\mathbb{E}u^{\prime}=0$,
such that
\begin{align*}
\mathbb{E}B\left(u,u\right) & =\mathbb{E}B\left(\mathbb{E}u+u^{\prime},\mathbb{E}u+u^{\prime}\right)\\
 & =\mathbb{E}\left[B\left(\mathbb{E}u,\mathbb{E}u\right)+B\left(\mathbb{E}u,u^{\prime}\right)+B\left(u^{\prime},\mathbb{E}u\right)+B\left(u^{\prime},u^{\prime}\right)\right]\\
 & =B\left(\mathbb{E}u,\mathbb{E}u\right)+\mathbb{E}\sum_{k,l}B\left(u_{k,t}^{\prime}\hat{v}_{k},u_{l,t}^{\prime}\hat{v}_{l}\right)\\
 & =B\left(\mathbb{E}u,\mathbb{E}u\right)+\sum_{k,l}\mathbb{E}\left[u_{k,t}^{\prime}u_{l,t}^{\prime}\right]B\left(\hat{v}_{k},\hat{v}_{l}\right).
\end{align*}
Similarly, by taking $\varphi=\left(\hat{v}_{k}\cdot u_{t}^{\prime}\right)\left(u_{t}^{\prime}\cdot\hat{v}_{l}\right)=u_{t}^{\prime\intercal}Au_{t}^{\prime}$
with $A=\hat{v}_{l}\hat{v}_{k}^{\intercal}$, we find 
\begin{align}
\frac{\mathrm{d}}{\mathrm{d}t}\mathbb{E}\left[u_{k,t}^{\prime}u_{l,t}^{\prime}\right]= & \mathbb{E}\left[\left(\Lambda u_{t}+B\left(u_{t},u_{t}\right)+F_{t}\right)\cdot\left(A+A^{\intercal}\right)u_{t}^{\prime}\right]+\frac{1}{2}\sigma_{t}\sigma_{t}^{\intercal}:\left(A+A^{\intercal}\right)\nonumber \\
= & \mathbb{E}\left[\left(u_{t}^{\prime}\cdot\hat{v}_{l}\right)\left(\hat{v}_{k}^{\intercal}\Lambda u_{t}^{\prime}\right)+\left(u_{t}^{\prime\intercal}\Lambda^{\intercal}\hat{v}_{l}\right)\left(\hat{v}_{k}\cdot u_{t}^{\prime}\right)\right]\nonumber \\
 & +\mathbb{E}\left[\hat{v}_{l}\cdot B\left(\mathbb{E}u,u^{\prime}\right)u_{k,t}^{\prime}+\hat{v}_{k}\cdot B\left(\mathbb{E}u,u^{\prime}\right)u_{l,t}^{\prime}\right]+\mathbb{E}\left[\hat{v}_{l}\cdot B\left(u^{\prime},\mathbb{E}u\right)u_{k,t}^{\prime}+\hat{v}_{k}\cdot B\left(u^{\prime},\mathbb{E}u\right)u_{l,t}^{\prime}\right]\nonumber \\
 & +\mathbb{E}\left[B\left(u_{t}^{\prime},u_{t}^{\prime}\right)\cdot\left(\hat{v}_{l}u_{k,t}^{\prime}+\hat{v}_{k}u_{l,t}^{\prime}\right)\right]+\left(\sigma_{t}\cdot\hat{v}_{k}\right)\left(\hat{v}_{l}\cdot\sigma_{t}\right)\nonumber \\
= & \sum_{m}\left(\hat{v}_{k}^{\intercal}\Lambda\hat{v}_{m}\right)\mathbb{E}\left[u_{m,t}^{\prime}u_{l,t}^{\prime}\right]+\mathbb{E}\left[u_{k,t}^{\prime}u_{m,t}^{\prime}\right]\left(\hat{v}_{l}^{\intercal}\Lambda^{\intercal}\hat{v}_{m}\right)+\left(\sigma_{t}\cdot\hat{v}_{k}\right)\left(\hat{v}_{l}\cdot\sigma_{t}\right)\label{eq:cov_t}\\
 & +\sum_{m}\left[\hat{v}_{k}^{\intercal}B\left(\mathbb{E}u,\hat{v}_{m}\right)\mathbb{E}\left[u_{m,t}^{\prime}u_{l,t}^{\prime}\right]+\mathbb{E}\left[u_{k,t}^{\prime}u_{m,t}^{\prime}\right]\hat{v}_{l}^{\intercal}B\left(\mathbb{E}u,\hat{v}_{m}\right)\right]\nonumber \\
 & +\sum_{m}\left[\hat{v}_{k}^{\intercal}B\left(\hat{v}_{m},\mathbb{E}u\right)\mathbb{E}\left[u_{m,t}^{\prime}u_{l,t}^{\prime}\right]+\mathbb{E}\left[u_{k,t}^{\prime}u_{m,t}^{\prime}\right]\hat{v}_{l}^{\intercal}B\left(\hat{v}_{m},\mathbb{E}u\right)\right]\nonumber \\
 & +\sum_{m,n}\left[\hat{v}_{k}^{\intercal}B\left(\hat{v}_{m},\hat{v}_{n}\right)\mathbb{E}\left[u_{l,t}^{\prime}u_{m,t}^{\prime}u_{n,t}^{\prime}\right]+\hat{v}_{l}^{\intercal}B\left(\hat{v}_{m},\hat{v}_{n}\right)\mathbb{E}\left[u_{k,t}^{\prime}u_{m,t}^{\prime}u_{n,t}^{\prime}\right]\right].\nonumber 
\end{align}
Above in the second equality, we use the projection of modes $\left(A+A^{\intercal}\right)u_{t}^{\prime}=\left(\hat{v}_{l}\hat{v}_{k}^{\intercal}+\hat{v}_{k}\hat{v}_{l}^{\intercal}\right)u_{t}^{\prime}=\hat{v}_{l}u_{k,t}^{\prime}+\hat{v}_{k}u_{k,t}^{\prime}$,
and the bilinearity of the quadratic operator $B$; and in the third
equality, we use the decomposition of fluctuation modes, $u_{t}^{\prime}=\sum_{k}u_{k,t}^{\prime}\hat{v}_{k}$.
We find the coupling operator $L_{km}=\left(\hat{v}_{k}^{\intercal}\Lambda\hat{v}_{m}\right)+\hat{v}_{k}^{\intercal}B\left(\hat{v}_{m},\mathbb{E}u\right)+\hat{v}_{k}^{\intercal}B\left(\hat{v}_{m},\mathbb{E}u\right)$,
the third-order coupling coefficients $\gamma_{mnk}=\hat{v}_{k}^{\intercal}B\left(\hat{v}_{m},\hat{v}_{n}\right)$,
as well as the noise term $\left(\sigma_{t}\cdot\hat{v}_{k}\right)\left(\hat{v}_{l}\cdot\sigma_{t}\right)$.

In addition, by subtracting the mean equation \eqref{eq:mean_t} from
the original system \eqref{eq:abs_formu}, we find the SDE for the
stochastic state
\begin{align*}
\frac{\mathrm{d}u_{t}^{\prime}}{\mathrm{d}t}=\frac{\mathrm{d}}{\mathrm{d}t}\left(u_{t}-\mathbb{E}u_{t}\right)= & \Lambda u_{t}+B\left(u_{t},u_{t}\right)+\sigma_{t}\dot{W}_{t}\\
 & -\Lambda\mathbb{E}u_{t}-B\left(\mathbb{E}u,\mathbb{E}u\right)-\sum_{m,n}\mathbb{E}\left[u_{m,t}^{\prime}u_{n,t}^{\prime}\right]B\left(\hat{v}_{m},\hat{v}_{n}\right)\\
= & \sum_{m}u_{m,t}^{\prime}\left[\Lambda\hat{v}_{m}+B\left(\mathbb{E}u,\hat{v}_{m}\right)+B\left(\hat{v}_{m},\mathbb{E}u\right)\right]\\
 & +\sum_{m,n}\left[u_{m,t}^{\prime}u_{n,t}^{\prime}-\mathbb{E}\left[u_{m,t}^{\prime}u_{n,t}^{\prime}\right]\right]B\left(\hat{v}_{m},\hat{v}_{n}\right)+\sigma_{t}\dot{W}_{t}.
\end{align*}
Again in the second equality, we use the spectral decomposition of
the fluctuation state $u_{k,t}^{\prime}=u_{t}^{\prime}\cdot\hat{v}_{k}$.
By projecting the state $u_{t}^{\prime}$ on the basis $\hat{v}_{k}$,
we have
\begin{equation}
\frac{\mathrm{d}u_{k,t}^{\prime}}{\mathrm{d}t}=\sum_{k}L_{km}\left(\mathbb{E}u_{t}\right)u_{m,t}^{\prime}+\sum_{m,n}\gamma_{mnk}\left[u_{m,t}^{\prime}u_{n,t}^{\prime}-\mathbb{E}\left[u_{m,t}^{\prime}u_{n,t}^{\prime}\right]\right]+\hat{v}_{k}^{\intercal}\sigma_{t}\dot{W}_{t},\label{eq:sde_t}
\end{equation}
with the same parameters $L_{km}$ and $\gamma_{mnk}$ defined before.
The generator $\mathcal{L}_{t}^{u}$ of $u_{t}^{\prime}$ can be written
as 
\[
\mathcal{L}_{t}^{u}\left(p_{t}\right)=\left[\sum_{k}L_{km}\left(\mathbb{E}u_{t}\right)u_{m,t}^{\prime}+\sum_{m,n}\gamma_{mnk}\left(u_{m,t}^{\prime}u_{n,t}^{\prime}-\mathbb{E}\left[u_{m,t}^{\prime}u_{n,t}^{\prime}\right]\right)\right]\cdot\nabla_{u^{\prime}}+\frac{1}{2}\Sigma_{t}\Sigma_{t}^{\intercal}:\nabla_{u^{\prime}}\nabla_{u^{\prime}},
\]
where $\mathcal{L}_{t}^{u}$ is dependent on $p_{t}$ in computing
the expectations. 

Next, we consider the closure model \eqref{eq:closure_model} without
the relaxation term
\begin{equation}
\begin{aligned}\frac{\mathrm{d}\bar{u}_{t}}{\mathrm{d}t}=\Lambda\bar{u}_{t}+B\left(\bar{u}_{t},\bar{u}_{t}\right) & +Q_{m}\left(\mathbb{E}\left[Z_{t}\otimes Z_{t}\right]\right)+F_{t},\\
\frac{\mathrm{d}R_{t}}{\mathrm{d}t}=L\left(\bar{u}_{t}\right)R_{t}+R_{t}L\left(\bar{u}_{t}\right) & +Q_{v}\left(\mathbb{E}\left[Z_{t}\otimes Z_{t}\otimes Z_{t}\right]\right)+\Sigma_{t}\Sigma_{t}^{\intercal},\\
\frac{\mathrm{d}}{\mathrm{d}t}\mathbb{E}\left[\varphi\left(Z_{t}\right)\right]= & \:\mathbb{E}\left[\mathcal{L}_{t}\left(\bar{u}_{t},R_{t}\right)\varphi\left(Z_{t}\right)\right].
\end{aligned}
\label{eq:model}
\end{equation}
where $\mathcal{L}_{t}$ is the generator from \eqref{eq:dyn_pdf}
defined from the McKean-Vlasov SDE of $Z_{t}$
\[
\mathcal{L}_{t}\left(\bar{u}_{t},R_{t}\right)=\left[L\left(\bar{u}_{t}\right)Z_{t}+\varGamma\left(Z_{t}Z_{t}^{\intercal}-R_{t}\right)\right]\cdot\nabla_{z}+\frac{1}{2}\Sigma_{t}\Sigma_{t}^{\intercal}:\nabla_{z}\nabla_{z}.
\]
The closure terms $Q_{m}$ and $Q_{v}$ in \eqref{eq:stat_feedbacks}
have exactly the same structure as that in the original system derived
in \eqref{eq:mean_t} and \eqref{eq:cov_t}. In addition, by comparing
the above SDEs \eqref{eq:sde_t} and \eqref{eq:dyn_stoc}, it is realized
that their generators, $\mathcal{L}_{t}^{u}\left(p_{t}\right)$ and
$\mathcal{L}_{t}\left(\bar{u}_{t},R_{t}\right)$, share the same dynamical
structure with the dependence on the first two moments w.r.t. $p_{t}$
and the statistical solutions $\bar{u}_{t},R_{t}$ in \eqref{eq:model}.
Therefore, at any time instant $t$ if we assume consistent statistics
\[
\mathbb{E}_{p_{t}}\left[u_{t}\right]=\bar{u}_{t},\quad\mathbb{E}_{p_{t}}\left[\left(u_{t}^{\prime}\cdot\hat{v}_{k}\right)\left(u_{t}^{\prime}\cdot\hat{v}_{l}\right)\right]=R_{kl,t},
\]
as well as 
\[
\mathbb{E}_{p_{t}}\left[\varphi\left(u_{t}^{\prime}\right)\right]=\mathbb{E}\left[\varphi\left(\sum_{k=1}^{d}Z_{k,t}\hat{v}_{k}\right)\right],
\]
the right hand sides of the original model \eqref{eq:mean_t},  \eqref{eq:cov_t},
and \eqref{eq:sde_t} become the same as that of the closure model
\eqref{eq:model}. Starting from the same initial condition with uniqueness
of the solution, it directly implies that the statistical solutions
of the two systems \eqref{eq:abs_formu} and \eqref{eq:model} will
remain the same during the entire time evolution.
\end{proof}
\
\begin{proof}
[Proof of Lemma \ref{lem:analysis}]We rewrite the filter model \eqref{eq:mf-fpf-quad}
for $\tilde{Z}_{t}$ by substituting the explicit equation for the
observation process, $\mathrm{d}y_{t}=\left[\mathcal{H}\rho_{t}+h_{t}\left(y_{t}\right)\right]\mathrm{d}t+\Gamma_{t}\mathrm{d}B_{t}$,
in \eqref{eq:dyn_obs}
\[
\mathrm{d}\tilde{Z}_{t}=a_{t}\left(\tilde{Z}_{t}\right)\mathrm{d}t+K_{t}\left(\tilde{Z}_{t}\right)\left\{ \left[\mathcal{H}\rho_{t}-H\left(\tilde{Z}_{t}\right)\right]\mathrm{d}t+\Gamma_{t}\mathrm{d}B_{t}-\Gamma_{t}\mathrm{d}\tilde{B}_{t}\right\} .
\]
By applying It\^{o}'s formula on the above SDE, we have for $\varphi\in C_{b}^{2}\left(\mathbb{R}^{d}\right)$
\begin{align}
\mathrm{d}\varphi\left(\tilde{Z}_{t}\right)= & \nabla\varphi\cdot\left[\left(a_{t}-K_{t}\left(H\left(\tilde{Z}_{t}\right)+h_{t}\left(y_{t}\right)\right)\right)\right]\mathrm{d}t\nonumber \\
 & -\nabla\varphi\cdot K_{t}\Gamma_{t}\mathrm{d}\tilde{B}_{t}+\nabla\varphi\cdot K_{t}\mathrm{d}y_{t}+K_{t}\Gamma_{t}^{2}K_{t}^{\intercal}:\nabla\nabla\varphi\mathrm{d}t,\label{eq:ito_phi}
\end{align}
where we define $A:\nabla\nabla\varphi=\sum_{i,j=1}^{d}A_{ij}\partial_{z_{i}z_{j}}\varphi$
and take the convention $\left(\nabla f\right)_{ij}=\partial_{z_{i}}f_{j}$
for the gradient of vector-valued functions $f\in C^{1}\left(\mathbb{R}^{d};\mathbb{R}^{p}\right)$.
Notice that above the coefficient in the last term is $1$ considering
the additional contributions from the independent white noise process
$\Gamma_{t}\mathrm{d}B_{t}=\mathrm{d}y_{t}-\left[\mathcal{H}\rho_{t}+h_{t}\left(y_{t}\right)\right]\mathrm{d}t$
in the observation process besides the original $\mathrm{d}\tilde{B}_{t}$,
that is, 
\[
\frac{1}{2}\nabla\nabla\varphi:\mathrm{d}\left\langle K\Gamma\tilde{B},K\Gamma\tilde{B}\right\rangle _{t}+\frac{1}{2}\nabla\nabla\varphi:\mathrm{d}\left\langle K\Gamma B,K\Gamma B\right\rangle _{t}=\nabla\nabla\varphi:K_{t}\Gamma_{t}^{2}K_{t}^{\intercal}\mathrm{d}t,
\]
where we denote $\left\langle M,N\right\rangle _{t}$ as the Meyer's
process of two martingales $M_{t}$ and $N_{t}$.

First, by taking $\varphi\left(z\right)=H\left(z\right)$ and taking
expectation $\tilde{\mathbb{E}}$ w.r.t. $\tilde{\rho}_{t}$ conditional
on $Y_{t}=\left\{ y_{s},s\leq t\right\} \in\mathcal{G}_{t}$, we have
\begin{align}
\mathrm{d}\tilde{\mathbb{E}}H\left(\tilde{Z}_{t}\right)= & \tilde{\mathbb{E}}\left[\nabla H\left(\tilde{Z}_{t}\right)^{\intercal}a_{t}\right]\mathrm{d}t-\tilde{\mathbb{E}}\left[\nabla H\left(\tilde{Z}_{t}\right)^{\intercal}K_{t}\left(\bar{H}_{t}+H_{t}^{\prime}+h_{t}\left(y_{t}\right)\right)\right]\mathrm{d}t\nonumber \\
 & +\tilde{\mathbb{E}}\left[\nabla H\left(\tilde{Z}_{t}\right)^{\intercal}K_{t}\right]\mathrm{d}y_{t}+\tilde{\mathbb{E}}\left[K_{t}\Gamma_{t}^{2}K_{t}^{\intercal}:\nabla\nabla H\left(\tilde{Z}_{t}\right)\right]\mathrm{d}t\nonumber \\
= & \tilde{\mathbb{E}}\left[\nabla H\left(\tilde{Z}_{t}\right)^{\intercal}a_{t}\right]\mathrm{d}t+\tilde{\mathbb{E}}\left[K_{t}\Gamma_{t}^{2}K_{t}^{\intercal}:\nabla\nabla H\left(\tilde{Z}_{t}\right)\right]\mathrm{d}t-\tilde{\mathbb{E}}\left[\nabla H\left(\tilde{Z}_{t}\right)^{\intercal}K_{t}H_{t}^{\prime}\right]\mathrm{d}t\label{eq:mean_h}\\
 & +\tilde{\mathbb{E}}\left[\nabla H\left(\tilde{Z}_{t}\right)^{\intercal}K_{t}\right]\left\{ \mathrm{d}y_{t}-\left[\bar{H}_{t}+h_{t}\left(y_{t}\right)\right]\mathrm{d}t\right\} .\nonumber 
\end{align}
In the first line above, we split $H\left(\tilde{Z}_{t}\right)=\bar{H}_{t}+H_{t}^{\prime}$.
Notice that the observation process $y_{t}\in\mathcal{G}_{t}$ can
be brought out of the expectation $\tilde{\mathbb{E}}\left[\cdot\right]=\mathbb{E}\left[\cdot\mid\mathcal{G}_{t}\right]$.
Using the first identity in \eqref{eq:coeff_sde} for $a_{t}$, there
is
\begin{align*}
\tilde{\mathbb{E}}\left[\nabla H\left(\tilde{Z}_{t}\right)^{\intercal}a_{t}\right]= & \int\nabla H\left(z\right)^{\intercal}\left[\nabla\cdot\left(K_{t}\Gamma_{t}^{2}K_{t}^{\intercal}\right)-K_{t}\Gamma_{t}^{2}\nabla\cdot K_{t}^{\intercal}\right]\tilde{\rho}_{t}\left(z\right)\mathrm{d}z\\
= & \int\nabla H\left(z\right)^{\intercal}\left[\nabla\cdot\left(\tilde{\rho}_{t}K_{t}\Gamma_{t}^{2}K_{t}^{\intercal}\right)-K_{t}\Gamma_{t}^{2}K_{t}^{\intercal}\nabla\tilde{\rho}_{t}-K_{t}\Gamma_{t}^{2}\nabla\cdot K_{t}^{\intercal}\tilde{\rho}_{t}\right]\mathrm{d}z\\
= & -\int\nabla\nabla H\left(z\right):\left(K_{t}\Gamma_{t}^{2}K_{t}^{\intercal}\right)\tilde{\rho}_{t}\mathrm{d}z-\int\nabla H\left(z\right)^{\intercal}K_{t}\Gamma_{t}^{2}\nabla\cdot\left(\tilde{\rho}_{t}K_{t}^{\intercal}\right)\mathrm{d}z\\
= & -\tilde{\mathbb{E}}\left[\left(K_{t}\Gamma_{t}^{2}K_{t}^{\intercal}\right):\nabla\nabla H\left(\tilde{Z}_{t}\right)\right]-\int\nabla H\left(z\right)^{\intercal}K_{t}\Gamma_{t}^{2}\nabla\cdot\left(\tilde{\rho}_{t}K_{t}^{\intercal}\right)\mathrm{d}z.
\end{align*}
Then using the second identity in \eqref{eq:coeff_sde} for $K_{t}$
and denoting $H_{t}^{\prime}=H-\bar{H}_{t}$, the last term above
gets simplified to
\[
-\int\nabla H{}^{\intercal}K_{t}\Gamma_{t}^{2}\nabla\cdot\left(\tilde{\rho}_{t}K_{t}^{\intercal}\right)\mathrm{d}z=\int\nabla H_{t}^{\prime\intercal}K_{t}\tilde{\rho}_{t}H_{t}^{\prime}\mathrm{d}z=\tilde{\mathbb{E}}\left[\nabla H_{t}^{\prime\intercal}K_{t}H_{t}^{\prime}\right].
\]
With the above identities, first line of \eqref{eq:mean_h} becomes
zero. Further with the second identity in \eqref{eq:coeff_sde} for
$K_{t}$, there is
\[
\tilde{\mathbb{E}}\left(K_{t}^{\intercal}\nabla\psi\right)=\Gamma_{t}^{-2}\tilde{\mathbb{E}}\left[\left(H\left(\tilde{Z}_{t}\right)-\mathbb{E}H\right)\psi\left(\tilde{Z}_{t}\right)^{\intercal}\right],
\]
for any regular function $\psi$ with $\mathbb{E}\psi=0$. By taking
$\psi=H-\bar{H}_{t}$, there is 
\[
\tilde{\mathbb{E}}\left[\nabla H\left(\tilde{Z}_{t}\right)^{\intercal}K_{t}\right]=\tilde{\mathbb{E}}\left[\left(H\left(\tilde{Z}_{t}\right)-\bar{H}_{t}\right)\left(H\left(\tilde{Z}_{t}\right)-\bar{H}_{t}\right)^{\intercal}\right]\Gamma^{-2}=C_{t}^{H}\Gamma_{t}^{-2}.
\]
This gives the equation for $\bar{H}_{t}=\tilde{\mathbb{E}}H\left(\tilde{Z}_{t}\right)$. 

Next, we take $\varphi\left(z\right)=H_{k}\left(z\right)H_{l}\left(z\right)$.
For the convenience of computation, we separate the mean state $\bar{H}_{t}$
as
\[
\varphi\left(z\right)=\left[\bar{H}_{k,t}+H_{k}^{\prime}\left(z\right)\right]\left[\bar{H}_{l,t}+H_{l}^{\prime}\left(z\right)\right]=H_{k,t}^{\prime}\left(z\right)H_{l,t}^{\prime}\left(z\right)+\bar{H}_{k,t}H_{l}^{\prime}\left(z\right)+H_{k}^{\prime}\left(z\right)\bar{H}_{l,t}+\bar{H}_{k,t}\bar{H}_{l,t}.
\]
The last term above is independent of $z$, thus will vanish after
applying It\^{o}'s formula \eqref{eq:ito_phi}. We have for the first
term on the right hand side
\begin{align}
\mathrm{d}\tilde{\mathbb{E}}\left[H_{k,t}^{\prime}\left(\tilde{Z}_{t}\right)H_{l,t}^{\prime}\left(\tilde{Z}_{t}\right)\right] & =\tilde{\mathbb{E}}\nabla\left(H_{k,t}^{\prime}H_{l,t}^{\prime}\right)^{\intercal}\left[\left(a_{t}-K_{t}\left(\bar{H}_{t}+H_{t}^{\prime}+h_{t}\left(y_{t}\right)\right)\right)\right]\mathrm{d}t\nonumber \\
 & +\tilde{\mathbb{E}}\left[\nabla\left(H_{k,t}^{\prime}H_{l,t}^{\prime}\right)^{\intercal}K_{t}\right]\mathrm{d}y_{t}+\tilde{\mathbb{E}}\left[K_{t}\Gamma_{t}^{2}K_{t}^{\intercal}:\nabla\nabla\left(H_{k,t}^{\prime}H_{l,t}^{\prime}\right)\right]\mathrm{d}t\nonumber \\
 & =\tilde{\mathbb{E}}\left[\nabla\left(H_{k}^{\prime}H_{l}^{\prime}\right)^{\intercal}a_{t}\right]\mathrm{d}t+\tilde{\mathbb{E}}\left[K_{t}\Gamma_{t}^{2}K_{t}^{\intercal}:\nabla\nabla\left(H_{k,t}^{\prime}H_{l,t}^{\prime}\right)\right]\mathrm{d}t\label{eq:cov_h}\\
 & +\tilde{\mathbb{E}}\left[\nabla\left(H_{k,t}^{\prime}H_{l,t}^{\prime}\right)^{\intercal}K_{t}\right]\left\{ \mathrm{d}y_{t}-\left[\bar{H}_{t}+h_{t}\left(y_{t}\right)\right]\mathrm{d}t\right\} -\tilde{\mathbb{E}}\left[\nabla\left(H_{k,t}^{\prime}H_{l,t}^{\prime}\right)^{\intercal}K_{t}H_{t}^{\prime}\right].\nonumber 
\end{align}
Using the identifies \eqref{eq:coeff_sde} for $a_{t}$ and $K_{t}$,
again we can find
\begin{align*}
\tilde{\mathbb{E}}\left[\nabla\left(H_{k,t}^{\prime}H_{l,t}^{\prime}\right)^{\intercal}a_{t}\right] & =-\tilde{\mathbb{E}}\left[K_{t}\Gamma_{t}^{2}K_{t}^{\intercal}:\nabla\nabla\left(H_{k,t}^{\prime}H_{l,t}^{\prime}\right)\right]-\int\nabla\left(H_{k,t}^{\prime}H_{l,t}^{\prime}\right)^{\intercal}K_{t}\Gamma_{t}^{2}\nabla\cdot\left(\tilde{\rho}_{t}K_{t}^{\intercal}\right)\mathrm{d}z\\
 & =-\tilde{\mathbb{E}}\left[K_{t}\Gamma_{t}^{2}K_{t}^{\intercal}:\nabla\nabla\left(H_{k,t}^{\prime}H_{l,t}^{\prime}\right)\right]+\tilde{\mathbb{E}}\left[\nabla\left(H_{k,t}^{\prime}H_{l,t}^{\prime}\right)^{\intercal}K_{t}H_{t}^{\prime}\right].
\end{align*}
Therefore, we have $\mathrm{d}\tilde{\mathbb{E}}\left[H_{k,t}^{\prime}H_{l,t}^{\prime}\right]=\tilde{\mathbb{E}}\left[\nabla\left(H_{k,t}^{\prime}H_{l,t}^{\prime}\right)^{\intercal}K_{t}\right]\left\{ \mathrm{d}y_{t}-\left[\bar{H}_{t}+h_{t}\left(y_{t}\right)\right]\mathrm{d}t\right\} $.
Further, using the identity for $K_{t}$, the coefficient becomes
third moments of $H_{t}^{\prime}$
\begin{align*}
\tilde{\mathbb{E}}\left[\nabla\left(H_{k,t}^{\prime}H_{l,t}^{\prime}\right)^{\intercal}K_{t}\right] & =\int\nabla\left(H_{k,t}^{\prime}H_{l,t}^{\prime}\right)^{\intercal}K_{t}\tilde{\rho}_{t}\mathrm{d}z\\
 & =-\int\left(H_{k,t}^{\prime}H_{l,t}^{\prime}\right)\left[\nabla\cdot\left(\tilde{\rho}_{t}K_{t}^{\intercal}\right)\right]^{\intercal}\mathrm{d}z\\
 & =\int\left(H_{k,t}^{\prime}H_{l,t}^{\prime}\right)\left[\tilde{\rho}_{t}\Gamma_{t}^{-2}\left(H\left(z\right)-\bar{H}_{t}\right)\right]^{\intercal}\mathrm{d}z\\
 & =\tilde{\mathbb{E}}\left[H_{k,t}^{\prime}H_{l,t}^{\prime}H^{\prime\intercal}\right]\Gamma_{t}^{-2}.
\end{align*}
Similarly, by repeating the same procedure for $\bar{H}_{k,t}H_{l}^{\prime}\left(z\right)$,
we have
\begin{align*}
\mathrm{d}\tilde{\mathbb{E}}\left[\bar{H}_{k,t}H_{l}^{\prime}\left(\tilde{Z}_{t}\right)\right] & =\tilde{\mathbb{E}}\left[\nabla\left(\bar{H}_{k,t}H_{l,t}^{\prime}\right)^{\intercal}K_{t}\right]\left\{ \mathrm{d}y_{t}-\left[\bar{H}_{t}+h_{t}\left(y_{t}\right)\right]\mathrm{d}t\right\} \\
 & =\tilde{\mathbb{E}}\left[H_{l,t}^{\prime}H_{t}^{\prime\intercal}\right]\Gamma_{t}^{-2}\left[\mathrm{d}y_{t}-\left(\bar{H}_{t}+h_{t}\right)\mathrm{d}t\right]\bar{H}_{k,t}.
\end{align*}
And similar result can be achieved for $\tilde{\mathbb{E}}\left[H_{k}^{\prime}\left(\tilde{Z}_{t}\right)\bar{H}_{l,t}\right]$.

Finally, applying It\^{o}'s formula for $\bar{H}_{t}\bar{H}_{t}^{\intercal}$
where $\mathrm{d}\bar{H}_{t}=C_{t}^{H}\Gamma_{t}^{-2}\left(\mathcal{H}\rho_{t}-\bar{H}_{t}\right)\mathrm{d}t+C_{t}^{H}\Gamma_{t}^{-1}\mathrm{d}B_{t}$
as we have derived, there is,
\begin{align*}
\mathrm{d}\left(\bar{H}_{t}\bar{H}_{t}^{\intercal}\right) & =\left(\mathrm{d}\bar{H}_{t}\right)\bar{H}_{t}^{\intercal}+\bar{H}_{t}\left(\mathrm{d}\bar{H}_{t}^{\intercal}\right)+\mathrm{d}\left\langle C^{H}\Gamma^{-1}B,C^{H}\Gamma^{-1}B\right\rangle _{t}\\
 & =C_{t}^{H}\Gamma_{t}^{-2}\left[\mathrm{d}y_{t}-\left(\bar{H}_{t}+h_{t}\right)\mathrm{d}t\right]\bar{H}_{t}^{\intercal}\\
 & +\bar{H}_{t}\left[\mathrm{d}y_{t}^{\intercal}-\left(\bar{H}_{t}+h_{t}\right)^{\intercal}\mathrm{d}t\right]\Gamma_{t}^{-2}C_{t}^{H}\\
 & +C_{t}^{H}\Gamma_{t}^{-2}C_{t}^{H}\mathrm{d}t.
\end{align*}
Notice again that the white noise process, $C_{t}^{H}\Gamma_{t}^{-1}\mathrm{d}B_{t}$,
gives the last term in the first equality above. Putting all the above
equations together, we get the equation for $\mathrm{d}C_{t}^{H}=\mathrm{d}\tilde{\mathbb{E}}\left[H\left(\tilde{Z}_{t}\right)H\left(\tilde{Z}_{t}\right)^{\intercal}\right]-\mathrm{d}\left(\bar{H}_{t}\bar{H}_{t}^{\intercal}\right)$
where $C_{kl,t}^{H}=\tilde{\mathbb{E}}\left[H_{k,t}^{\prime}H_{l,t}^{\prime}\right]$.
\end{proof}
\
\begin{proof}
[Proof of Proposition \ref{prop:Kalman_gain}]According to \eqref{eq:kalman_gain}
with $K=\tilde{K}\Gamma^{-2}$, we need to show
\[
-\nabla\cdot\left(\tilde{K}^{\intercal}\tilde{\rho}\right)H^{\intercal}=\tilde{\rho}H^{\prime}H^{\intercal}\:\Rightarrow\:\tilde{\mathbb{E}}\left[\tilde{K}^{\intercal}\nabla H\right]=\tilde{\mathbb{E}}\left[H^{\prime}\left(\bar{H}+H^{\prime}\right)^{\intercal}\right]=C^{H},
\]
according to the specific expressions $H=H^{m}$ and $H=H^{v}$. First,
we can compute
\[
\nabla_{z}H_{l}^{m}=2A_{l}z,\quad\nabla_{z}H_{pq}^{v}=2z_{q}A_{p}z+\left(z^{\intercal}A_{p}z\right)\delta_{qj}\hat{e}_{j}.
\]
Above in $H^{v}$ for simplicity, we only compute half of the symmetric
function and $\hat{e}_{j}$ is the unit vector with value $1$ in
the $j$-th entry.

From direct computations for $H^{m}$ and using $H_{k}^{m}=z^{\intercal}A_{k}z$,
we have
\begin{align*}
\sum_{j}\tilde{K}_{j,k}^{m}\frac{\partial H_{l}^{m}}{\partial z_{j}} & =\frac{1}{2}\left[\left(z^{\intercal}A_{k}z\right)-\bar{H}_{k}^{m}\right]\sum_{j}z_{j}2\left(A_{l}z\right)_{j}\\
 & =\left[\left(z^{\intercal}A_{k}z\right)-\bar{H}_{k}^{m}\right]\left(z^{\intercal}A_{l}z\right)=H_{k}^{m\prime}H_{l}^{m}.
\end{align*}
Similarly for $H^{v}$, we can compute
\begin{align*}
\sum_{j}\tilde{K}_{j,kl}^{v}\frac{\partial H_{pq}^{v}}{\partial z_{j}} & =\frac{1}{3}\left[\left(z^{\intercal}A_{k}z\right)z_{l}-\bar{H}_{kl}^{v}\right]\sum_{j}z_{j}\left[2z_{q}\left(A_{p}z\right)_{j}+\left(z^{\intercal}A_{p}z\right)\delta_{qj}\right]\\
 & =\frac{2}{3}\left[\left(z^{\intercal}A_{k}z\right)z_{l}-\bar{H}_{kl}^{v}\right]\left(z^{\intercal}A_{p}z\right)z_{q}+\frac{1}{3}\left[\left(z^{\intercal}A_{k}z\right)z_{l}-\bar{H}_{kl}^{v}\right]z_{q}\left(z^{\intercal}A_{p}z\right)\\
 & =\left[\left(z^{\intercal}A_{k}z\right)z_{l}-\bar{H}_{kl}^{v}\right]\left(z^{\intercal}A_{p}z\right)z_{q}=H_{kl}^{v\prime}H_{pq}^{v}.
\end{align*}
This finishes the proof.
\end{proof}
\
\begin{proof}
[Proof of Proposition \ref{thm:conv_stat}]First, consider the mean
equations from the same initial state, we have
\begin{align*}
\bar{u}_{t}^{N,\tau}-\bar{u}_{t}= & \int_{0}^{t}\left[M\left(\bar{u}_{\sigma\left(s\right)}^{N,\tau}\right)-M\left(\bar{u}_{s}\right)\right]\mathrm{d}s+\int_{0}^{t}\left[F_{\sigma\left(s\right)}-F_{s}\right]\mathrm{d}s\\
 & +\int_{0}^{t}\left[\left\langle H^{m},\tilde{\rho}_{\sigma\left(s\right)}^{N}\right\rangle -\left\langle H^{m},\tilde{\rho}_{s}\right\rangle \right]\mathrm{d}s.
\end{align*}
Therefore, using H\"{o}lder's inequality and Lipschitz condition
for $M$ there is
\begin{align}
\mathbb{E}\sup_{t\leq T}\left|\bar{u}_{t}^{N,\tau}-\bar{u}_{t}\right|^{2} & \leq3T\beta\mathbb{E}\int_{0}^{T}\left|\bar{u}_{\sigma\left(s\right)}^{N,\tau}-\bar{u}_{s}\right|^{2}\mathrm{d}s\nonumber \\
 & +3T\mathbb{E}\int_{0}^{T}\left|\left\langle H^{m},\tilde{\rho}_{\sigma\left(s\right)}^{N}\right\rangle -\left\langle H^{m},\tilde{\rho}_{s}\right\rangle \right|^{2}\mathrm{d}s+C_{T}\tau\nonumber \\
 & \leq C_{1}\int_{0}^{T}\mathbb{E}\sup_{s^{\prime}\leq s}\left|\bar{u}_{s^{\prime}}^{N,\tau}-\bar{u}_{s^{\prime}}\right|^{2}\mathrm{d}s+C_{2}\tau\left\Vert H^{m}\right\Vert _{\infty}^{2}+\frac{C_{3}}{N}\left\Vert H_{m}\right\Vert _{\infty}^{2}.\label{eq:est_m-1}
\end{align}
In the first term above on the right hand side, we follow the same
procedure to estimate the error in the corresponding continuous solution
$\bar{u}_{s}^{N,\tau}$ compared to the time discretization solution
$\bar{u}_{\sigma\left(s\right)}^{N,\tau}$
\[
\left|\bar{u}_{\sigma\left(s\right)}^{N,\tau}-\bar{u}_{s}\right|^{2}\leq\left|\bar{u}_{\sigma\left(s\right)}^{N,\tau}-\bar{u}_{s}^{N,\tau}\right|^{2}+\left|\bar{u}_{s}^{N,\tau}-\bar{u}_{s}\right|^{2}\leq C\tau\left\Vert H^{m}\right\Vert _{\infty}^{2}+\left|\bar{u}_{s}^{N,\tau}-\bar{u}_{s}\right|^{2}.
\]
And in the second line for the term related to $H^{m}$, combining
the discrete time estimate with \eqref{eq:conv_pdf} gives
\[
\mathbb{E}\left[\sup_{t\leq T}\left|\left\langle H^{m},\tilde{\rho}_{\sigma\left(s\right)}^{N}\right\rangle -\left\langle H^{m},\tilde{\rho}_{t}\right\rangle \right|^{2}\right]\leq\left(C\tau^{2}+\frac{C_{T}}{N}\right)\left\Vert H^{m}\right\Vert _{\infty}^{2}.
\]
Therefore, applying Gr\"{o}nwall's inequality to \eqref{eq:est_m-1},
we get the mean estimate in \eqref{eq:stat_bnds}.

Next, under a similar fashion, we can compute from the covariance
equation and using the Lipschitz condition for $L$
\begin{align*}
\mathbb{E}\sup_{t\leq T}\left|R_{t}^{N,\tau}-R_{t}\right|^{2} & \leq C_{1}\beta\mathbb{E}\sup_{t\leq T}\left\Vert \bar{u}_{t}\right\Vert _{\infty}^{2}\int_{0}^{T}\left|R_{\sigma\left(s\right)}^{N,\tau}-R_{s}\right|^{2}\mathrm{d}s\\
 & +C_{2}\beta\mathbb{E}\sup_{t\leq T}\left|\bar{u}_{\sigma\left(t\right)}^{N,\tau}-\bar{u}_{t}\right|^{2}\sup_{t\leq T}\left\Vert R_{\sigma\left(t\right)}\right\Vert ^{2}\\
 & +C_{3}\mathbb{E}\int_{0}^{T}\left|\left\langle H^{v},\tilde{\rho}_{\sigma\left(s\right)}^{N}\right\rangle -\left\langle H^{v},\tilde{\rho}_{s}\right\rangle \right|^{2}\mathrm{d}s+C_{T}\tau.
\end{align*}
Using the uniform boundedness of $\bar{u}_{t},R_{t}$ and \eqref{eq:conv_pdf}
for $H^{v}$ together with the previous estimate for $\mathbb{E}\sup_{t\leq T}\left|\bar{u}_{\sigma\left(t\right)}^{N,\tau}-\bar{u}_{t}\right|^{2}$,
we reach the final covariance estimate in \eqref{eq:stat_bnds}.
\end{proof}

\bibliographystyle{plain}
\bibliography{refs}
\end{document}